\providecommand{\U}[1]{\protect\rule{.1in}{.1in}}
\newtheorem{theorem}{Theorem}[section]
\newenvironment{proof}[1][Proof]{\textbf{#1.} }{\ \rule{0.5em}{0.5em}}
\let\pdfoutput=\undefined\fi
\begin{document}

\author{Alexander Figotin and Guillermo Reyes\\Department of Mathematics\\University of California at Irvine\\Irvine, CA 92697-3875}
\title{Multi-transmission-line-beam interactive system}
\maketitle

\begin{abstract}
We construct here a Lagrangian field formulation for a system consisting of an
electron beam interacting with a\ slow-wave structure modeled by a possibly
non-uniform multiple transmission line (MTL). In the case of \ a single line
we recover the linear model of a traveling wave tube (TWT) due to J.R. Pierce.
Since a properly chosen MTL can approximate a real waveguide structure with
any desired accuracy, the proposed model can be used in particular for design
optimization. Furthermore, the Lagrangian formulation provides for: (i) a
clear identification of the mathematical source of amplification, (ii) exact
expressions for the conserved energy and its flux distributions obtained from
the Noether theorem. In the case of uniform MTLs we carry out an exhaustive
analysis of eigenmodes and find sharp conditions on the parameters of the
system to provide for amplifying regimes.

\end{abstract}

\section{Introduction}

We study here theoretical aspects of generation and amplification of microwave
(millimeter waves) radiation by traveling wave tubes. Generally speaking,
generation and amplification of electromagnetic radiation can be produced by
enormous variety of devices of different designs depending on the frequency of
the radiation and its power. For light such devices are lasers; we remind that
the term laser is an acronym for Light Amplification by Stimulated Emission of
Radiation. For microwaves, which are of our special interest here, there is a
large class of amplifying devices including maser, a predecessor of the laser,
magnetrons, klystrons, traveling wave tubes, crossed-field amplifiers and gyrotrons.

In the case of lasers, as suggested by its very name, the general principle of
the amplification is based on the stimulated emission resulting from certain
atomic transitions. "Lasers come in a great variety of forms, using many
different laser materials, many different atomic systems, and many different
kinds of pumping or excitation techniques. The beams of radiation that lasers
emit or amplify have remarkable properties of directionality, spectral purity,
and intensity.", \cite[p. 2]{Siegman}. An important and defining property of
laser radiation is its coherency, that is its monochromaticity. For
amplification the coherency means that in a narrow frequency band the output
signal, after being amplified, reproduces pretty accurately the shape of the
input signal but with a substantial increase in amplitude. Coherent
amplification combined with a feedback allows to produce highly directional
and highly monochromatic beams. Observe that atomic transitions of the laser
medium constitute a fundamental basis of amplification, that is the
amplification mechanism is fixed by the nature, so to speak. There is extended
literature on the theory of lasers, see for instance \cite[4.1]{Fox},
\cite{Loudon}, \cite{Siegman}. Its basic phenomenological elements include:
(i) Einstein's treatment of the spontaneous and stimulated emission,
\cite[4.1]{Fox}, and (ii) operation principle based on interaction between the
laser (gain) medium and electromagnetic modes of a cavity containing this
medium. More detailed and fundamental theory that can justify the laser
phenomenology involves quantum optics (electronics), \cite{Loudon}, \cite{Fox}.

In the case of microwaves the radiation is produced by microwave vacuum
electronic devices, known formerly as \emph{microwave tubes}. These devices
use free electrons in a vacuum to convert energy from a DC power source to an
RF (radio frequency) signal. In other words, as a result of interaction
between the electron beam and properly designed structure the kinetic energy
of the electrons is converted into electromagnetic energy stored in the field,
\cite{Gilm1}, \cite[2.2]{Nusinovich}, \cite[4]{SchaB}, \cite{Tsimring}.
\emph{The key operational principle of any microwave device is a positive
feedback interaction between coherent radiation by electrons radiating in
phase on one hand and on the other hand electron bunching caused by radiation
on the stream of electrons. The electron bunching associated with acceleration
and deceleration of groups of electrons along the beam constitutes the
physical mechanism of radiation generation and its amplification.}

An important class of microwave devices uses as its operation principle the
\emph{Cherenkov radiation} generated by charged particles propagating in or
near a medium supporting slow waves with phase velocity comparable with the
particle velocity. Traveling wave tubes, the main subject of our studies here,
belongs to this class.

Traveling wave tubes (TWT) are used widely in many areas including satellite
communication and radar systems. Typical TWT consists of an elongated vacuum
tube containing an electron beam which passes down the middle of an RF circuit
(a \emph{slow-wave structure}). The operation principle of a TWT is as
follows. At one end of the TWT structure, the RF circuit is fed with a
low-powered radio signal to be amplified. As the RF signal travels along the
tube at near the same speed as the electron beam, the electromagnetic field
acts upon the beam and causes electron bunching with consequent formation of
the so-called \emph{space-charge wave}. The electromagnetic field associated
with the space-charge wave induces more current back into the RF circuit, thus
enhancing the bunching, and so on. The EM field thus builds up and is
amplified as it passes down the structure until a saturation regime is reached
and a large RF signal is collected at the output. The role of the slow-wave
structure is to slow down the electromagnetic wave to match up with the
velocity of the electrons in the beam, usually a small fraction of the speed
of light. Such a synchronism is required for effective in phase interaction
between the structure and the beam with optimal extraction of the kinetic
energy of the electrons. A typical slow-wave structure is the helix, which
reduces the speed of propagation according to its pitch. Further details on
the design and operation of TWT can be found in \cite{Gilm1}, \cite{PierTWT},
\cite{Tsimring}, \cite[4]{Nusinovich}.

An effective mathematical model for a TWT interacting with an electron beam
was introduced by J. R. Pierce, \cite{PierTWT}, \cite[I]{Pier51}. This model
is the simplest one that accounts for wave amplification along the structure,
energy extraction from the electron beam and its conversion into microwave
radiation in the TWT, see also \cite[4]{SchaB}, \cite{Gilm1}, \cite{Gilm},
\cite{Tsimring} and \cite[4]{Nusinovich}. In Section \ref{SectionPierce}, we
provide for precise description of the model as presented in \cite[I]{Pier51}.
The mentioned presentation is a time domain model, in contrast to other
presentations dealing with the frequency domain counterpart. Though simple,
the Pierce model allows for adequate estimates of the gain and it was used
effectively in designing working TWTs in the fifties. This model captures
remarkably well significant features of wave amplification and the beam-wave
energy transfer, and is still in use for basic design estimates.

The model presented by Pierce is one-dimensional and\ consists of (i) an ideal
linear representation of the electron beam and (ii) a lossless transmission
line (TL) representing the waveguide structure. The transmission line is
assumed to be homogeneous, that is, with uniformly distributed capacitance and
inductance. To overcome the Pierce theory limitations far more sophisticated
nonlinear theories have been developed to model very involved physics of the
electron beam and slow-wave structures, \cite{SchaB}, \cite{Gilm},
\cite{Tsimring}. Needless to say that those theories are far more complex and
often require a massive computer work.

In this paper we advance the Pierce theory to a theory that, while keeping its
simplicity and constructiveness, allows for more complex slow-wave structures.
We start by developing a Lagrangian field framework for the original Pierce
model. Such framework allows for extension of the model in two directions:
\textit{a)} we can replace the transmission line by a multi-transmission line
(MTL) and \textit{b)} we can dispense with the homogeneity assumption, thus
considering general nonhomogeneous systems consisting of a multi-transmission
line (MTL) coupled to an electron beam. We refer to such a system as a MTLB
system. Extension to multiple transmission lines is motivated by the fact that
general MTLs can approximate with desired accuracy real waveguided structures
which can be homogeneous (uniform) as well as inhomogeneous (nonuniform),
\cite{Nitsch}, \cite{Paul}, \cite{SchwiE}.

One of the advantages of the Lagrangian formulation is that conservation laws
and explicit expressions for the conserved quantities and their fluxes can be
obtained at once from the Noether theorem. We would like to point out that
though conservation laws do follow from the Euler-Lagrange evolution equations
there is no systematic way to extract them from those equations. In addition
to that, since all the information about dynamics is encoded in the scalar
Lagrange function we can trace the amplification mechanisms and the properties
of the energy transfer from the electron beam to the microwave radiation to
certain terms in the Lagrangian density.

For homogeneous MTLB systems, we study the amplification phenomenon by
considering the exponentially growing eigenmodes and associated complex-valued
wave numbers for the field equations, just as in the original Pierce theory,
\cite{PierTWT}, \cite{PierW}, \cite{Pier51}. We provide also a rigorous proof
of the fact that, on the growing mode, the energy always flows in the expected
direction, \textit{i.e.} from the beam to the MTL. In this case, the
eigenmodes analysis can be carried out analytically, providing for explicit
expressions for their energy density and energy flux distributions as well as
sufficient conditions for the existence of amplification regimes (growing
modes). The analysis includes derivation of a special canonical form of the
dispersion relation having a remarkable feature: one of its two terms depends
only on the MTL, whereas another one depends only the beam parameters. Such a
special factorization and separation of variables simplifies the analysis significantly.

As to inhomogeneous MTLB systems, they are by far more involved compared to
homogeneous ones. In particular, for periodic MTLB systems the dispersion
relations are not polynomial and that requires to turn to the most general
form of the Floquet theory, \cite[II, III]{YakSta1}. For this general case we
provide the first step towards a systematic study, namely we transform the
Euler-Lagrange field equations\ into the canonical Hamiltonian form using
basics of the de Donder-Weyl theory, \cite[4.2]{Rund}. This particular
Hamiltonian form consists of a system of equations which is of first order in
the spatial variable, thus providing the basis for the most effective use of
the Floquet theory, \cite[II, III]{YakSta1} in the study of periodic
structures. Detailed development of the Floquet theory for periodic MTLB
system requires to overcome a number of technical difficulties and it is left
for future studies.

One of the features of the proposed here phenomenological approach is that it
captures the electron bunching as a physical mechanism of amplification in
some form. Consequently, our analysis is a valuable source of a solid
information on the electron bunching.

The structure of the paper is as follows: In Section \ref{SectMainResults} we
briefly summarize our main results. Section \ref{SectionPierce} is devoted to
the description of Pierce's model for beam-TL interaction as presented in
\cite[I]{Pier51}. Section \ref{SectLagrangian} deals with the Lagrangian
approach to the model, including generalizations to both non-homogeneous and
multiple transmission lines. In the following Section
\ref{SectAmplificationGeneral copy(1)}, we explore the amplification mechanism
in the MTLB system as linked to instabilities in the dynamics of the beam. The
appropriate mathematical setting, in particular the Hamiltonian structure of
the model aimed at the study of eigenmodes in the periodic case is the subject
of Section \ref{SectHamiltonian}. In Section \ref{AmplMTL-beam}, we focus on
the detailed study of growing modes for the homogeneous MTLB system. Section
\ref{EnergyConsEx} deals with the questions of \ general energy conservation
and energy transfer between the beam and the MTL on the growing mode. In
Section \ref{PierceRev copy(1)} we make apparent how our general approach
allows to easily recover some of the original Pierce's results.

Finally, in Section \ref{MathSubj} we collect some technically involved
subjects which have been deferred there to avoid distracting the reader from
the main flow of ideas.

\section{Main results\label{SectMainResults}}

One of the goals of this work is to identify the mathematical mechanism of
amplification in MLTB systems. This goal has been accomplished by the
construction of a Lagrangian field theory of MLTB systems that underlines
their physical properties. Leaving detailed developments of this theory to the
following sections we simply identify here the key term of the system
Lagrangian responsible for amplification. This term quite expectedly is
associated with the electron beam and is described by the following expression%
\begin{equation}
\mathcal{L}_{\mathrm{b}}=\frac{\xi}{2}\left(  \partial_{t}q+u_{0}\partial
_{z}q\right)  ^{2},\qquad\xi=\frac{4\pi}{\omega_{\mathrm{p}}^{2}\sigma}>0,
\label{potenergyterm}%
\end{equation}
where $t$ and $z$ are, respectively, time and longitudinal variable,
$q=q(t,z)$ is the charge ("smoothed-out jelly of charge", \cite[I]{Pier51})
flowing through the beam. $\sigma$ and $u_{0}$ stand, respectively, for the
cross section and the electron velocity and $\omega_{\mathrm{p}}$ is the
plasma frequency. According to the general theory, we can identify the kinetic
and potential energies of the beam by expanding the expression
(\ref{potenergyterm}), that is
\[
\mathcal{L}_{\mathrm{b}}=\frac{\xi}{2}\left(  \partial_{t}q\right)  ^{2}+\xi
u_{0}\partial_{t}q\partial_{z}q+\frac{\xi}{2}u_{0}^{2}\left(  \partial
_{z}q\right)  ^{2},
\]
where \emph{the potential energy of the beam }$-\frac{\xi}{2}\left(
u_{0}\partial_{z}q\right)  ^{2}$\emph{ is a negative quantity. This is a
marked feature distinguishing MLTB from common oscillatory systems, in which
the potential energy is always positive}. \emph{The negative sign of this
potential energy term is ultimately responsible for system instability and
consequent amplification}.

Indeed, a typical oscillatory system has a positive potential energy
manifested in forces that move the system toward its equilibrium state. The
simplest examples are given by a linear mass-spring system or its electric
analog - a simple electric $LC$ oscillatory circuit. The corresponding
Lagrangians are%
\[
\mathcal{L}_{\mathrm{1}}(x,x^{\prime})=\frac{1}{2}mx^{\prime2}-\frac{1}%
{2}kx^{2};\qquad\mathcal{L}_{\mathrm{2}}(q,q^{\prime})=\frac{1}{2}Lq^{\prime
2}-\frac{1}{2C}q^{2},
\]
where $m$ is the mass of the point, $k$ is the elastic Hooke constant of the
spring and $q$ is the charge in the capacitor. Such forces result in a stable
motion with oscillatory energy transfer between its kinetic and potential
forms. A qualitatively different picture occurs when the potential energy is
negative, as in $\mathcal{L}_{\mathrm{b}}.$ In this case resulting forces move
the system away from the equilibrium at an exponentially growing rate. Such
situation corresponds to having a negative Hooke constant $k$ in
$\mathcal{L}_{\mathrm{1}}$ or a negative capacitance in $\mathcal{L}%
_{\mathrm{2}}$ above. Interestingly, Pierce has observed an effective negative
capacitance in his studies of a transmission line interacting with the
electron beam, \cite{Pier51}.

Another marked feature of the term $\mathcal{L}_{\mathrm{b}}$ in
(\ref{potenergyterm}) is its degeneration as quadatric form manifested as a
perfect square trinomial expression or, alternatively, as a precise gyrotropic
term. According to the general theory of unstable regimes, \cite{YakSta1},
this kind of degeneration is a necessary condition for instability arising
under proper perturbations. From the point of view of the second order partial
differential equation describing the beam dynamics this degeneracy is
manifested as parabolicity compared to hyperbolicity occurring for common wave motion.

The power and efficiency of the Lagrangian approach is further demonstrated by
an exhaustive analysis of amplification regimes for a general homogeneous MTLB
system, including precise conditions under which amplification takes place. In
particular, if $\ 0\leq v_{1}\leq v_{2}\leq...\leq v_{n}$ \ denote the
characteristic velocities of the MTL as an independent system, we show that
there is always an amplifying regime if $u_{0}\leq v_{1}$. If $u_{0}>v_{1}$,
we show that amplification occurs only for sufficiently small $\xi$ in
(\ref{potenergyterm}). We also provide a transparent form of the dispersion
relation for a general homogeneous MTLB system, including possible
degenerations, as well as an asymptotic analysis of the amplification factor
as $\ $the beam parameter $\xi$ defined in (\ref{potenergyterm}) becomes
arbitrarily small or large. The limits $\xi$ $\rightarrow0$ and $\xi
\rightarrow\infty$ correspond to high, respectively small electron density of
the beam. In \cite{Pier51}, Pierce deals with large values of $\xi,$ which
allows him to simplify the dispersion relation to an exactly solvable third
degree equation for the forward eigenmodes. We review Pierce's result in the
light of our approach.

Yet another benefit of our Lagrangian approach is an exhaustive analysis of
the energetic issues, including the overall energy conservation and energy
transfer between the MTL and the beam. This analysis yields explicit
expressions for the power $P_{\mathrm{B}\rightarrow\mathrm{MTL}}$ flowing from
the beam to the MTL for an exponentially growing solution of the form%
\begin{equation}
Q(z,t)=\widehat{Q}\mathrm{e}^{-\mathrm{i(}\omega t-k_{0}z)},\qquad
q(z,t)=\widehat{q}\mathrm{e}^{-\mathrm{i(}\omega t-k_{0}z)},\text{\qquad
}\operatorname{Im}k_{0}<0,
\end{equation}
where $Q$ is the coordinate describing the MTL and $q$ is the one describing
the beam. Namely, the following formula holds%
\begin{equation}
\left\langle P_{\mathrm{B}\rightarrow\mathrm{MTL}}\right\rangle (z)=-\left[
\omega\xi\left\vert k_{0}\right\vert ^{2}\left\vert \widehat{q}\right\vert
^{2}(\operatorname*{Re}v_{0}-u_{0})\operatorname{Im}v_{0}\right]
\mathrm{e}^{-2\left(  \operatorname{Im}k_{0}\right)  z},\text{\qquad}%
v_{0}=\frac{\omega}{k_{0}}. \label{IntroAmplFormula}%
\end{equation}
We show that in the above formula the constant in front of the exponential is
indeed \emph{positive,} meaning that the energy flows from the beam to the
MTL. Formula (\ref{IntroAmplFormula}) indicates also that the power
transferred to the MTL increases exponentially in the direction of the
electron flow. The opposite is true of the evanescent wave when the power
flows to the beam and decreases exponentially in the $+z$ direction.

\subsection{Negative potential energy and general gain media}

Looking at the above analysis we can identify two main features of the
Lagrangian providing for the amplification in the MTLB system. The first one
is the fact that the beam potential energy is negative and unbounded from
below. This feature of the electron beam Lagrangian clearly indicates that the
model is an ideal one with the negative potential energy term representing
effectively an inexhaustible source of energy. This energy can be converted
into another form of energy such as energy of electromagnetic radiation. Such
ideal model can be suitable for describing the amplification and gain up to
the point of saturation. The saturation can conceivably be modeled
phenomenologically by introducing an additional positive potential energy term
into the beam Lagrangian represented by a higher order polynomial with a small
coefficient. That would make the theory nonlinear, of course.

The second feature of the Lagrangian providing for amplification is a
particular degeneracy of the expression (\ref{potenergyterm}) for the
Lagrangian and its role in the system stability. More precisely, such term
makes the system unstable under proper perturbations, as discussed in detail
in Section \ref{ChgWave}. It is a well known fact from the Floquet theory of
periodic Hamiltonian systems that such degeneration is indeed necessary in
order to have unstable perturbations, \cite{YakSta1}

The association of the amplification and gain with a negative potential energy
term in a system Lagrangian can be a general way to model gain media.
Interestingly, the phenomenon of negative energy waves in inhomogeneous
plasmas is well known and understood at phenomenological level, see for
instance, \cite[7.7]{Bellan}, \cite[1.3]{Hasegawa}, \cite[3.1]{Melrose}. The
explanation provided in the cited references is essentially that in the
approximate phenomenological model the wave-energy density corresponds to the
change in the total system energy density in a more detailed theory. Such
negative energy waves typically occur when the system is near equilibrium with
a steady-state flow velocity and there exists a mode that reduces the average
kinetic energy of the particles to a value below the initial equilibrium
value. Importantly, concepts of negative energy waves and gain media are
intimately related to the instability.

It is instructive to compare and contrast the developed here approach for
modeling the gain medium by a negative potential energy term with the
conventional approach that represent the gain medium as a system with negative
absorption. As an important and relevant example of later let us consider
colisionless plasma in a weak external electric field $E=E_{0}e^{-i(\omega
t-kx)}$ described in \cite[3]{LiP}. The interactions in such a plasma are
non-local and consequently the plasma permittivity depends on the both on
$\omega$ and $k$ (the so-called spatial dispersion), and it has non-zero
imaginary part resulting in dissipation. The imaginary part of permittivity
and the energy dissipation are given respectively by formulas%
\begin{equation}
\epsilon^{\prime\prime}=-\frac{4\pi^{2}e^{2}m}{k^{2}}\left[  \frac{\partial
f}{\partial p}\right]  _{v=\omega/k};\qquad Q=\frac{\omega}{8\pi}%
\epsilon^{\prime\prime}\left\vert E\right\vert ^{2}=-\left\vert E\right\vert
^{2}\frac{\pi me^{2}\omega}{2k^{2}}\left[  \frac{\partial f}{\partial
p}\right]  _{v=\omega/k}, \label{LandauDis}%
\end{equation}
where $m,e$ are the electron mass (respectively charge) and $f$ is the
momentum distribution function of the stationary plasma. If the plasma is
isotropic (that is, the distribution function of momenta only depends on
$\left\vert p\right\vert $ or, in the one-dimensional case, is an even
function), it can be shown that $Q>0$, (\cite[3.30]{LiP}), and consequently
the plasma absorbs energy from the field, a phenomenon called \textsl{Landau
damping}. However, in the presence of anisotropy, the sign of $\left[
\frac{\partial f}{\partial p}\right]  _{v=\omega/k}$ and hence that of $Q$
might be reversed yielding a net flow of energy from the electrons to the
field and providing an example of gain medium. It is intuitively clear from
(\ref{LandauDis}) that the net energy flux depends on the relative number of
electrons with the velocity larger/smaller than the phase velocity of the wave.

Main differences\ between our approach for modeling gain in the MTLB system
and the conventional approach for modeling gain in the plasma example
described above are as follows. The conventional approach is based
fundamentally on the concept of open system and the gain medium is not modeled
explicitly but rather by its effect on the system. In our approach the beam
interacting with the electric field form a conservative system and the gain
medium is modeled explicitly as the beam term in the system Lagrangian with a
negative potential component. Yet another difference is that, in the MTLB
system, the gain occurs for the space charge wave velocities larger or smaller
than the wave phase velocity.

In fact, a causal dissipative system can always be extended uniquely to a
properly constructed conservative system, \cite{FigSch1}, \cite{FigShi1},
\cite{FigSch2}. It is an interesting question then whether one can carry out
similar construction for the gain medium. Answering this question is not in
the scope of this paper but we intend to look at this subject in our future work.

\section{Pierce's model\label{SectionPierce}}

In \cite[I]{Pier51}, J.R. Pierce presented a linear, one-dimensional model for
the description of the \ interaction of an electron beam with a surrounding
waveguide. The model is based on the following assumptions.

\textbf{Assumption I}. \textit{The modulation of both the electron velocity
and the current on the beam (so called a.c. components) are small compared to
the average or unperturbed velocity and current}.

This assumption justifies the linearization of the equations around the
unperturbed regime. Let the total velocity of the electrons be $u_{0}+v,$
where $u_{0\text{ }}$is the average velocity and $v$ is a small perturbation.
Analogously, let $\rho_{0}+\rho$ be the total electron density (per unit
volume) where $\rho_{0}$ is the unperturbed density and $\rho$ is the
perturbation. Let $\sigma$ be the cross section of the beam. Then, the total
current flowing is $I_{T}=I_{0}+I_{\mathrm{b}}$, where $I_{0}=\sigma\rho
_{0}u_{0}$ is the d.c. current and the perturbation is given by%
\begin{equation}
I_{\mathrm{b}}=\sigma\left(  \rho u_{0}+v\rho_{0}+\rho v\right)  .
\label{beam1}%
\end{equation}
Linearization around the d.c. regime gets rid of the term $\rho v,$ which is
quadratic in the perturbations. Thus we take%
\begin{equation}
I_{\mathrm{b}}=\sigma\left(  \rho u_{0}+v\rho_{0}\right)  \label{beam1bis}%
\end{equation}
in what follows. The linearized conservation of charge equation reads%
\begin{equation}
\frac{\partial\rho}{\partial t}+\frac{\partial i}{\partial z}=\frac
{\partial\rho}{\partial t}+\frac{1}{\sigma}\frac{\partial I_{\mathrm{b}}%
}{\partial z}=0, \label{beambis}%
\end{equation}
where $t$ represents time, $z$ is the longitudinal variable and $i$ is the
current density, $i=I_{\mathrm{b}}/\sigma$.

\textbf{Assumption II}. \textit{The beam is thought of as a continuous medium
(electron jelly) with no internal stress and a unique volumetric force acting
along it, namely the one resulting from the axial component of the electric
field associated to the signal on the waveguide.}

It is further assumed that the charge/mass ratio in the electron jelly is
precisely $e/m,$ $e=-\left\vert e\right\vert $ \ being the electron charge and
$m$ being the electron mass. Therefore, if $E=E_{z}$ \ is the axial component
of the field, the motion equation for the medium reads:%
\begin{equation}
\frac{\partial v}{\partial t}+(u_{0}+v)\frac{\partial v}{\partial z}=\frac
{e}{m}E, \label{beam3}%
\end{equation}
where, on the left-hand side, we have used the usual Eulerian expression for
the acceleration in terms of the velocity field $v(z,t).$ Upon linearization,
the term $v\frac{\partial v}{\partial z}$ is dropped, thus yielding%
\begin{equation}
\frac{\partial v}{\partial t}+u_{0}\frac{\partial v}{\partial z}=\frac{e}{m}E.
\label{beam3bis}%
\end{equation}
Notice that in Pierce's original paper, \cite[I]{Pier51}, the charge of the
electron is denoted by $-e,$ whereas here it is just $e$.

Actually, the full blown Pierce model, as presented in the book \cite{PierTWT}%
, also includes the effect of electron-electron repulsion in the beam (so
called space charge effects); see also \cite{Gilm},\cite{Tsimring}. Here we do
not include such effect for the sake of simplicity, but we advance that this
can be done and we plan to report on this issue in the future.

Taking the derivatives of (\ref{beam1bis}) with respect to $t$ and $z$ \ we
obtain the following expressions for $\partial v/\partial t$ and $\partial
v/\partial z:$%
\begin{equation}
\frac{\partial v}{\partial t}=\frac{1}{\sigma\rho_{0}}\frac{\partial I_{b}%
}{\partial t}-\frac{u_{0}}{\rho_{0}}\frac{\partial\rho}{\partial t}%
;\qquad\frac{\partial v}{\partial z}=\frac{1}{\sigma\rho_{0}}\frac{\partial
I_{b}}{\partial z}-\frac{u_{0}}{\rho_{0}}\frac{\partial\rho}{\partial z}.
\label{beamfdef}%
\end{equation}
We use (\ref{beambis}) \ to express $\partial\rho/\partial t$ in terms of
$\partial I_{b}/\partial z$ in the first of the above relations and
differentiate the resulting relation with respect to $t$ thus yielding%
\begin{equation}
\frac{\partial^{2}v}{\partial t^{2}}=\frac{1}{\sigma\rho_{0}}\frac
{\partial^{2}I_{b}}{\partial t^{2}}+\frac{u_{0}}{\sigma\rho_{0}}\frac
{\partial^{2}I_{b}}{\partial z\partial t}. \label{equpierce1}%
\end{equation}
Next, we differentiate the second relation in (\ref{beamfdef}) with respect to
$t,$ expressing again $\partial\rho/\partial t$ in terms of $\partial
I_{b}/\partial z$ . We obtain%
\begin{equation}
\frac{\partial^{2}v}{\partial z\partial t}=\frac{1}{\sigma\rho_{0}}%
\frac{\partial^{2}I_{b}}{\partial z\partial t}+\frac{u_{0}}{\sigma\rho_{0}%
}\frac{\partial^{2}I_{b}}{\partial z^{2}}. \label{equpierce2}%
\end{equation}
On the other hand, differentiating (\ref{beam3bis}) with respect to $t$ \ we
get%
\begin{equation}
\frac{\partial^{2}v}{\partial t^{2}}+u_{0}\frac{\partial^{2}v}{\partial
z\partial t}=\frac{e}{m}\frac{\partial E}{\partial t}. \label{equpierce3}%
\end{equation}
Finally, we replace the second derivatives in (\ref{equpierce3}) through their
expressions in (\ref{equpierce1}) and (\ref{equpierce2}), yielding a second
order equation for the beam current%
\begin{equation}
\partial_{t}^{2}I_{\mathrm{b}}+2u_{0}\partial_{t}\partial_{z}I_{\mathrm{b}%
}+u_{0}^{2}\partial_{z}^{2}I_{\mathrm{b}}=\sigma\frac{e}{m}\rho_{0}%
\partial_{t}E \label{beam4}%
\end{equation}
(here and in what follows, we use $\partial_{t}^{2}$ for $\partial/\partial
t^{2},$ $\partial_{z}^{2}$ for $\partial/\partial z^{2},$ etc. for the sake of
brevity). Next, Pierce considers the reciprocal action of the electron beam on
the transmission line (TL).

\textbf{Assumption III}. \textit{The action of the beam onto the waveguide
amounts to a shunt current instantaneously induced on the line. This current
is equal in absolute value and opposite to the current on the beam.}

According to this assumption, the usual transmission line (telegraph)
equations are modified so as to include an additional source term,
\cite[I]{Pier51},
\begin{equation}
\partial_{z}I=-C\partial_{t}V-\partial_{z}I_{\mathrm{b}},\qquad\partial
_{z}V=-L\partial_{t}I.\label{traneq1}%
\end{equation}
Here, as usual, $I=I\left(  t,z\right)  $ and $V=V\left(  t,z\right)  $ denote
respectively the current through the inductive element and the voltage on the
shunt capacitive element of the TL, $C>0$ and $L>0$ are respectively the shunt
capacitance and inductance per unit of length. Note also that in the equations
(\ref{traneq1}) $\partial_{z}I$ and $\partial_{z}V$ are respectively the
current through the shunt capacitive element and the voltage drop on the
inductive element of the TL per unit length. The addition of the source term
$-\partial_{z}I_{\mathrm{b}}$ can be justified under the assumption of
quasi-stationarity of the process: the charge wave on the beam "mirrors" onto
the line. One of the lumped elements in the discretization of such excited TL
is represented in Fig. \ref{Circuit}. Induced current can be thought of as a
distributed shunt current source.%
\begin{figure}[ptb]%
\centering
\ifcase\msipdfoutput
\includegraphics[
height=2.2935in,
width=4.2843in
]%
{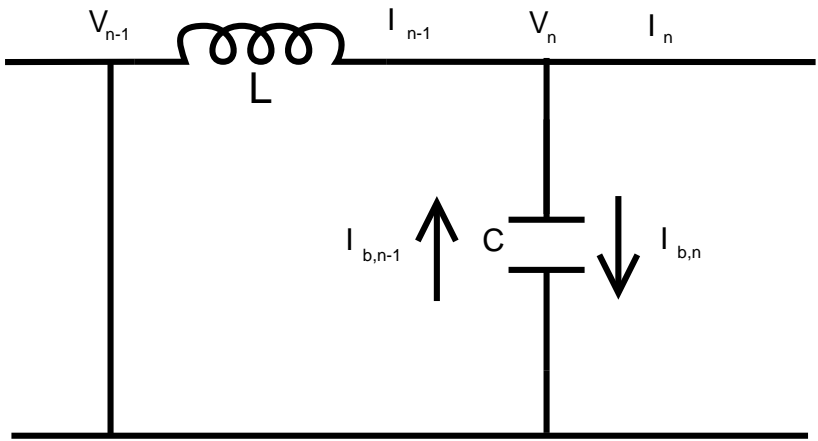}%
\else
\includegraphics[
height=2.2935in,
width=4.2843in
]%
{D:/alatex/Preparation/Reyes/lagsys/arxiv2/graphics/CircuitoNuevo__1.pdf}%
\fi
\caption{Discrete element of the TL-beam system in Pierce's model. The arrows
represent shunt current induced on the capacitor.}%
\label{Circuit}%
\end{figure}

The axial component of the electric field associated to the waveguide is
related to the TL voltage:
\begin{equation}
E\left(  t,z\right)  =-\partial_{z}V\left(  t,z\right)  . \label{traneq3}%
\end{equation}
Plugging the above expression into (\ref{beam4}), we arrive at the equation%
\begin{equation}
\partial_{t}^{2}I_{\mathrm{b}}+2u_{0}\partial_{t}\partial_{z}I_{\mathrm{b}%
}+u_{0}^{2}\partial_{z}^{2}I_{\mathrm{b}}=-\sigma\frac{e}{m}\rho_{0}%
\partial_{t}\partial_{z}V. \label{traneq4}%
\end{equation}
Thus, according to \cite[I]{Pier51}, the equations (\ref{traneq1}) and
(\ref{traneq4}) constitute a model of the interactive TL-beam (TLB) system.

Some comments are in order. In more recent literature, improved versions of
the linear Pierce model have been considered, see e.g. \cite[4]{Nusinovich}.
These versions account for finer features such as bunching saturation, or
retain the nonlinearity present in the original versions of equations
(\ref{beam1}) and (\ref{beam3}), etc. Although such enriched models are
undoubtedly more realistic and numerical computations based on them might
provide a better agreement with experiment, they hardly allow for analytical
treatment. In particular, they do not possess a Lagrangian structure. Pierce's
model, though simple, already captures the mechanism of amplification and, as
mentioned in the Introduction, can be generalized to the case of MTLB systems,
and allows for a thorough mathematical analysis in all cases. Taking into
account the fact that real wave guides can be approximated, in principle, by
an MTL with any degree of accuracy, \cite{Nitsch}, \cite{Paul}, \cite{SchwiE},
such generalization opens new perspectives in design optimization, which is
the ultimate goal of our study.

\section{Lagrangian formulation of Pierce's model\label{SectLagrangian}}

In this section we construct a Lagrangian field theory underlying the Pierce
model. The Lagrangian theory provides a deeper insight into mathematical
mechanism of amplification and energy transfer from the electron beam to the radiation.

\subsection{The Lagrangian\label{Lagrangian}}

The linear system of equations (\ref{traneq1})-(\ref{traneq4}) arises as
Euler-Lagrange equations associated to certain quadratic Lagrangian. To see
this, let us first introduce the charge variables $Q$ and $q$ related
respectively to the currents $I$ and $I_{\mathrm{b}}$ by%
\begin{equation}
I=\partial_{t}Q,\qquad I_{\mathrm{b}}=\partial_{t}q. \label{tranbe1}%
\end{equation}
Thus the variables $Q,q$ represent the amount of charge traversing the
cross-section of the line (respectively the beam) at the point $z$ within the
time interval $(t_{0},t),$ where $t_{0}$ is some fixed reference time. Then
the TLB system (\ref{traneq1}) and (\ref{traneq4}) takes the form%
\begin{equation}
\partial_{z}Q=-CV-\partial_{z}q,\qquad\partial_{z}V=-L\partial_{t}^{2}Q,
\label{tranbe2}%
\end{equation}%
\begin{equation}
\left(  \partial_{t}+u_{0}\partial_{z}\right)  ^{2}q=-\frac{\sigma
\omega_{\mathrm{p}}^{2}}{4\pi}\partial_{z}V, \label{tranbe3}%
\end{equation}
where $\omega_{\mathrm{p}}$ is the \emph{plasma frequency} defined (in
Gaussian units) by%
\begin{equation}
\omega_{\mathrm{p}}^{2}=\frac{4\pi e\rho_{0}}{m}, \label{tranbe4}%
\end{equation}
\cite[2.2]{DavNP}.

Since it is not any harder to deal with inhomogeneous (in particular,
periodic) TLs, we suppose from now on that $C$ and $L$ can be position
dependent, that is%
\begin{equation}
C=C\left(  z\right)  ,\qquad L=L\left(  z\right)  . \label{tranbe5}%
\end{equation}
Notice that the first equation in (\ref{tranbe2}) readily implies the
following representation for $V$
\begin{equation}
V=-C^{-1}\partial_{z}(Q+q). \label{tranbe6}%
\end{equation}
Inserting the above expression for $V$ into the second equation in
(\ref{tranbe2}) and into the equation (\ref{tranbe3}) yield the following TLB
evolution equations for the charges:
\begin{equation}
L\partial_{t}^{2}Q-\partial_{z}\left[  C^{-1}\partial_{z}\right]  \left(
Q+q\right)  =0, \label{tranbe7}%
\end{equation}%
\begin{equation}
\xi\left(  \partial_{t}+u_{0}\partial_{z}\right)  ^{2}q-\partial_{z}\left[
C^{-1}\partial_{z}\right]  \left(  Q+q\right)  =0,\quad\xi=\frac{4\pi}%
{\omega_{\mathrm{p}}^{2}\sigma}=\frac{m}{\sigma e\rho_{0}}>0. \label{tranbe8}%
\end{equation}

We observe now that the above evolution equations are the Euler-Lagrange
equations for the following Lagrangian%
\begin{equation}
\mathcal{L(}z,\partial_{t}Q,\partial_{z}Q,\partial_{t}q,\partial
_{z}q\mathcal{)}=\frac{L}{2}\left(  \partial_{t}Q\right)  ^{2}-\frac{1}%
{2}C^{-1}\left(  \partial_{z}Q+\partial_{z}q\right)  ^{2}+\frac{\xi}{2}\left(
\partial_{t}q+u_{0}\partial_{z}q\right)  ^{2}. \label{tranbe9}%
\end{equation}
Indeed, for a general\ Lagrangian density $\mathcal{L}=\mathcal{L}\left(
t,z;Q,\partial_{t}Q,\partial_{z}Q;q,\partial_{t}q,\partial_{z}q\right)  ,$ the
Euler-Lagrange equations take the form%
\begin{equation}
\partial_{t}\frac{\partial\mathcal{L}}{\partial(\partial_{t}Q)}+\partial
_{z}\frac{\partial\mathcal{L}}{\partial\left(  \partial_{z}Q\right)  }%
-\frac{\partial\mathcal{L}}{\partial Q}=0,\text{ \ \ \ \ }\partial_{t}%
\frac{\partial\mathcal{L}}{\partial(\partial_{t}q)}+\partial_{z}\frac
{\partial\mathcal{L}}{\partial\left(  \partial_{z}q\right)  }-\frac
{\partial\mathcal{L}}{\partial q}=0\text{\ .} \label{tranbe10}%
\end{equation}
A straightforward computation confirms that the application of equation
(\ref{tranbe10}) to the Lagrangian defined by (\ref{tranbe9}) indeed yields
the TLB evolution equations (\ref{tranbe7}) and (\ref{tranbe8}). Pierce's
original equations are obtained as a particular case, when $C,$ $L$ are
constant along the line.

As to the units of $\mathcal{L}$, they are energy/length, as expected for a
Lagrangian density. In this respect, we remind that we are using Gaussian
units and charge$^{2}=$ force$\times$length$^{2}$, in agreement with the
Gaussian version of Coulomb's law, $F=q_{1}q_{2}/r^{2}.$

Let us make a final observation: \ It is assumed that the current induced by
the beam onto the TL is due to the fact that the charge on the beam perfectly
"mirrors" onto the waveguide. This assumption can be justified as an
approximation in the "quasistatic" regime, in the spirit of Ramo's Theorem,
\cite{Ra}, \cite{Tsimring}. According to some authors, e.g. R. Kompfner,
\cite{Kom} or J.H. Booske, \cite[4]{Nusinovich}, in dealing with real devices
a coefficient $\varkappa\in(0,1)$ must be included in front of $\partial
_{z}I_{b}$ in (\ref{traneq1}) (accordingly in (\ref{tranbe2})) to account for
the real induced current, the case $\varkappa=1$ being regarded as ideal. The
Lagrangian approach can easily handle the general case. However, in order to
keep the exposition as simple as possible, we only consider the ideal situation.

\subsection{Generalization to multiple transmission lines}

It is known that fairly general wave guides can be well approximated by
multiple transmission lines, MTL, \cite{Paul}. The corresponding
generalization of Pierce's model is straightforward thanks to our Lagrangian
formulation. Indeed, suppose that we have $n+1$ conductors, one of them being
grounded, say the $(n+1)$-th. \ We denote by $V(z,t)=\left\{  V_{i}%
(z,t)\right\}  _{i=1\ldots n}$ the $n$-dimensional vector-column of voltages
\ on \ the first $n$ conductors with respect to the ground and \ by
$I(z,t)=\left\{  I_{i}(z,t)\right\}  _{i=1\ldots n}$ the vector-column of
currents flowing on them and set%
\[
Q(z,t)=\left\{  Q_{i}(z,t)\right\}  _{i=1\ldots n},\text{ \qquad}Q_{i}(z,t)=%
{\displaystyle\int\limits^{t}}
I_{i}(z,s)ds.
\]
Let $L=L(z),$ $C=C(z)$ be the $n\times n$ matrices of self- and mutual
inductance and capacity. As it is well known, they are positive symmetric
(Hermitian). A natural generalization of (\ref{tranbe9}) is provided by%
\begin{equation}
\mathcal{L}=\frac{1}{2}\left\{  \left(  \partial_{t}Q,L\partial_{t}Q\right)
-\left(  \partial_{z}Q+\partial_{z}qB,C^{-1}\left[  \partial_{z}Q+\partial
_{z}qB\right]  \right)  \right\}  +\frac{\xi}{2}\left(  \partial_{t}%
q+u_{0}\partial_{z}q\right)  ^{2}, \label{mtraneq1}%
\end{equation}
where $($ $,$ $)$ stands for the scalar product in $\Re^{n}$ and $B$ is the
$n$-dimensional vector-column with all components being the unity, that is
\begin{equation}
B=\left(  1,1,\ldots1\right)  ^{\mathrm{T}}. \label{mtraneq1a}%
\end{equation}
The corresponding Euler-Lagrange second order system is%
\begin{gather}
L\partial_{t}^{2}Q-\partial_{z}\left[  C^{-1}(\partial_{z}Q+\partial
_{z}qB)\right]  =0;\label{MTLEuler-Larange}\\
\xi\left[  \partial_{t}^{2}q+2u_{0}\partial_{t}\partial_{z}q+u_{0}^{2}%
\partial_{z}^{2}q\right]  -\left(  B^{\mathrm{T}},\partial_{z}\left[
C^{-1}(\partial_{z}Q+\partial_{z}qB)\right]  \right)  =0.\nonumber
\end{gather}
The generalized telegraph equations, equivalent to the first equation above,
adopt the form%
\begin{equation}
\partial_{z}I=-C\partial_{t}V-\partial_{z}I_{b}B;\qquad\partial_{z}%
V=-L\partial_{t}I. \label{TelegMulti}%
\end{equation}
Our choice of the vector $B$ assumes, besides perfect induction, a symmetry in
the interaction between the beam and the different lines. A more realistic
approach might include coefficients $\varkappa_{i}\in(0,1)$ in vector $B$ to
account for non-symmetric interaction. As we already mentioned in Subsection
\ref{SectLagrangian}, such effects can be easily handled by our approach.

Observe that if we remove the beam\ from the system by setting $q=0,$ our
model \ is in full agreement with well established models for the interaction
of several lines, derived from Maxwell's equations under reasonable
assumptions. See, for example, \cite[2]{Nitsch}, \cite[1.4.1]{Paul} for models
of interacting TLs.

To summarize: from now on, by MTLB system we mean the field Lagrangian system
governed by the Lagrangian $\mathcal{L}$ in (\ref{mtraneq1}) and the
corresponding Euler-Lagrange field equations (\ref{MTLEuler-Larange}).

\section{The beam as a source of amplification. The role of instability
\label{SectAmplificationGeneral copy(1)}}

Evidently, the beam is the sole source of energy in the MTLB system and the
ultimate responsible for the presence of exponentially growing modes. In this
section we identify and analyze the mathematical mechanism underlying amplification.

To trace the amplification to the beam we view the Lagrangian (\ref{mtraneq1})
as a perturbation of the Lagrangian $\mathcal{L}_{\mathrm{b}}$ for the
isolated beam defined by%
\begin{equation}
\mathcal{L}_{\mathrm{b}}=\frac{1}{2}\left(  \partial_{t}q+u_{0}\partial
_{z}q\right)  ^{2}=\frac{1}{2}\left[  \left(  \partial_{t}q\right)
^{2}+2u_{0}\partial_{t}q\partial_{z}q+u_{0}^{2}\left(  \partial_{z}q\right)
^{2}\right]  . \label{Lagrbeam}%
\end{equation}
We introduce the equivalent Lagrangian $\widetilde{\mathcal{L}}=\frac{1}{\xi
}\mathcal{L}$, where $\mathcal{L}$ is as in (\ref{mtraneq1}),\textit{ i.e.}%
\begin{gather}
\widetilde{\mathcal{L}}=\mathcal{L}_{b}+\varepsilon\mathcal{L}^{\prime}%
=\frac{1}{2}\left(  \partial_{t}q+u_{0}\partial_{z}q\right)  ^{2}%
+\label{FullLagrangian}\\
+\frac{\varepsilon}{2}\left\{  \left(  \partial_{t}Q,L\partial_{t}Q\right)
-\left(  \partial_{z}Q+\partial_{z}qB,C^{-1}\left[  \partial_{z}Q+\partial
_{z}qB\right]  \right)  \right\}  ,\nonumber
\end{gather}
and $\varepsilon=1/\xi$.\ Small values of $\xi$ defined by (\ref{tranbe8}) and
consequently large values $\varepsilon$ correspond to strong coupling and
regimes where the beam effectively feeds its energy into transmission lines in
the form of EM field. The EM field energy gain originates in the beam as an
infinite reservoir of the potential energy $-\frac{1}{2}(u_{0}\partial
_{z}q)^{2}$. Importantly, the potential energy is negative unlike in
oscillatory systems. For small coupling as we will show no energy transfer
might occur from the beam to the EM field. This perturbation analysis suggests
to consider first the beam as an isolated system.

\subsection{Charge wave dynamics\label{ChgWave}}

In this subsection, we investigate beam charge dynamics as an isolated system,
described by (\ref{Lagrbeam}). We already mentioned the role of the term
$u_{0}^{2}\left(  \partial_{z}q\right)  ^{2}$ as a source of energy. This term
is responsible for the system instability manifesting itself by exponentially
growing solutions of the associated E-L equations. The gyrotropic term
$u_{0}\partial_{t}q\partial_{z}q$ in the Lagrangian provides for stabilizing
effect. As we will see, for the Lagrangian (\ref{Lagrbeam}) the balance
between instability and stability is struck exactly in the margin. Namely, a
small perturbation of this Lagrangian can make the system either stable or unstable.

The beam Lagrangian $\mathcal{L}_{\mathrm{b}}$ is quadratic in $(\partial
_{t}q,\partial_{z}q)$, see \ Section \ref{AppQuadLag}, and has the following
structure%
\begin{equation}
\mathcal{L}_{\mathrm{b}}=\frac{1}{2}\alpha(\partial_{t}q)^{2}+\theta
\partial_{t}q\partial_{z}q-\frac{1}{2}\eta(\partial_{z}q)^{2}=(\partial
_{t}q,\partial_{z}q)^{\mathrm{T}}M(\partial_{t}q,\partial_{z}q),
\label{beamlag1}%
\end{equation}
where%
\begin{equation}
M=\left[
\begin{array}
[c]{ll}%
\alpha & \theta\\
\theta & -\eta
\end{array}
\right]  =\left[
\begin{array}
[c]{ll}%
1 & u_{0}\\
u_{0} & u_{0}^{2}%
\end{array}
\right]  ,\text{ \ \ (thus }\alpha=1,\ \theta=u_{0},\ \eta=-u_{0}^{2}).
\label{beamlag2}%
\end{equation}
The corresponding Euler-Lagrange equation (\ref{qulaq4}) is
\begin{equation}
\left(  \partial_{t}+u_{0}\partial_{z}\right)  ^{2}q=0. \label{BeamEq}%
\end{equation}
Applying the general formulas (\ref{encoq4a})-(\ref{encoq4b}) for the energy
$H$ and its flux $S$ we obtain%
\begin{equation}
H_{\mathrm{b}}\left[  q\right]  =\frac{1}{2}\left(  \partial_{t}q\right)
^{2}-\frac{u_{0}^{2}}{2}\left(  \partial_{z}q\right)  ^{2}, \label{BeamEnergy}%
\end{equation}%
\begin{equation}
S_{\mathrm{b}}\left[  q\right]  =\partial_{t}q\left(  u_{0}\partial_{t}%
q+u_{0}^{2}\partial_{z}q\right)  =u_{0}\partial_{t}q\left(  \partial
_{t}q+u_{0}\partial_{z}q\right)  =u_{0}\left(  \partial_{t}q\right)
^{2}+u_{0}^{2}\partial_{t}q\partial_{z}q. \label{BeamFlux}%
\end{equation}
Since $\mathcal{L}_{\mathrm{b}}$ does not depend on time explicitly,
conservation of energy takes place, (\ref{gblag5}):%
\begin{equation}
\frac{\partial H_{\mathrm{b}}}{\partial t}+\frac{\partial S_{\mathrm{b}}%
}{\partial z}=0. \label{consenergy}%
\end{equation}

\subsubsection{Eigenmodes and stability issues.}

Since the beam parameters are constant in space we can make use of the
dispersion relation to study the eigenmodes. Thus, if we try solutions of the
form $q(z,t)=\mathrm{e}^{-\mathrm{i}(\omega t-kz)}$ in (\ref{BeamEq}), we get%
\begin{equation}
\omega^{2}-2u_{0}\omega k+u_{0}^{2}k^{2}=\left(  \omega-u_{0}k\right)  ^{2}=0,
\label{disprelbeam}%
\end{equation}
hence $k_{\omega}=\omega/u_{0}$ is a double real root. The corresponding
eigenmodes are $q_{1}(z,t)=\mathrm{e}^{\mathrm{i}\left(  k_{\omega}z-\omega
t\right)  }$\ and $q_{2}(z,t)=z\mathrm{e}^{\mathrm{i}\left(  k_{\omega
}z-\omega t\right)  }$ or their real valued counterparts
\[
v_{1}(z,t)=\cos\left(  kz-\omega t\right)  ,\qquad v_{2}(z,t)=z\cos\left(
kz-\omega t\right)  .
\]
The associated energy flux is, according to (\ref{BeamFlux}),%
\begin{equation}
S_{\mathrm{b}}\left[  v_{1}\right]  =0,\qquad S_{\mathrm{b}}\left[
v_{2}\right]  =-u_{0}^{2}z\omega\sin\left(  kz-\omega t\right)  \cos\left(
kz-\omega t\right)  . \label{fluxformulas}%
\end{equation}
To make useful inference related to conservation laws it is common to use the
following\ time-averaging operation. Namely, for a (locally integrable)
function $f$ defined on $[0,\infty)$ we introduce%
\begin{equation}
\left\langle f\right\rangle =\lim_{T\rightarrow\infty}\frac{1}{T}%
{\displaystyle\int_{0}^{T}}
f(t)\,\mathrm{d}t. \label{encoq5}%
\end{equation}
This time-averaging operation has the following properties. If $f$ is a smooth
and bounded function on $[0,\infty)$, then%
\begin{equation}
\left\langle \frac{df}{dt}\right\rangle =\lim_{T\rightarrow\infty}\frac{1}{T}%
{\displaystyle\int_{0}^{T}}
\frac{df}{dt}\,\mathrm{d}t=\lim_{T\rightarrow\infty}\frac{1}{T}\left[
f(T)-f(0)\right]  =0. \label{prop1ave}%
\end{equation}
Differentiation with respect to parameters commutes with the time-averaging
operation. Namely, if $f$ also depends (smoothly) on some parameter $z$ the
following identity holds%
\begin{equation}
\left\langle \partial_{z}f\right\rangle =\partial_{z}\left\langle
f\right\rangle . \label{prop2ave}%
\end{equation}
Taking time average on both sides \ of the conservation law (\ref{consenergy})
and using the above properties of averaging, we conclude that
\[
\left\langle S_{\mathrm{b}}\left[  v_{2}\right]  \right\rangle
(z)=\mathrm{const}.
\]
On the other hand, it \ follows from (\ref{fluxformulas}) that $\left\langle
S_{\mathrm{b}}\left[  v_{2}\right]  \right\rangle (0)=0$. Hence $\left\langle
S_{\mathrm{b}}\left[  v_{2}\right]  \right\rangle (z)=0.$

From the stability point of view, this situation is a very degenerate one. To
illustrate this point, let us introduce a special form perturbation in the
beam dispersion relation (\ref{disprelbeam}):%
\[
\omega^{2}-2\alpha u_{0}\omega k+u_{0}^{2}k^{2}=0\text{\quad or\quad}\left(
\omega-\alpha u_{0}k\right)  ^{2}=\left(  \alpha^{2}-1\right)  u_{0}^{2}%
k^{2},
\]
where $\alpha$ is a real number. Our situation corresponds to $\alpha=1.$ Let
us consider the behavior close to $\alpha=1$. The quadratic equation above has
the following solutions%
\[
k_{\omega}(\alpha)=\frac{\omega}{u_{0}}(\alpha\pm\sqrt{\alpha^{2}-1}).
\]
Notice that if $\alpha^{2}<1$ the above solutions become complex conjugate,
whereas if $\alpha^{2}>1$ they are real distinct. $\alpha=1$ \ corresponds to
a double real solution, already showing the degeneracy.

An important subject of our interest is the analysis of MTL structures in
which the parameters vary periodically in $z$. The Floquet theory and, in
particular, the Floquet multipliers are the mathematical objects that deal
with such situations \textit{par excellence}. As explained above, we may
consider the coupled system as a perturbation of the beam. Consequently, it is
instructive to take a look at the isolated beam in the light of Floquet theory
with arbitrary period (eventually dictated by the period of the structure).

The Floquet multipliers with period unity are $\rho_{\omega}(\alpha
)=\mathrm{e}^{\mathrm{i}k_{\omega}(\alpha)}$. It is clear that in the case
$\alpha^{2}<1$ they are symmetrically located with respect to the unit circle
in the complex plane. The solution corresponding to the multiplier outside the
circle is a growing wave, whereas the one corresponding to the multiplier
inside the circle is an evanescent one. In the opposite case $\alpha^{2}>1$,
both roots are located on the unit circle, and the corresponding modes are
purely oscillatory. We refer to these two qualitatively different
perturbations as respectively unstable and stable. Aiming at amplification by
coupling the beam to a MTL, the unstable situation is the one to be favoured.

The special perturbation of the beam equation considered above was for
illustration purposes to see the degenerate stability properties of the system
under perturbation of its parameters. For the MTLB system, however, it is the
MTL the one that plays the role of perturbation.\ In Section
\ref{AmplMTL-beam}, we prove that the desired instability and resulting
amplification for spatially homogeneous MTL is achieved by sufficiently strong
coupling (small values of $\xi$). An extension of this result to the case of
periodic MTLs is left for a forthcoming publication.

Additional insight into the mathematical mechanism of amplification associated
with instability can be gained by looking at the nature of the partial
differential equations involved. Indeed, the equation%
\begin{equation}
\left(  \partial_{t}^{2}+2\alpha u_{0}\partial_{t}\partial_{z}+u_{0}%
^{2}\partial_{z}^{2}\right)  q=0
\end{equation}
is\textbf{\ }hyperbolic if $\alpha^{2}>1$. In this case, there are two
propagation velocities $v^{\pm}(\alpha)$ of the same sign, namely%
\[
v^{\pm}(\alpha)=\frac{\omega}{k_{\omega}^{\pm}(\alpha)}=-u_{0}(\alpha\mp
\sqrt{\alpha^{2}-1}),
\]
and the general solution has the form%
\[
q(z,t)=q_{1}(z-v^{+}t)+q_{2}(z-v^{-}t).
\]
Therefore, any solution which is bounded in time (as it is the case for
harmonic in time solutions) is automatically bounded in space. In other words,
no harmonic in time regime can be exponentially growing in space.

\ In the critical case, $\alpha=1,$ the equation is of the parabolic type.
Changing variables $(z,t)\rightarrow(\xi,\eta)$ with $\xi=x-u_{0}t$,
$\eta=ax+bt$ with $b+au_{0}\neq0,$ it can be easily checked that the general
solution in this case is%
\[
q(z,t)=zF(z-u_{0}t)+G(z-u_{0}t)=t\widetilde{F}(z-u_{0}t)+\widetilde{G}%
(z-u_{0}t),
\]
where $F,G,\widetilde{F},\widetilde{G}$ \ are arbitrary functions. In
particular any travelling wave with velocity $u_{0}$ is a solution. Again here
we see that bounded in time dependence can be accompanied by at most linear
growth in space.

If $\alpha^{2}<1$ we are dealing with the elliptic case where there is no
propagation. This is the only case allowing for exponential amplification.
Indeed, a linear change of variables $(z,t)\rightarrow(\xi=az+bt,\eta=cz+dt)$
transforms the equation into the Laplace equation%
\[
u_{\xi\xi}+u_{\eta\eta}=0,
\]
which admits real solutions of the form $u(\xi,\eta)=\mathrm{e}^{k\xi}%
\cos(\omega\eta)$, $u(\xi,\eta)=e^{k\xi}\sin(\omega\eta),$ etc.

\section{Hamiltonian structure of the MTLB system\label{SectHamiltonian}}

In order to study the MTLB system, in particular the associated modes, their
stability and the amplification phenomenon, we make use of the Hamiltonian
structure associated to the Lagrangian (\ref{FullLagrangian}). More precisely,
we use a version of Hamiltonian formalism that treats the space and time
variables on the same footing, known as de Donder-Weyl formalism. For reader's
\ convenience, we have gathered the basic information about this topic in
Section \ref{AppdeDonder}. As usual, \ the passage from Lagrangian to
Hamiltonian point of view allows to cast the second-order Euler-Lagrange
system of equations (\ref{MTLEuler-Larange}) in the form of a first order
system, either with respect to $t$ or with respect to $z.$

To comply with notations of Section \ref{AppdeDonder}, from now on we put
$\mathsf{q}_{1}=Q$, $\mathsf{q}_{2}=q$, $\mathsf{q}=(\mathsf{q}_{1}%
,\mathsf{q}_{2})^{^{\mathrm{T}}}$. The Lagrangian$\ \widetilde{\mathcal{L}}$
in (\ref{FullLagrangian}) is quadratic in its variables $\left(  \partial
_{t}Q,\partial_{t}q,\partial_{z}Q,\partial_{z}q\right)  $, that is in $\left(
\partial_{t}\mathsf{q},\partial_{z}\mathsf{q}\right)  ^{^{\mathrm{T}}}$ in the
new notation. Indeed,%
\begin{equation}
\widetilde{\mathcal{L}}=\frac{1}{2}\partial_{t}\mathsf{q}^{\mathrm{T}}%
\alpha\partial_{t}\mathsf{q}+\partial_{t}\mathsf{q}^{\mathrm{T}}\theta
\partial_{z}\mathsf{q}-\frac{1}{2}\partial_{z}\mathsf{q}^{\mathrm{T}}%
\eta\partial_{z}\mathsf{q}, \label{Lagqq1}%
\end{equation}
where%
\begin{equation}
\alpha=\left[
\begin{array}
[c]{cc}%
\varepsilon L & 0\\
0 & 1
\end{array}
\right]  ,\text{\quad}\theta=\left[
\begin{array}
[c]{cc}%
0 & 0\\
0 & u_{0}%
\end{array}
\right]  ,\text{\quad}\eta=\left[
\begin{array}
[c]{cc}%
\varepsilon C^{-1} & \varepsilon C^{-1}B\\
\varepsilon B^{\mathrm{T}}C^{-1} & \varepsilon B^{\mathrm{T}}C^{-1}B-u_{0}^{2}%
\end{array}
\right]  \text{,} \label{Lagqq2}%
\end{equation}
or, using a block matrix,%
\begin{equation}
\widetilde{\mathcal{L}}=\frac{1}{2}\mathsf{u}^{\mathrm{T}}M_{\mathrm{L}%
}\mathsf{u},\text{\quad with}\qquad M_{\mathrm{L}}=\left[
\begin{array}
[c]{ll}%
\alpha & \theta\\
\theta & -\eta
\end{array}
\right]  ,\qquad\mathsf{u}=\left[
\begin{array}
[c]{l}%
\partial_{t}\mathsf{q}\\
\partial_{z}\mathsf{q}%
\end{array}
\right]  . \label{Lagqq3}%
\end{equation}
Let us introduce the vector of canonical momenta $\mathsf{p}=(\mathsf{p}%
_{t},\mathsf{p}_{z})^{^{\mathrm{T}}},$ related to the vector $\mathsf{u}$
above by means of%
\begin{equation}
\mathsf{p}=M_{\mathrm{L}}\mathsf{u,} \label{momentum}%
\end{equation}
where $M_{\mathrm{L}}$ is as in \ref{Lagqq3}. In the following result, we
express the dynamics of our system in terms of the variables $\mathsf{p}_{z}$
and $\partial_{t}\mathsf{q}$.

\begin{theorem}
The second order Euler-Lagrange system (\ref{MTLEuler-Larange}) is equivalent
to the $2n-$first order system
\begin{equation}
\tilde{J}\partial_{z}V=\mathrm{i}\partial_{t}\tilde{M}V,\qquad V=\left[
\begin{array}
[c]{l}%
\mathsf{p}_{z}\\
\partial_{t}\mathsf{q}%
\end{array}
\right]  , \label{HamEq1}%
\end{equation}
where%
\begin{equation}
\tilde{J}=\left[
\begin{array}
[c]{cc}%
0 & \mathrm{i}\mathbf{1}\\
\mathrm{i}\mathbf{1} & 0
\end{array}
\right]  ,\qquad\tilde{M}=\tilde{M}\left(  z\right)  =\left[
\begin{array}
[c]{ll}%
-\eta\left(  z\right)  ^{-1} & \eta\left(  z\right)  ^{-1}\theta\\
\theta\eta\left(  z\right)  ^{-1} & -\alpha\left(  z\right)  -\theta
\eta\left(  z\right)  ^{-1}\theta
\end{array}
\right]  . \label{aeta4}%
\end{equation}

\end{theorem}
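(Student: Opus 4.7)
The plan is to apply the de Donder--Weyl Legendre transform described in Section~\ref{AppdeDonder} to the quadratic Lagrangian $\widetilde{\mathcal{L}}$ of (\ref{Lagqq3}) and then reorganise the resulting relations into the block form (\ref{HamEq1}). Since $\widetilde{\mathcal{L}}$ depends on $\mathsf{q}$ only through its derivatives, both Euler--Lagrange equations in (\ref{tranbe10}) collapse to the single conservation law $\partial_{t}\mathsf{p}_{t} + \partial_{z}\mathsf{p}_{z} = 0$, which is equivalent to (\ref{MTLEuler-Larange}).

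I would start by reading off from $\mathsf{p} = M_{\mathrm{L}}\mathsf{u}$ in (\ref{momentum}) the two explicit momenta $\mathsf{p}_{t} = \alpha\partial_{t}\mathsf{q} + \theta\partial_{z}\mathsf{q}$ and $\mathsf{p}_{z} = \theta\partial_{t}\mathsf{q} - \eta\partial_{z}\mathsf{q}$. To eliminate $\partial_{z}\mathsf{q}$ from the second relation I need $\eta(z)$ to be nonsingular; a Schur--complement computation on the block expression in (\ref{Lagqq2}) yields $\det\eta = -u_{0}^{2}\det(\varepsilon C^{-1}) \neq 0$. Solving gives $\partial_{z}\mathsf{q} = -\eta^{-1}\mathsf{p}_{z} + \eta^{-1}\theta\,\partial_{t}\mathsf{q}$, which is exactly the top block of $\tilde{M}V$; back--substituting into the formula for $\mathsf{p}_{t}$ produces $\mathsf{p}_{t} = -\theta\eta^{-1}\mathsf{p}_{z} + (\alpha + \theta\eta^{-1}\theta)\,\partial_{t}\mathsf{q}$, whose negative is the bottom block of $\tilde{M}V$.

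Next I would differentiate these two algebraic identities with respect to $t$. The top relation gives $\partial_{t}\partial_{z}\mathsf{q} = \partial_{z}\partial_{t}\mathsf{q}$ by elementary compatibility; the bottom one gives $-\partial_{t}\mathsf{p}_{t} = \partial_{z}\mathsf{p}_{z}$ by the Euler--Lagrange conservation law above. Thus $\partial_{t}\tilde{M}V = (\partial_{z}\partial_{t}\mathsf{q},\,\partial_{z}\mathsf{p}_{z})^{\mathrm{T}}$, which coincides with $\tilde{J}\partial_{z}V/\mathrm{i}$, since $\tilde{J}/\mathrm{i}$ is precisely the antidiagonal block swap. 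Multiplying by $\tilde{J}$ then yields (\ref{HamEq1}); the reverse implication is automatic because every step is reversible once $\eta$ is invertible.

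The main obstacle is really bookkeeping: matching the block order of $V = (\mathsf{p}_{z},\partial_{t}\mathsf{q})^{\mathrm{T}}$ with the row order of the components produced by the Legendre transform, an apparent row swap that the off--diagonal matrix $\tilde{J}$ is there to absorb. The only genuinely analytic input is the nonvanishing of $\det\eta$; it is the $-u_{0}^{2}$ beam contribution in the $(2,2)$ block of $\eta$ that prevents the otherwise positive MTL block from forcing singularity, and it explains why this Hamiltonian reformulation in the spatial variable relies on a nontrivial beam velocity $u_{0}\neq 0$.
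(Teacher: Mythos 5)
Your proposal is correct and follows essentially the same route as the paper: it is the de Donder--Weyl construction of Section~\ref{AppQuadLag} specialized to the quadratic Lagrangian, with the momenta (\ref{qulaq5}), elimination of $\partial_{z}\mathsf{q}$ via $\eta^{-1}$, and the conservation law $\partial_{t}\mathsf{p}_{t}+\partial_{z}\mathsf{p}_{z}=0$ doing the work. Your explicit Schur-complement check that $\det\eta=-u_{0}^{2}\det(\varepsilon C^{-1})\neq0$, and the observation that invertibility of $\eta$ hinges on $u_{0}\neq0$, is a worthwhile detail the paper leaves implicit.
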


\begin{proof}
The derivation of de Donder-Weyl version of Hamilton equations in the variable
$z$ for general quadratic Lagrangians is described in Section \ref{AppQuadLag}%
. In particular, for $\widetilde{\mathcal{L}}$ \ defined by (\ref{Lagqq1}) the
Hamiltonian $H_{\mathrm{DW}}\left(  \mathsf{p}\right)  $ does not depend
explicitly on $\mathsf{q}$ and%
\[
H_{\mathrm{DW}}\left(  \mathsf{p}\right)  =\widetilde{\mathcal{L}}\left(
\mathsf{u}\right)  ,
\]
where $\mathsf{p}$ is linked to $\mathsf{u}$ as in (\ref{momentum}).
Equivalently,%
\begin{equation}
H_{\mathrm{DW}}(\mathsf{p})=\frac{1}{2}\mathsf{p}^{\mathrm{T}}M_{\mathrm{L}%
}^{-1}\mathsf{p}. \label{Lagqq4}%
\end{equation}
According to (\ref{mhzne2}), (\ref{mhzne3}), in the variables $(\mathsf{p}%
_{z},\partial_{t}\mathsf{q})^{^{\mathrm{T}}}$, the corresponding first-order
system is precisely (\ref{HamEq1}), \ with $\ \tilde{J}$ \ and $\ \tilde{M}$
\ as in (\ref{aeta4}).
\end{proof}

We recall that $\tilde{J}$ \ and $\ \tilde{M}$ are, respectively,
antihermitian and hermitian,\textit{ i.e.}%
\[
\tilde{J}^{\ast}=-\tilde{J},\qquad\tilde{M}^{\ast}=\tilde{M}.
\]

Consider now\ a time harmonic solution of the form $\mathsf{q(}%
z,t)=\widehat{\mathsf{q}}\left(  z\right)  \mathrm{e}^{-\mathrm{i}\omega t}.$
In this case,
\begin{equation}
V(z,t)=\left[
\begin{array}
[c]{l}%
\mathsf{p}_{z}\\
\partial_{t}\mathsf{q}%
\end{array}
\right]  =\hat{V}\left(  z\right)  \mathrm{e}^{-\mathrm{i}\omega t},\qquad
\hat{V}(z)=\left[
\begin{array}
[c]{l}%
\widehat{\mathsf{p}}_{z}\left(  z\right) \\
-\mathrm{i}\omega\widehat{\mathsf{q}}\left(  z\right)
\end{array}
\right]  \label{aeta4a}%
\end{equation}
and the Hamiltonian equation (\ref{HamEq1}) is reduced to%
\begin{equation}
\tilde{J}\partial_{z}\hat{V}=\omega\tilde{M}\hat{V}. \label{aeta3}%
\end{equation}
Notice that the equation (\ref{aeta3}) for $\hat{V}$ is Hamiltonian according
to the definition in Section \ref{AppCanHam}, and the conservation law
(\ref{Jzhz10}) applies, yielding
\begin{gather}
\hat{V}^{\ast}\tilde{J}\hat{V}=\mathrm{i}\left[  \widehat{\mathsf{p}}%
_{z}^{\ast}\left(  -\mathrm{i}\omega\widehat{\mathsf{q}}\left(  z\right)
\right)  +\left(  -\mathrm{i}\omega\widehat{\mathsf{q}}\left(  z\right)
\right)  ^{\ast}\widehat{\mathsf{p}}_{z}\right]  =2\mathrm{i}%
\operatorname*{Re}\left\{  \left(  -\mathrm{i\omega}\widehat{\mathsf{q}%
}\left(  z\right)  \right)  ^{\ast}\widehat{\mathsf{p}}_{z}\right\}
\label{aeta5}\\
=-2\mathrm{i}\omega\operatorname{Im}\left\{  \left(  \widehat{\mathsf{q}%
}\left(  z\right)  \right)  ^{\ast}\left[  \theta\left(  -\mathrm{i}%
\omega\right)  \widehat{\mathsf{q}}\left(  z\right)  -\eta\left(  z\right)
\partial_{z}\widehat{\mathsf{q}}\left(  z\right)  \right]  \right\}
=\operatorname*{constant}.\nonumber
\end{gather}
Later on, in Section \ref{SubsEnergyExchange}, we will see how the above
conservation law relates to energy flux constancy.

\section{Amplification for the homogeneous case\label{AmplMTL-beam}}

This subsection is devoted to the analysis of the amplification regime
associated with a single exponentially growing mode in the case of an
homogeneous MTLB system, that is with parameters not varying with $z$. For
real $\omega$ we seek solutions of (\ref{MTLEuler-Larange}) in the form%
\begin{equation}
Q(z,t)=\widehat{Q}\mathrm{e}^{-\mathrm{i}(\omega t-kz)},\qquad
q(z,t)=\widehat{q}\mathrm{e}^{-\mathrm{i}(\omega t-kz)},
\label{PlaneWavesStructure}%
\end{equation}
where $\widehat{q}$ and $k$ are complex constants and $\widehat{Q}$ is a
complex vector. We show that, under certain conditions, there is a solution
with genuinely complex, that is, non real wave number $k$.

Let us recall that the eigenvelocities of the MTL are the roots of the
equation%
\[
\left\vert C^{-1}-v^{2}L\right\vert =0,
\]
\cite{Paul}, \cite{Nitsch}. \ Since both $L$ and $C$ are positive definite,
the symmetric $n\times n$ matrix $L^{-1/2}C^{-1}L^{-1/2}$ has positive
eigenvalues $0<\lambda_{1}\leq\lambda_{2}\leq...\leq\lambda_{n}$, where
multiple eigenvalues are repeated according to their multiplicity. Then the
MTL has characteristic velocities are precisely%
\begin{equation}
\pm v_{i},\text{ where }v_{i}^{2}=\lambda_{i}. \label{chav1}%
\end{equation}

\begin{theorem}
\label{TeoremAmplification}Let $u_{0},\xi>0.$ If either
\end{theorem}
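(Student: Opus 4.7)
The plan is to plug the plane-wave ansatz (\ref{PlaneWavesStructure}) into the Euler--Lagrange system (\ref{MTLEuler-Larange}), reduce the resulting algebraic problem to a single scalar dispersion relation with a transparent factored structure, and then count its real versus complex roots under the two stated hypotheses.

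First I would use the first block of (\ref{MTLEuler-Larange}) to express $\widehat Q$ in terms of $\widehat q$ via $\widehat Q = -(C^{-1} - v^2 L)^{-1} C^{-1}\widehat q\, B$ (with $v:=\omega/k$, valid whenever $v\neq \pm v_i$), substitute into the second block, and diagonalise using the spectral decomposition $L^{-1/2} C^{-1} L^{-1/2} = \sum_{i=1}^n v_i^2\, e_i e_i^{\mathrm{T}}$. Setting $\beta_i := e_i^{\mathrm{T}} L^{1/2} B$, the identity $A(A - v^2 I)^{-1} = I + v^2(A - v^2 I)^{-1}$ collapses the condition for a nontrivial $\widehat q$ into the canonical separated dispersion relation
\begin{equation}
\xi\,\frac{(v-u_0)^2}{v^2} \;=\; G(v),\qquad G(v) := \sum_{i=1}^n \frac{\beta_i^2\, v_i^2}{v^2 - v_i^2}, \label{plandisp}
\end{equation}
in which the left-hand side depends only on the beam parameters $(\xi,u_0)$ and the right-hand side only on the MTL spectral data $(v_i,\beta_i)$. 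Clearing denominators produces a polynomial $P_\xi(v)$ of degree $2n+2$ with leading coefficient $\xi>0$ and real coefficients; its complex roots occur in conjugate pairs and correspond to wave numbers $k=\omega/v$ with $\operatorname{Im} k \neq 0$, which is exactly the amplifying regime we need.

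In the case $u_0 \leq v_1$, I would argue that at most $2n$ of the $2n+2$ roots can be real. On the real interval $|v|<v_1$ every summand in $G$ is strictly negative while $f(v):=\xi(v-u_0)^2/v^2$ is nonnegative, so $P_\xi$ has no real root there. On each interval $(v_i,v_{i+1})$ and on $(v_n,\infty)$ one has $v > v_1 \geq u_0$, so $f$ is strictly increasing while $G$ is strictly decreasing (from $+\infty$ to $-\infty$, or to $0^+$ in the unbounded case); hence $f-G$ is strictly increasing and contributes exactly one real root per interval, yielding $n$ real roots in $v>v_1$. A parallel analysis on $v<-v_1$, exploiting the strict convexity $f''(v)>0$ for $v<0$ together with the sign pattern of $P_\xi$ at the points $-v_i$ and at $-\infty$ (via IVT), limits the real-root total there also to $n$. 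Consequently $P_\xi$ has at most $2n$ real roots and therefore at least two genuinely complex conjugate roots, furnishing the amplifying mode. In the complementary case in which $u_0>v_1$ but $\xi$ is sufficiently small, I would carry out a singular perturbation of $P_\xi$ about $\xi=0$: the limiting polynomial $P_0(v)=-v^2\sum_i \beta_i^2 v_i^2 \prod_{j\neq i}(v^2-v_j^2)$ has degree $2n$ with a double root at $v=0$ and the $2(n-1)$ real zeros of $G$ interlaced between the $\pm v_i$; the two ``missing'' roots escape to $\pm\infty$ at rate $\sqrt{(\sum_i\beta_i^2 v_i^2)/\xi}$, while a Taylor expansion at the origin shows the two perturbed roots satisfy $v^2 \approx -\xi u_0^2/\sum_i \beta_i^2<0$, so they are purely imaginary. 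By continuity the pair remains complex for all sufficiently small $\xi>0$.

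The hardest point is the upper bound on real roots in the region $v<-v_1$ in the first case: here $f$ and $G$ are both monotonically increasing, so the monotonicity argument that worked for $v>v_1$ fails and elementary intermediate-value reasoning only yields a lower bound of $n$. This has to be controlled by combining strict convexity of $f$ with the explicit asymptotics of $G'$ near its poles, or equivalently via a Krein-signature argument for the Hamiltonian matrix $\tilde J\,\tilde M$ of Section~\ref{SectHamiltonian}, showing that the collisions of imaginary eigenvalues that would add two extra real-$v$ roots are forbidden precisely when $u_0 \leq v_1$. Degenerate configurations --- repeated characteristic velocities, vanishing projections $\beta_i$, the boundary case $u_0=v_1$, and possible resonances $u_0=v_i$ --- are then disposed of by a standard limiting/perturbative argument.
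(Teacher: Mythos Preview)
Your overall strategy---reduce to a separated scalar dispersion relation and count real roots via monotonicity on the intervals between poles---matches the paper's. However, the specific normalization you chose, $\xi(v-u_0)^2/v^2 = G(v)$, is precisely what manufactures the difficulty you flag for $v<-v_1$. If you do not divide through by $v^2$ you recover the paper's form $-\xi(v-u_0)^2 = R(v)$ with $R(v)=\sum_i \widetilde D_i^{\,2}/(v_i^2-v^2)-d$ (equivalently $R(v)=-v^2 G(v)$, so the equations have the same roots). In that form the downward parabola $p(v)=-\xi(v-u_0)^2$ is \emph{increasing} for all $v<u_0$, while $R'(v)=\sum_i 2v\widetilde D_i^{\,2}/(v_i^2-v^2)^2<0$ for all $v<0$; hence $p-R$ is strictly monotone on every interval $(-v_{i+1},-v_i)$ and on $(-\infty,-v_n)$, giving exactly one root per interval by the intermediate value theorem alone. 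No convexity estimate and certainly no Krein-signature machinery is needed. In your version both $f$ and $G$ are increasing for $v<0$, which is why the elementary argument breaks down; the ``hardest point'' is an artifact of the normalization, not a genuine obstacle.

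Your perturbative treatment of case (ii) is correct and is a genuinely different route from the paper's. The paper argues geometrically that for small $\xi$ the flat parabola meets each non-central branch of the graph of $R$ exactly once, whereas you track the roots of $P_\xi$ analytically as $\xi\to 0$: two escape to $\pm\infty$ along $v^2\sim(\sum_i\beta_i^2 v_i^2)/\xi$, $2(n-1)$ stay real near the interlaced zeros of $G$, and the pair near the origin satisfies $v^2\approx -\xi u_0^2/\sum_i\beta_i^2<0$. Both arguments are valid; yours yields the leading-order location of the complex pair for free, while the paper's picture more directly produces quantitative thresholds such as (\ref{xismall})--(\ref{xismallbis}). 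Your dismissal of the degenerate configurations by a limiting argument is in the same spirit as, though less explicit than, the paper's case-by-case bookkeeping in Section~\ref{AppAmplification}.
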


\begin{enumerate}
\item[(i)] $0<u_{0}\leq v_{1}$ \textit{or}

\item[(ii)] $v_{1}<u_{0}$ \textit{and} $\xi>0$ \textit{is sufficiently small,}

\noindent\textit{then for each real }$\omega$\textit{ there are exactly two
genuinely complex conjugate values }$k_{0}$\textit{ and }$k_{0}^{\ast}%
$\textit{ such that (\ref{PlaneWavesStructure}) is a non-trivial solution of
equations (\ref{MTLEuler-Larange}).}
\end{enumerate}

Hence, assuming $\operatorname*{Im}k_{0}<0$ we have the associated solution
\begin{equation}
Q(z,t)=A(z)\mathrm{e}^{-\mathrm{i}\omega t}\mathrm{e}^{-(\operatorname*{Im}%
k_{0})z},\qquad q(z,t)=B(z)\mathrm{e}^{-\mathrm{i}\omega t}\mathrm{e}%
^{-(\operatorname*{Im}k_{0})z},\qquad A(z),B(z)\neq0, \label{chav2}%
\end{equation}
that grows exponentially in the $+z$ direction, whereas the solution
associated with $k_{0}^{\ast}$ decays exponentially.

We sketch the proof, deferring the mathematical details to section
\ref{AppAmplification}. Substituting the expressions
(\ref{PlaneWavesStructure}) into the system (\ref{MTLEuler-Larange}), we
obtain the following linear algebraic system of $n+1$ equations for
$\widehat{Q},\widehat{q}$:%
\begin{equation}
\left[
\begin{array}
[c]{cc}%
-v^{2}L+C^{-1} & D\\
D^{T} & d-\xi(v-u_{0})^{2}%
\end{array}
\right]  \left[
\begin{array}
[c]{c}%
\widehat{Q}\\
\widehat{q}%
\end{array}
\right]  =\left[
\begin{array}
[c]{c}%
0\\
0
\end{array}
\right]  ,\quad\text{where }v=\frac{\omega}{k} \label{Systemforv}%
\end{equation}
and%
\begin{equation}
D=(D_{i}),\qquad D_{i}=%
{\displaystyle\sum\limits_{j}}
(C^{-1})_{ij},\qquad d=%
{\displaystyle\sum\limits_{i}}
D_{i}. \label{ddcj1}%
\end{equation}
For the sake of brevity, we denote%
\begin{equation}
A(v)=-v^{2}L+C^{-1},\qquad\widetilde{A}(v)=\left[
\begin{array}
[c]{cc}%
A(v) & D\\
D^{T} & d-\xi(v-u_{0})^{2}%
\end{array}
\right]  . \label{ddcj2}%
\end{equation}
The system (\ref{Systemforv}) has nontrivial solutions if and only if
$\ \left\vert \widetilde{A}(v)\right\vert =0$. The corresponding polynomial
equation of degree $2n+2$ is the dispersion relation of our system written in
terms of the velocity $v$. In Section \ref{AppAmplification} we prove in full
detail that the equation $\left\vert \widetilde{A}(v)\right\vert =0$ has
exactly one pair of complex conjugate solutions if either (i) or (ii) holds.
Here we outline the main ideas of the proof.

First of all, if $\left\vert A(v)\right\vert \neq0$, the following
\emph{canonical factorization }takes place%
\begin{equation}
\left\vert \widetilde{A}(v)\right\vert =\left\vert A(v)\right\vert \left[
d-\xi(v-u_{0})^{2}-D^{T}(A(v))^{-1}D\right]  . \label{CanFact}%
\end{equation}
The values of $v$ such that $\left\vert A(v)\right\vert =0$ are precisely the
eigenvelocities $\pm v_{i}$ of the waveguide. Therefore, the roots of
$\left\vert \widetilde{A}(v)\right\vert =0$ different from $\pm v_{i}$,
$i=1,2,...n$ \ are the roots of the equation%
\begin{equation}
-\xi(v-u_{0})^{2}=R(v),\text{ where }R(v)=D^{T}(A(v))^{-1}D-d,
\label{DispRelGen}%
\end{equation}
in which the two components of the system enter separately. The rational
function $R(v)$ in (\ref{DispRelGen}) contains the relevant information about
the MTL whereas the left hand side depends only on the beam parameters. In
what follows we refer to function $R(v)$ as \emph{MTL characteristic
function}. It can be explicitly written in terms of the characteristic
velocities:%
\begin{equation}
R(v)=%
{\displaystyle\sum\limits_{i=1}^{n}}
\frac{\widetilde{D}_{i}^{2}}{v_{i}^{2}-v^{2}}-d, \label{R}%
\end{equation}
where $\widetilde{D}_{i}$ are constants related to $D_{i}$, see Section
\ref{AppAmplification}. The graph of the MTL characteristic function $R$ is
symmetric with respect to the vertical axis and is made up of branches, \ a
central one with the minimum at $(0,0)$, a number of increasing \ branches for
$v>0$ and decreasing for $v<0$. One can readily see that $\lim_{v\rightarrow
\infty}R(v)=-d$. In addition to that, the graph of $R$ has vertical asymptotes
at $v=\pm v_{i}$ if at least one of the associated $\widetilde{D}_{j}$ does
not vanish. The number of the asymptotes varies between $2$ and $2n$. The
left-hand side in (\ref{DispRelGen}) is a parabola with vertex at $(u_{0},0)$.

Figure \ref{FigAmpl} shows the graph of $R$ and that of the parabola
$\ y=-\xi(v-u_{0})^{2}$ with the following inductance and capacity matrices%
\[
L=\left[
\begin{array}
[c]{ccc}%
4 & 1 & 1/2\\
1 & 5 & 2\\
1/2 & 2 & 2
\end{array}
\right]  ;\qquad C=\left[
\begin{array}
[c]{ccc}%
2 & 1 & 2\\
1 & 4 & 0\\
2 & 0 & 1
\end{array}
\right]  .
\]
The approximate \ values of the characteristic velocities are: $v_{1}=0.18357$
and $v_{2}=0.42383$. In Figure \ref{FigAmpl} (a), $u_{0}=0.18$ and $\xi=2$; in
Figure \ref{FigAmpl} (b), $u_{0}=0.8$ and $\xi=18$.

It is important to observe that the parabola always intersects all the
branches of $R$ except for the central one. For small $\xi$ each branch is
intersected only once, and consequently the number of real roots of the
equation (\ref{DispRelGen}) is exactly the number of asymptotes, as in Figure
\ref{FigAmpl} (a) above. For large $\xi$ however the number of real roots can
exceed the number of asymptotes as in Figure \ref{FigAmpl} (b), where a large
value of $\xi$ produces three points of intersection with the far right branch
of the graph of $R$. Moreover, if $\ u_{0}\leq v_{1}$ (geometrically, the
vertex of the parabola lies between the vertical axis and the first
asymptote), then clearly the number of real roots equals the number of
asymptotes irrespective of the value of $\xi>0$. These facts can be proved
rigorously based on monotonicity properties, but their geometric
interpretation is so transparent that a quick look at Figure \ref{FigAmpl} is
quite convincing.%
\begin{figure}[ptb]%
\centering
\ifcase\msipdfoutput
\includegraphics[
height=2.4794in,
width=6.4264in
]%
{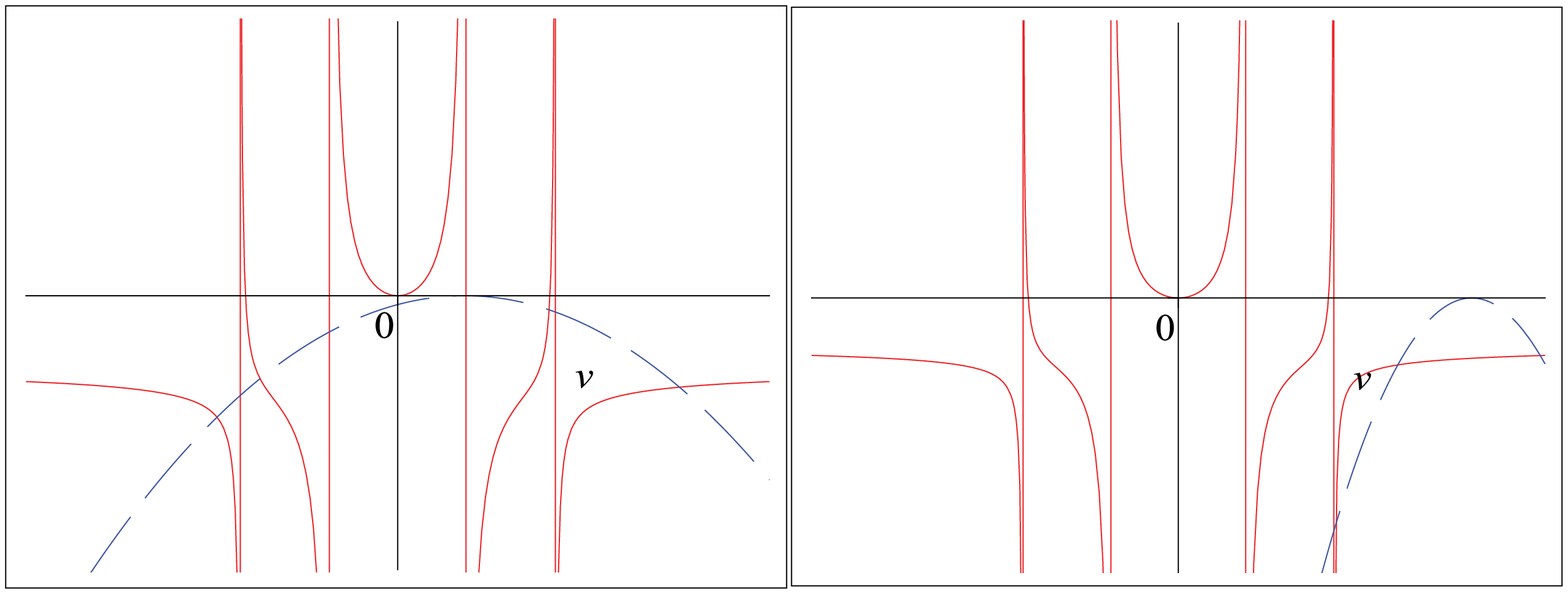}%
\else
\includegraphics[
height=2.4794in,
width=6.4264in
]%
{D:/alatex/Preparation/Reyes/lagsys/arxiv2/graphics/Fig12Amp__2.pdf}%
\fi
\caption{(a) $u_{0}<v_{1}$: the parabola $y=-\xi(v-u_{0})^{2}$ (dashed line)
intersects each branch of $y=R(v)$ (where $R$ is as in (\ref{R})) just once.
Four real roots. \ (b) $u_{0}>v_{1}:$ for large $\xi,$ the parabola intersects
one of the branches of $y=R(v)$ three times. Six real roots.}%
\label{FigAmpl}%
\end{figure}
\ In the generic case, the roots of the dispersion relation $\left\vert
\widetilde{A}(v)\right\vert =0$ are exactly those of equation
(\ref{DispRelGen}), but in general some of the $v_{i}$ can also be roots.
Whenever some $v_{i}$ is a real root (maybe multiple) of $\left\vert
\widetilde{A}(v)\right\vert =0$, the number of asymptotes in the graph of $R$
is reduced by the corresponding amount. The same is true of the number of real
roots of (\ref{DispRelGen}) under either (i) or (ii). This fact follows from
factorization (\ref{CanFact}). The main point is that in all cases the total
number of real roots of $\left\vert \widetilde{A}(v)\right\vert =0$ \ is $2n$
\ if either condition (i) or (ii) above holds. We thus conclude that under (i)
or (ii) there is necessarily a \emph{unique} pair of complex conjugate roots.
The detailed proof of these facts is provided in Section
\ref{AppAmplification}.

It is not difficult to estimate how small $\xi$ should be in condition (ii)
above. In the case $n=1$ and $u_{0}>v_{1}$ there is a precise criterion on
$\xi$, namely amplification takes place if%
\begin{equation}
\xi<\xi_{0}:=\frac{L\gamma^{2}}{1-\gamma^{2/3}};\qquad\gamma=\frac{v_{1}%
}{u_{0}}=\frac{1}{u_{0}\sqrt{LC}}. \label{xismall}%
\end{equation}
A simple sufficient condition can be also given for $n>1$. For example, one
can just impose that the left branch of the parabola at $v=0$ be flatter than
the flattest point of the graph of $R$ on $(v_{1},u_{0}).$This leads to%
\begin{equation}
\xi<\widetilde{\xi}_{0}:=\frac{\min_{v\in(v_{1},u_{0})}R^{\prime}(v)}{2u_{0}}.
\label{xismallbis}%
\end{equation}
Observe that both $\xi_{0}$ and $\widetilde{\xi}_{0}$ vanish as $u_{0}%
\rightarrow\infty$, as expected. The value of $\widetilde{\xi}_{0}$ is not
sharp but we did not make an effort to find one.

Thus, under the above assumptions the system exhibits spatially exponentially
growing, as well as exponentially decaying \ time harmonic regimes. Using the
terminology of dynamical systems, if we restrict to time harmonic evolutions
$z\rightarrow X,$ where $X=\left\{  e^{-i\omega t}\widetilde{Q},\quad
\widetilde{Q}\in%
\mathbb{C}
^{n+1}\right\}  ,$ there is a subspace of data \ inducing an individual
exponential dichotomy for $X,$ both for forward and backward in $z$
evolutions, \cite[XIII]{Ha}. The subspace is determined by the solutions
$\widetilde{Q}=(\widehat{Q},\widehat{q})$ of the system (\ref{Systemforv})
with $v$ being the corresponding complex solution of $\left\vert
\widetilde{A}(v)\right\vert =0.$

\subsection{Asymptotic behavior of the amplification factor as $\xi
\rightarrow0$ and as $\xi\rightarrow\infty$.\label{BehAmplificationFactor}}

Let $k_{0}$ denote the complex root with $\operatorname*{Im}k_{0}<0$ whose
existence we proved in the previous section under appropriate conditions. It
is interesting to study the asymptotics of the "amplification factor"
$-\operatorname*{Im}k_{0}$ as the beam parameter $\xi\rightarrow0,$ as well as
its behavior when $\xi\rightarrow\infty.$ A careful analysis shows (see
Section \ref{AppAmplification}) that, if we denote by $v_{0}=\omega/k_{0}$ the
corresponding velocity with $\operatorname*{Im}v_{0}>0$, then
\[
\operatorname*{Im}v_{0}=\sqrt{K^{\prime}\xi+o(\xi)}=\sqrt{K^{\prime}}\sqrt
{\xi}+o(\sqrt{\xi})\text{ \ as }\xi\rightarrow0,
\]
where $K^{\prime}$ depends only on $L,C,u_{0}.$ As a consequence,%
\begin{equation}
-\operatorname*{Im}k_{0}=\frac{\operatorname*{Im}v_{0}}{\text{\ \ }\left\vert
v_{0}\right\vert ^{2}}\sim\frac{K^{\prime\prime}}{\sqrt{\xi}\text{\ }%
}\ \text{\ as \ }\xi\rightarrow0;\ \ K^{\prime\prime}>0. \label{AmpFactorAt0}%
\end{equation}
The conclusion is that, in this model, the amplification factor can be
indefinitely improved by reducing $\xi.$ According to (\ref{tranbe8}), this
amounts to increasing $\sigma\rho_{0},$ the linear electron density of the beam.

On the other hand, the limit $\xi\rightarrow\infty$ makes sense only if
$0<u_{0}\leq v_{1}$. In the case of one line and $u_{0}=v_{1}$, it can be
proved that%
\begin{equation}
-\operatorname*{Im}k_{0}=\frac{\operatorname*{Im}v_{0}}{\text{\ \ }\left\vert
v_{0}\right\vert ^{2}}\sim\frac{K^{\prime\prime\prime}}{\sqrt[3]{\xi}%
}\ \text{\ as \ }\xi\rightarrow\infty;\ \ K^{\prime\prime\prime}>0,
\label{AmpFactorAtInfty}%
\end{equation}
see Section \ref{AppAmplification}.

The regime considered by Pierce corresponds to the latter situation, in which
there are two real solutions (for $v$) close to $\pm u_{0,}$ and two complex
conjugate with real part close to $u_{0}$; see Section \ref{PierceRev copy(1)}%
. The situation is similar for $u_{0}<v_{1}$, but in this case
$-\operatorname*{Im}k_{0}$ has a finite positive limit as $\xi\rightarrow
\infty$.

\section{Energy conservation and transfer\label{EnergyConsEx}}

The conservation laws for our system can be obtained via Noether theorem,
\cite[38.2-3]{GelFom}, \cite[13.7]{Gold}.

\begin{theorem}
Conservation of energy \ for the system (\ref{MTLEuler-Larange}) holds in the
form%
\begin{equation}
\partial_{t}H+\partial_{z}S=0, \label{encoq2}%
\end{equation}
where the total energy $H$ and the total energy flux $S$ are given by
\begin{equation}
H=\frac{1}{2}\partial_{t}\mathsf{q}^{\mathrm{T}}\alpha\partial_{t}%
\mathsf{q}+\frac{1}{2}\partial_{z}\mathsf{q}^{\mathrm{T}}\eta\partial
_{z}\mathsf{q}; \label{encoq4a}%
\end{equation}%
\begin{equation}
S=\partial_{t}\mathsf{q}^{\mathrm{T}}\theta\partial_{t}\mathsf{q}-\partial
_{t}\mathsf{q}^{\mathrm{T}}\eta\partial_{z}\mathsf{q}=\partial_{t}%
\mathsf{q}^{\mathrm{T}}\left(  \theta\partial_{t}\mathsf{q}-\eta\partial
_{z}\mathsf{q}\right)  =\partial_{t}\mathsf{q}^{\mathrm{T}}\mathsf{p}_{z}.
\label{encoq4b}%
\end{equation}

\end{theorem}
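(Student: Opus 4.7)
The natural plan is to invoke Noether's theorem for time-translation invariance. Since $\widetilde{\mathcal{L}}$ in (\ref{Lagqq1}) depends on $z$ (through $L(z),C(z)$) but not explicitly on $t$, the corresponding canonical energy-momentum current is conserved, and this is precisely the content of (\ref{encoq2}). The general identity I would exploit is the standard one: for any Lagrangian $\mathcal{L}(z,\mathsf{q},\partial_{t}\mathsf{q},\partial_{z}\mathsf{q})$ with $\partial_{t}\mathcal{L}=0$, on solutions of the Euler–Lagrange system (\ref{tranbe10}) one has
\[
\partial_{t}\!\left(\tfrac{\partial\mathcal{L}}{\partial(\partial_{t}\mathsf{q})}\!\cdot\partial_{t}\mathsf{q}-\mathcal{L}\right)+\partial_{z}\!\left(\tfrac{\partial\mathcal{L}}{\partial(\partial_{z}\mathsf{q})}\!\cdot\partial_{t}\mathsf{q}\right)=0,
\]
which is obtained by writing $d\mathcal{L}/dt$ using the chain rule, substituting $\partial\mathcal{L}/\partial\mathsf{q}$ from the E--L equations, and collecting total $\partial_{t}$ and $\partial_{z}$ derivatives. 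This identifies $H$ and $S$ as the standard Noether energy density and energy flux.

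I would then specialize to $\widetilde{\mathcal{L}}=\tfrac{1}{2}\partial_{t}\mathsf{q}^{\mathrm{T}}\alpha\,\partial_{t}\mathsf{q}+\partial_{t}\mathsf{q}^{\mathrm{T}}\theta\,\partial_{z}\mathsf{q}-\tfrac{1}{2}\partial_{z}\mathsf{q}^{\mathrm{T}}\eta\,\partial_{z}\mathsf{q}$. The canonical momenta read
\[
\tfrac{\partial\widetilde{\mathcal{L}}}{\partial(\partial_{t}\mathsf{q})}=\alpha\,\partial_{t}\mathsf{q}+\theta\,\partial_{z}\mathsf{q},\qquad \tfrac{\partial\widetilde{\mathcal{L}}}{\partial(\partial_{z}\mathsf{q})}=\theta^{\mathrm{T}}\partial_{t}\mathsf{q}-\eta\,\partial_{z}\mathsf{q},
\]
where I use symmetry of $\alpha$ and $\eta$. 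Plugging the first into $H=\partial_{t}\mathsf{q}^{\mathrm{T}}\cdot\tfrac{\partial\widetilde{\mathcal{L}}}{\partial(\partial_{t}\mathsf{q})}-\widetilde{\mathcal{L}}$ produces, after cancellation of the cross term $\partial_{t}\mathsf{q}^{\mathrm{T}}\theta\partial_{z}\mathsf{q}$, exactly (\ref{encoq4a}). For the flux, I use that the $\theta$ in (\ref{Lagqq2}) is diagonal, hence symmetric, so $\theta^{\mathrm{T}}=\theta$ and the general Noether expression $S=\partial_{t}\mathsf{q}^{\mathrm{T}}(\theta^{\mathrm{T}}\partial_{t}\mathsf{q}-\eta\,\partial_{z}\mathsf{q})$ reduces to (\ref{encoq4b}); the final equality $S=\partial_{t}\mathsf{q}^{\mathrm{T}}\mathsf{p}_{z}$ is just the definition of the component $\mathsf{p}_{z}$ of the momentum vector introduced in (\ref{momentum}).

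As an alternative (and a self-contained check) I would verify (\ref{encoq2}) by direct differentiation, computing $\partial_{t}H$ and $\partial_{z}S$ from (\ref{encoq4a})–(\ref{encoq4b}), using $\partial_{t}\alpha=\partial_{t}\eta=0$, and substituting the Euler–Lagrange system (\ref{MTLEuler-Larange}) to replace the second-order time derivatives. The two computations agree, and the matrix-level manipulation mirrors what is done for the isolated-beam case in (\ref{consenergy}).

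There is no real obstacle here; the only points requiring mild care are (i) keeping track of the symmetry of $\alpha,\eta$ and of $\theta$, so that the cross-term cancellations leading from the Noether current to (\ref{encoq4a})–(\ref{encoq4b}) go through, and (ii) noting that the explicit $z$-dependence of $L(z),C(z)$ (hence of $\alpha,\eta$) is harmless for the time-translation argument since differentiation with respect to $t$ does not see it. This is why a \emph{momentum} (i.e.\ $z$-translation) conservation law would \emph{not} follow for nonhomogeneous MTLs, while energy conservation does.
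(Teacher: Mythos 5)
Your proposal is correct and follows essentially the same route as the paper: both invoke the field version of Noether's theorem for time-translation invariance (noting that the Lagrangian depends on $z$ but not explicitly on $t$), write down the canonical energy density $H=\frac{\partial\mathcal{L}}{\partial(\partial_{t}\mathsf{q})}\partial_{t}\mathsf{q}-\mathcal{L}$ and flux $S=\frac{\partial\mathcal{L}}{\partial(\partial_{z}\mathsf{q})}\partial_{t}\mathsf{q}$, and then specialize to the quadratic Lagrangian to obtain (\ref{encoq4a})--(\ref{encoq4b}), identifying $S$ with $\partial_{t}\mathsf{q}^{\mathrm{T}}\mathsf{p}_{z}$ via the momentum definition (\ref{qulaq5}). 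Your explicit tracking of the cancellation of the cross term and of the symmetry of $\alpha$, $\theta$, $\eta$ merely spells out what the paper calls a ``straightforward computation.''
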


\begin{proof}
The Lagrangian density $\mathcal{L}$ does not depend explicitly on $t$ (this
is a consequence of the closedness of the system), therefore by the fields
version of Noether theorem, \cite[38.2-3]{GelFom}, \cite[13.7]{Gold},
conservation of energy (\ref{encoq2}) holds, with energy and the energy flux
densities given by%
\begin{equation}
H=%
{\displaystyle\sum_{j}}
\frac{\partial\mathcal{L}}{\partial(\partial_{t}\mathsf{q}_{j})}\partial
_{t}\mathsf{q}_{j}-\mathcal{L},\quad S=%
{\displaystyle\sum_{j}}
\frac{\partial\mathcal{L}}{\partial(\partial_{z}\mathsf{q}_{j})}\partial
_{t}\mathsf{q}_{j}. \label{encoq1}%
\end{equation}
A straightforward computation yields the expressions of $H$ and $S$ given in
(\ref{encoq4a}), (\ref{encoq4b}). In (\ref{encoq4b}), $\mathsf{p}_{z}$ is the
canonical momentum defined in (\ref{qulaq5}), Section \ref{AppQuadLag}.
\end{proof}

Consider now a real time harmonic eigenmode%
\begin{equation}
\mathsf{q}\left(  t,z\right)  =\operatorname{Re}\left\{  \mathsf{\hat{q}%
}\left(  z\right)  \mathrm{e}^{-\mathrm{i}\omega t}\right\}  ,\text{ with a
complex valued }\mathsf{\hat{q}}\left(  z\right)  , \label{encoq4c}%
\end{equation}
which solves the Euler-Lagrange equation (\ref{qulaq4}). Notice that
$\left\langle \mathsf{q}\right\rangle (z)=0$, where $\left\langle
\cdot\right\rangle $ is the time average operation defined in (\ref{encoq5}).
However, if%
\begin{equation}
a\left(  t\right)  =\operatorname{Re}\left\{  \hat{a}\mathrm{e}^{-\mathrm{i}%
\omega t}\right\}  ,\text{ with a complex valued }\hat{a} \label{encoq5a}%
\end{equation}
and $b\left(  t\right)  $ is defined by a similar formula then we have%
\begin{equation}
\left\langle ab\right\rangle =\frac{1}{2}\operatorname{Re}\left\{  \hat
{a}^{\ast}\hat{b}\right\}  . \label{encoq5b}%
\end{equation}
Applying the averaging operation $\left\langle \cdot\right\rangle $ to the
conservation law (\ref{encoq2}) for a time harmonic eigenmode $q$ as in
(\ref{encoq4c}) and using (\ref{prop1ave}) and (\ref{prop2ave}), we obtain%
\begin{equation}
\partial_{z}\left\langle S\right\rangle \left(  z\right)  =0\text{ implying
}\left\langle S\right\rangle \left(  z\right)  =\operatorname*{constant}.
\label{encoq6}%
\end{equation}
On the other hand, $S$ defined by (\ref{encoq4b}) can be written as the
product of two real time harmonic functions:%
\begin{equation}
S(t,z)=\operatorname{Re}\left\{  \widehat{A}(z)\mathrm{e}^{-\mathrm{i}\omega
t}\right\}  \operatorname{Re}\left\{  \widehat{B}(z)\mathrm{e}^{-\mathrm{i}%
\omega t}\right\}  , \label{encoq6b}%
\end{equation}
where%
\begin{equation}
\widehat{A}(z)=-i\omega\mathsf{\hat{q}}\left(  z\right)  ;\qquad
\widehat{B}(z)=-i\omega\theta\mathsf{\hat{q}}(z)-\eta\partial_{z}%
\mathsf{\hat{q}}(z). \label{encoq6c}%
\end{equation}
Using (\ref{encoq5b}) we obtain the energy flux conservation law in the form
\begin{equation}
\left\langle S\right\rangle \left(  z\right)  =\frac{1}{2}\operatorname{Re}%
\left\{  \left\langle \left(  -i\omega\mathsf{\hat{q}}\right)  ^{\ast}\left(
-i\omega\theta\mathsf{\hat{q}}-\eta\partial_{z}\mathsf{\hat{q}}\right)
\right\rangle \right\}  =\frac{1}{2}\operatorname{Re}\left\{  \left\langle
\left(  -i\omega\mathsf{\hat{q}}\right)  ^{\ast}\mathsf{\hat{p}}%
_{z}\right\rangle \right\}  =\operatorname*{constant}. \label{encoq6a}%
\end{equation}
Constancy of $\left\langle S\right\rangle \left(  z\right)  $ is related to
the constancy of the symplectic square of the solution of the Hamiltonian
system satisfied by%
\begin{equation}
\widehat{V}(z)=\left[
\begin{array}
[c]{l}%
\widehat{\mathsf{p}}_{z}\\
-i\omega\widehat{\mathsf{q}}%
\end{array}
\right]  , \label{encoq7}%
\end{equation}
see formula (\ref{aeta5}). Indeed,%
\begin{equation}
V^{\ast}\tilde{J}V=2i\operatorname*{Re}\left\{  \left(  -i\omega
\mathsf{\hat{q}}\left(  z\right)  \right)  ^{\ast}\left[  -\mathrm{i}%
\omega\theta\mathsf{\hat{q}}\left(  z\right)  -\eta\partial_{z}\mathsf{\hat
{q}}\left(  z\right)  \right]  \right\}  =\operatorname*{constant}%
=4\mathrm{i}\left\langle S\right\rangle \left(  z\right)  . \label{encoq8}%
\end{equation}

\subsection{Energy exchange between subsystems\label{SubsEnergyExchange}}

This section deals with the balance of energy between the two subsystems
making up our system: the beam and the MTL. As already pointed out by Pierce
in \cite[p. 635]{Pier51} an amplification regime assumes that the energy
extracted from the beam is stored in the EM field. In other words, the net
flux of energy must have a definite sign.\ Pierce tacitly considers this
condition as an additional one to be imposed on top of other conditions
ensuring the existence of an exponentially growing solution. We show below
that in fact this condition is automatically satisfied for exponentially
growing solutions.

When computing the energy flux between the beam and the MTL we take advantage
of our Lagrangian setting. This setting allows for a systematic derivation of
expressions for energies and fluxes satisfying \textit{a priori} the
fundamental conservation laws. We proceed using the results from Section
\ref{AppEnergyExchange} for a more general coupled system.

First, we should split the Lagrangian into two parts $\mathcal{L=L}%
_{1}\mathcal{+L}_{2}$ corresponding to the MTL and the beam. Namely,%
\begin{align}
\mathcal{L}_{1}(Q_{t},Q_{;z})  &  =\frac{1}{2}\left(  \partial_{t}%
Q,L\partial_{t}Q\right)  ^{2}-\frac{1}{2}\left(  \partial_{_{;z}}%
Q,C^{-1}\partial_{_{;z}}Q\right)  ^{2};\label{LLQq1}\\
\mathcal{L}_{2}(q_{t},q_{z})  &  =\frac{\xi}{2}(\partial_{t}q+u_{0}%
\partial_{z}q)^{2},\text{ where }\partial_{;z}Q=\partial_{z}Q+B\partial
_{z}q.\nonumber
\end{align}
The above Lagrangian has the structure of (\ref{gblag1}), with
$\ B=(1,1...1)^{\mathrm{T}}$. Our first result concerning energy flows is
contained in the following

\begin{theorem}
The instantaneous power by unit length supplied by the beam to the MTL is
given by
\begin{equation}
P_{\mathrm{B}\rightarrow\mathrm{MTL}}=\partial_{t}\left[  \frac{1}{2}\left(
CV,V\right)  +\frac{1}{2}\left(  LI,I\right)  \right]  +\partial_{z}(I,V)
\label{exprpower1}%
\end{equation}
where, as usual, $(,)$ stands for the scalar product.
\end{theorem}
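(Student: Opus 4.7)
My plan is to read $H_{\mathrm{MTL}} := \tfrac{1}{2}(LI,I) + \tfrac{1}{2}(CV,V)$ as the energy density naturally associated with the MTL subsystem (the sum of magnetic energy stored on the inductors and electrostatic energy stored on the capacitors) and $(I,V)$ as its natural Poynting-type flux. If the MTL were isolated, the standard (source-free) telegraph equations would give $\partial_{t}H_{\mathrm{MTL}} + \partial_{z}(I,V) = 0$; in the coupled system this identity acquires a source, and by total conservation (\ref{encoq2}) that source has to be the power transferred from the only other subsystem, namely the beam. Proving (\ref{exprpower1}) therefore reduces to two tasks: (a) check that $H_{\mathrm{MTL}}$ and $(I,V)$ are indeed the MTL piece of the Noether energy-momentum arising from the split $\mathcal{L}=\mathcal{L}_{1}+\mathcal{L}_{2}$ in (\ref{LLQq1}); (b) evaluate $\partial_{t}H_{\mathrm{MTL}}+\partial_{z}(I,V)$ on solutions of the coupled system and identify it with $P_{\mathrm{B}\to\mathrm{MTL}}$.

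For (a), I substitute $I=\partial_{t}Q$ and the vector generalization $V=-C^{-1}(\partial_{z}Q+B\partial_{z}q)$ of (\ref{tranbe6}) into $\mathcal{L}_{1}$; this recasts it as $\tfrac{1}{2}(LI,I)-\tfrac{1}{2}(CV,V)$. Applying the Noether recipe (\ref{encoq1}) to this purely $Q$-kinetic Lagrangian produces exactly $H_{\mathrm{MTL}}$ and flux $(I,V)$, which is the cleanest way to isolate the MTL contribution using the general framework developed in Section \ref{AppEnergyExchange}. The entire coupling then lives in the dependence of $V$ on $\partial_{z}q$, and this is precisely what will give a nonzero right-hand side in (\ref{exprpower1}) once the coupled equations of motion are invoked.

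For (b), I differentiate $H_{\mathrm{MTL}}$ in time, use the symmetry of $L$ and $C$, and replace the time derivatives by spatial ones via the modified telegraph equations (\ref{TelegMulti}):
\begin{equation*}
\partial_{t}H_{\mathrm{MTL}} = (I,L\partial_{t}I) + (V,C\partial_{t}V) = -(I,\partial_{z}V) - (V,\partial_{z}I) - (V,B)\,\partial_{z}I_{\mathrm{b}}.
\end{equation*}
Adding $\partial_{z}(I,V) = (I,\partial_{z}V) + (V,\partial_{z}I)$ cancels the first two terms and leaves
\begin{equation*}
\partial_{t}H_{\mathrm{MTL}} + \partial_{z}(I,V) = -(V,B)\,\partial_{z}I_{\mathrm{b}}.
\end{equation*}
By the overall conservation law (\ref{encoq2}), this nonzero source is exactly what the beam loses per unit length and per unit time, so it must be $P_{\mathrm{B}\to\mathrm{MTL}}$; this proves (\ref{exprpower1}). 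As a consistency check, the analogous Noether computation applied to $\mathcal{L}_{2}$ should yield $\partial_{t}H_{\mathrm{b}}+\partial_{z}S_{\mathrm{b}} = +(V,B)\,\partial_{z}I_{\mathrm{b}}$, with the opposite sign, as demanded by total conservation.

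The main obstacle is step (a): making sure the Noether energy-momentum tensor is split between the beam and the MTL in such a way that $(I,V)$ is the \emph{entire} subsystem flux and no coupling piece is absorbed into it; this is exactly the bookkeeping systematized in Section \ref{AppEnergyExchange} and is what makes the short calculation in step (b) deliver the clean expression (\ref{exprpower1}).
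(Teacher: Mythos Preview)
Your proposal is correct and amounts to the paper's own argument run in the opposite direction. The paper starts from the Lagrangian definition $P_{\mathrm{B}\to\mathrm{MTL}}=-\frac{\partial\mathcal{L}_{1}}{\partial(\partial_{;z}Q)}B\,\partial_{tz}^{2}q$ (formula (\ref{gblag11})), recognizes it as $-(B,V)\partial_{z}I_{\mathrm{b}}$, and then uses the telegraph equations (\ref{TelegMulti}) to rewrite this as $\partial_{t}\bigl[\tfrac{1}{2}(CV,V)+\tfrac{1}{2}(LI,I)\bigr]+\partial_{z}(I,V)$; you instead compute $\partial_{t}H_{\mathrm{MTL}}+\partial_{z}(I,V)$ directly from (\ref{TelegMulti}), land on $-(V,B)\partial_{z}I_{\mathrm{b}}$, and identify it with $P_{\mathrm{B}\to\mathrm{MTL}}$ via your step~(a). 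The algebraic content is identical; the only minor point is that in step~(a) the relevant subsystem formulas are (\ref{gblag7}) rather than (\ref{encoq1}), but your computation of $H_{1}=H_{\mathrm{MTL}}$ and $S_{1}=(I,V)$ from $\mathcal{L}_{1}$ is exactly what those formulas give.
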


\begin{proof}
According to (\ref{gblag11}), the power $P_{\mathrm{B}\rightarrow\mathrm{MTL}%
}$ flowing from the beam to the (unit length of) the MTL is given by%
\begin{gather}
P_{\mathrm{B}\rightarrow\mathrm{MTL}}=-\frac{\partial\mathcal{L}_{1}}%
{\partial(\partial_{;z}Q)}\partial_{tz}^{2}q=\partial_{;z}Q^{T}C^{-1}%
B\partial_{tz}^{2}q=\partial_{z}I_{b}%
{\displaystyle\sum\limits_{i}}
D_{i}\partial_{;z}Q_{i},\label{LLQq2}\\
\text{where }D_{i}=%
{\displaystyle\sum\limits_{j}}
(C^{-1})_{ij}.\nonumber
\end{gather}
Using (\ref{TelegMulti}) we recast the expression for $P_{\mathrm{B}%
\rightarrow\mathrm{MTL}}$ in terms of currents and voltages. Indeed, the
voltage $V$ \ is given by%
\begin{equation}
V=-C^{-1}(\partial_{z}Q+\partial_{z}q). \label{LLQq3}%
\end{equation}
Then we notice that
\[%
{\displaystyle\sum\limits_{i}}
D_{i}\partial_{;z}Q_{i}=%
{\displaystyle\sum\limits_{i}}
{\displaystyle\sum\limits_{j}}
(C^{-1})_{ij}(\partial_{z}Q_{i}+\partial_{z}qB_{i})=-%
{\displaystyle\sum\limits_{j}}
V_{j}%
\]
and hence, according to (\ref{TelegMulti}),
\begin{gather}
P_{\mathrm{B}\rightarrow\mathrm{MTL}}=-%
{\displaystyle\sum\limits_{j}}
\partial_{z}I_{b}V_{j}=-\left(  \partial_{z}I_{b}B,V\right)  =\left(
C\partial_{t}V,V\right)  +\left(  \partial_{z}I,V\right)  =\label{LLQq4}\\
=\partial_{t}\left[  \frac{1}{2}\left(  CV,V\right)  \right]  +\partial
_{z}\left(  I,V\right)  -\left(  I,\partial_{z}V\right)  =\partial_{t}\left[
\frac{1}{2}\left(  CV,V\right)  \right]  +\left(  L\partial_{t}I,I\right)
+\partial_{z}\left(  I,V\right)  =\nonumber\\
=\partial_{t}\left[  \frac{1}{2}\left(  CV,V\right)  +\frac{1}{2}\left(
LI,I\right)  \right]  +\partial_{z}(I,V).\nonumber
\end{gather}

\end{proof}

The first two terms in (\ref{exprpower1}) correspond to $\partial_{t}H$ where
\begin{equation}
H=\frac{1}{2}(CV,V)+\frac{1}{2}(LI,I) \label{LLQq5}%
\end{equation}
is the density of the total energy stored in the shunt capacitors and the
inductances per unit length. The last term in $P_{\mathrm{B}\rightarrow
\mathrm{MTL}}$ represents the divergence of the energy flux, $S=(I,V).$ In the
particular case of one line, we recover the usual expressions for the
corresponding quantities:%
\begin{equation}
P_{\mathrm{B}\rightarrow\mathrm{MTL}}=\partial_{t}\left[  \frac{1}{2}%
CV^{2}\right]  +\partial_{t}\left[  \frac{1}{2}LI^{2}\right]  +\partial
_{z}(IV). \label{PowerOneLine}%
\end{equation}
Our next result deals with the direction on the (time averaged) power flow.

\begin{theorem}
Let $k_{0},v_{0}$ denote the complex values of the wave number and the
velocity for the unique \emph{\ }exponentially growing solution according to
Theorem \ref{TeoremAmplification}. Then, the following formula holds for the
time average of the power:%
\begin{equation}
\left\langle P_{\mathrm{B}\rightarrow\mathrm{MTL}}\right\rangle (z)=-\left[
\omega\xi\left\vert k_{0}\right\vert ^{2}\left\vert \widehat{q}\right\vert
^{2}(\operatorname*{Re}v_{0}-u_{0})\operatorname{Im}v_{0}\right]
\mathrm{e}^{-2\left(  \operatorname{Im}k_{0}\right)  z}. \label{exprpower2}%
\end{equation}
Moreover, $\left\langle P_{\mathrm{B}\rightarrow\mathrm{MTL}}\right\rangle
(z)>0$ for all $z.$Thus, the power on the growing solution flows from the beam
to the MTL.
\end{theorem}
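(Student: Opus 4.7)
The plan is to evaluate the expression $P_{\mathrm{B}\to\mathrm{MTL}}=-(\partial_z I_{\mathrm{b}}B,V)$ from the previous theorem on the complex time-harmonic growing solution, apply the time-averaging identity (\ref{encoq5b}) to eliminate the $t$-dependence, and use the algebraic system (\ref{Systemforv}) to collapse the MTL amplitude $\widehat Q$ down to a beam-only quantity. The sign statement will then follow by extracting the correct information from the imaginary part of the dispersion relation.

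First I would read off the complex amplitudes for the ansatz $q=\widehat q\,\mathrm{e}^{-\mathrm{i}(\omega t-k_0 z)}$, $Q=\widehat Q\,\mathrm{e}^{-\mathrm{i}(\omega t-k_0 z)}$: the amplitude of $\partial_z I_{\mathrm{b}}$ is $\omega k_0\widehat q\,\mathrm{e}^{\mathrm{i}k_0 z}$, while from $V=-C^{-1}(\partial_z Q+B\partial_z q)$ the amplitude of $V$ is $-\mathrm{i}k_0 C^{-1}(\widehat Q+B\widehat q)\,\mathrm{e}^{\mathrm{i}k_0 z}$. Applying (\ref{encoq5b}), and noting that the product of a $k_0$-mode with the conjugate of a $k_0$-mode produces the real spatial factor $\mathrm{e}^{\mathrm{i}(k_0-k_0^*)z}=\mathrm{e}^{-2(\operatorname{Im}k_0)z}$, one gets
$$\langle P_{\mathrm{B}\to\mathrm{MTL}}\rangle(z)=-\tfrac12\operatorname{Re}\bigl[-\mathrm{i}\omega|k_0|^2\,\widehat q^{\,*}\,B^{\mathrm T}C^{-1}(\widehat Q+B\widehat q)\bigr]\,\mathrm{e}^{-2(\operatorname{Im}k_0)z}.$$
The crucial algebraic step is that the MTL-dependent factor $B^{\mathrm T}C^{-1}(\widehat Q+B\widehat q)$ collapses to a beam quantity. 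The second row of (\ref{Systemforv}) reads $B^{\mathrm T}C^{-1}\widehat Q+(d-\xi(v_0-u_0)^2)\widehat q=0$, and combining with $B^{\mathrm T}C^{-1}B=d$ (from the definition of $d$) gives $B^{\mathrm T}C^{-1}(\widehat Q+B\widehat q)=\xi(v_0-u_0)^2\,\widehat q$. Substituting and using $\operatorname{Re}(-\mathrm{i}\zeta)=\operatorname{Im}\zeta$ together with $\operatorname{Im}[(v_0-u_0)^2]=2(\operatorname{Re}v_0-u_0)\operatorname{Im}v_0$ yields precisely (\ref{exprpower2}).

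For positivity I would argue that the bracketed constant in (\ref{exprpower2}) is negative. Since $\operatorname{Im}k_0<0$ and $v_0=\omega k_0^*/|k_0|^2$, one has $\operatorname{Im}v_0>0$ immediately. The nontrivial point is $\operatorname{Re}v_0<u_0$, where the dispersion relation $-\xi(v_0-u_0)^2=R(v_0)$ enters. Writing $v_0=a+\mathrm{i}b$ with $b>0$, a direct computation gives $\operatorname{Im}[(v_i^2-v_0^2)^{-1}]=2ab/|v_i^2-v_0^2|^2$, and hence from (\ref{R}),
$$\operatorname{Im}R(v_0)=2ab\,S,\qquad S:=\sum_{i=1}^n\frac{\widetilde D_i^{\,2}}{|v_i^2-v_0^2|^2}\ge 0.$$
Equating imaginary parts of the dispersion relation and dividing by $2b>0$ yields $a(\xi+S)=\xi u_0$, so $\operatorname{Re}v_0=a=\xi u_0/(\xi+S)$. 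Since $\xi,u_0>0$ and $S\ge 0$, this forces $0<\operatorname{Re}v_0\le u_0$, with strict inequality whenever the beam is genuinely coupled to the MTL ($S>0$). Therefore $(\operatorname{Re}v_0-u_0)\operatorname{Im}v_0<0$ and $\langle P_{\mathrm{B}\to\mathrm{MTL}}\rangle(z)>0$.

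The main obstacle is the positivity statement, not the formula: once the dispersion system is used to kill the MTL amplitude, the formula is essentially routine, but $\operatorname{Re}v_0<u_0$ is not visible a priori. It requires extracting the right sign information from the imaginary part of the dispersion relation, relying on the fact that $R$ is a sum of signed-definite rational terms so that $S\ge 0$. This is precisely the step that locks in the physical direction of the energy flow from the beam to the MTL.
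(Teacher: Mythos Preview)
Your derivation of formula (\ref{exprpower2}) is correct and essentially the same as the paper's: both compute the time average of $-(\partial_z I_{\mathrm b}B,V)$ on the exponential mode and then use the last row of (\ref{Systemforv}) to replace $B^{\mathrm T}C^{-1}(\widehat Q+B\widehat q)$ by the beam-only quantity $\xi(v_0-u_0)^2\widehat q$, after which the imaginary part falls out.

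Where you genuinely diverge from the paper is in the positivity step. The paper establishes $\operatorname{Re}v_0\le u_0$ by invoking the Vieta relation (\ref{Vieta1}) together with the asymmetry inequality (\ref{Asymmetry}) for the \emph{real} roots $v_k^{\pm}$; this requires knowing something about all the roots of the full dispersion relation. Your argument instead takes the imaginary part of the canonical dispersion relation $-\xi(v_0-u_0)^2=R(v_0)$ at the single complex root, exploits the explicit partial-fraction form (\ref{R}) to see that $\operatorname{Im}R(v_0)=2(\operatorname{Re}v_0)(\operatorname{Im}v_0)S$ with $S=\sum_i\widetilde D_i^{\,2}/|v_i^2-v_0^2|^2\ge 0$, and solves $\operatorname{Re}v_0=\xi u_0/(\xi+S)$. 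This is more elementary and entirely self-contained: it needs no information about the real spectrum, no Vieta, and no auxiliary geometric argument. The only point to be careful about is that the canonical relation (\ref{DispRelGen}) is valid at $v_0$ because $v_0$ is non-real while the $v_i$ are real, and that $S>0$ strictly because $\widetilde D\neq 0$; both are already justified in Section~\ref{AppAmplification}. The paper's route, by contrast, yields $\operatorname{Re}v_0\le u_0$ as a by-product of the detailed root analysis that is needed anyway for the asymptotics in Section~\ref{BehAmplificationFactor}.
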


\begin{proof}
First, observe that for real time harmonic solutions $Q$ and $q$ of the form%
\begin{equation}
Q=\operatorname*{Re}\left(  \widehat{Q}\mathrm{e}^{i(kz-\omega t)}\right)
,\qquad q=\operatorname*{Re}\left(  \widehat{q}\mathrm{e}^{i(kz-\omega
t)}\right)  , \label{gblag16}%
\end{equation}
where $\widehat{Q}_{,}\widehat{q}$ are complex constants, the expression for
$P_{\mathrm{B}\rightarrow\mathrm{MTL}}$ can be written in the form%
\begin{equation}
P_{\mathrm{B}\rightarrow\mathrm{MTL}}=\partial_{;z}Q^{T}C^{-1}B\partial
_{tz}^{2}q=\operatorname*{Re}(\widehat{a}(z)e^{-i\omega t})\operatorname*{Re}%
(\widehat{b}(z)e^{-i\omega t}), \label{gblag17}%
\end{equation}
where%
\begin{equation}
\widehat{a}(z)=ike^{ikz}(\widehat{Q}+B\widehat{q})^{T}C^{-1};\qquad
\widehat{b}(z)=\omega k\widehat{q}e^{ikz}B. \label{gbalg18}%
\end{equation}
Applying formula (\ref{encoq5b}) for time average, we get
\begin{equation}
\left\langle P_{\mathrm{B}\rightarrow\mathrm{MTL}}\right\rangle (z)=\frac
{\omega}{2}\mathrm{e}^{-2\left(  \operatorname{Im}k\right)  z}%
\operatorname{Im}\left\{  \left\vert k\right\vert ^{2}\left(  \widehat{Q}%
+B\widehat{q}\right)  ^{\ast T}C^{-1}B\widehat{q}\right\}  . \label{Pow1}%
\end{equation}
Suppose now that $k_{0}$ is the complex root providing amplification, that is,
in the notation of Subsection \ref{BehAmplificationFactor}, $k_{0}%
=\omega/v_{0}$ with $\operatorname{Im}k_{0}<0.$ Then, $v_{0}$ is a root of the
system (\ref{Systemforv}) and therefore, in the notation of Section
\ref{AmplMTL-beam} and returning to the variable $k$,%
\[
k_{0}^{2}(\widehat{Q}^{\mathrm{T}}D+\widehat{q}d)=\xi(\omega-k_{0}u_{0}%
)^{2}\widehat{q}.
\]
Taking complex conjugate in the above equation and observing that $C^{-1}B=D$
and $B^{\mathrm{T}}C^{-1}B=d,$we can rewrite (\ref{Pow1}) in the form%
\begin{align}
\left\langle P_{\mathrm{B}\rightarrow\mathrm{MTL}}\right\rangle (z)  &
=\frac{\omega\xi}{2}\mathrm{e}^{-2\left(  \operatorname{Im}k_{0}\right)
z}\operatorname{Im}\left\{  \frac{\left\vert k_{0}\right\vert ^{2}}%
{k_{0}^{\ast2}}\left(  \omega-u_{0}k_{0}^{\ast}\right)  ^{2}\left\vert
\widehat{q}\right\vert ^{2}\right\} \label{Pow2}\\
&  =\frac{\omega\xi\left\vert k_{0}\right\vert ^{2}\left\vert \widehat{q}%
\right\vert ^{2}u_{0}^{2}}{2}\mathrm{e}^{-2\left(  \operatorname{Im}%
k_{0}\right)  z}\operatorname{Im}\left\{  \left(  \frac{k_{b}-k_{0}^{\ast}%
}{k_{0}^{\ast}}\right)  ^{2}\right\}  ,\qquad k_{b}=\frac{\omega}{u_{0}%
}.\nonumber
\end{align}
In terms of velocities, we have%
\begin{equation}
\operatorname{Im}\left(  \frac{k_{b}-k_{0}^{\ast}}{k_{0}^{\ast}}\right)
^{2}=\operatorname{Im}\left(  \frac{v_{0}^{\ast}}{u_{0}}-1\right)  ^{2}%
=-\frac{2}{u_{0}^{2}}(\operatorname*{Re}v_{0}-u_{0})\operatorname{Im}v_{0}.
\label{Pow3}%
\end{equation}
Since we are assuming $\operatorname{Im}v_{0}>0,$ we see from formula
(\ref{Pow2}) that $\left\langle P_{\mathrm{B}\rightarrow\mathrm{MTL}%
}\right\rangle (z)\geq0$ for all $z$ exactly if $\operatorname*{Re}v_{0}\leq
u_{0}$. But this is always the case, as it follows from (\ref{Vieta1}) and
(\ref{Asymmetry}). Formula (\ref{exprpower2}) follows at once from
(\ref{Pow2}) and (\ref{Pow3}).
\end{proof}

Observe also that, since $\operatorname{Im}k_{0}<0$, formula (\ref{exprpower2}%
) implies that $\left\langle P_{\mathrm{B}\rightarrow\mathrm{MTL}%
}\right\rangle $ increases in the $+z$ direction. For the evanescent wave,
corresponding to the value $k_{0}^{\ast},$ we have exactly the opposite
situation: the energy flows from the MTL to the beam and the power flux
decreases in the $+z$ direction.

\section{The Pierce model revisited\label{PierceRev copy(1)}}

Let us examine Pierce's original results in the light of our general theory.
They correspond to $n=1$, hence $d=D=C^{-1}$ and the dispersion relation
$\left\vert \widetilde{A}(v)\right\vert =0$ becomes%
\begin{equation}
\left(  -v^{2}L+C^{-1}\right)  \left[  C^{-1}-\xi(v-u_{0})^{2}\right]
-C^{-2}=0, \label{PierRev1}%
\end{equation}
which, in terms of $\ k=\omega/v,$ reads%
\begin{equation}
-L\omega^{2}k^{2}+\xi(\omega-ku_{0})^{2}(LC\omega^{2}-k^{2})=0.
\label{PierRev2}%
\end{equation}
After elementary algebraic transformations the above equation turns into%
\begin{equation}
u_{0}^{2}k^{4}-2u_{0}\omega k^{3}+\left[  1+\frac{L}{\xi}-LCu_{0}^{2}\right]
\omega^{2}k^{2}+2LCu_{0}\omega^{3}k-LC\omega^{4}=0, \label{PierRev3}%
\end{equation}
which is precisely the fourth order equation in \cite[ (1.16)]{Pier51}

The TL has only two characteristic velocities, namely $\pm v_{1}=\pm
1/\sqrt{LC}$ which are not solutions of (\ref{PierRev1}). The graph of the
characteristic function $R$ has only two vertical asymptotes at $v=\pm v_{1}$.
The special regime considered in \cite{Pier51} corresponds to taking large
$\xi$, and $u_{0}=v_{1}$. As we know, in this case amplification occurs for
any $\xi>0$. For small values of the parameter%
\begin{equation}
k_{p}=\frac{\omega_{p}}{u_{0}}=\frac{1}{u_{0}}\sqrt{\frac{4\pi}{\sigma\xi}},
\label{kpuu1}%
\end{equation}
Pierce asserts that $k\simeq k_{\mathrm{b}}=\omega/u_{0}$ for the forward
unattenuated wave. In terms of velocities this means that for large values of
$\xi$ the positive real solution $v_{1}^{+}$ is very close to $u_{0}$. The
graph in Figure \ref{PierceAmp} refers to this situation and it clearly shows
that indeed $v_{1}^{+}\simeq u_{0}$ and $-v_{1}^{-}$ $\simeq-u_{0}$ for large
$\ \xi$ (the parabola becomes very narrow and the right and left branches of
the graph of $R$ are intersected close to the asymptotes).%
\begin{figure}[ptb]%
\centering
\ifcase\msipdfoutput
\includegraphics[
height=2.4102in,
width=3.2543in
]%
{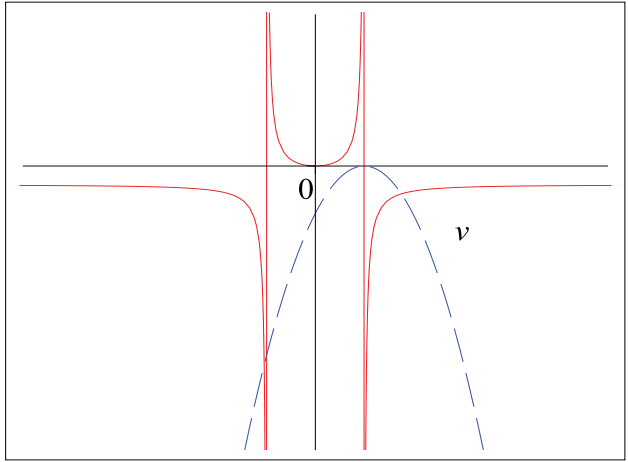}%
\else
\includegraphics[
height=2.4102in,
width=3.2543in
]%
{D:/alatex/Preparation/Reyes/lagsys/arxiv2/graphics/PierceAmp__3.pdf}%
\fi
\caption{Pierce's dispersion relation for $u_{0}=v_{1}$: $\ $For large $\xi,$
the parabola $y=-\xi(v-u_{0})^{2}$ is very narrow and intersects the graph of
$y=R(v)$ close to the asymptotes: $v_{1}^{+},v_{1}^{-}\approx u_{0}.$}%
\label{PierceAmp}%
\end{figure}

Consequently, the identity%
\begin{equation}
2\operatorname*{Re}v_{0}+v_{1}^{+}+v_{1}^{-}=2u_{0} \label{kpuu2}%
\end{equation}
implies that $\operatorname*{Re}v_{0}$, $\operatorname*{Re}v_{0}^{\ast}\simeq
u_{0}$. Therefore, three solutions have real part close to $u_{0}$ and the
remaining real solution is close to $-u_{0}.$The latter corresponds to the
backward wave. In terms of the wavenumber, three solutions have real part
close to $k_{\mathrm{b}}$. By looking for solutions (in $k)$ to the fourth
order equation (\ref{PierRev3}) in the form%
\[
k=k_{\mathrm{b}}+\mathrm{i}\delta,
\]
with small (compared to $k_{\mathrm{b}}$) complex $\delta$, Pierce gets rid of
the backward wave. The dispersion relation (\ref{PierRev2}) in terms of
$\delta$ reads%
\begin{equation}
\left(  \mathrm{i}\delta\right)  ^{3}\left(  2+\mathrm{i}\delta k_{\mathrm{b}%
}^{-1}\right)  =-L\xi^{-1}k_{\mathrm{b}}^{2}\left(  1+\mathrm{i}\delta
k_{\mathrm{b}}^{-1}\right)  ^{2}. \label{kp7uu3}%
\end{equation}
Neglecting $\mathrm{i}\delta/k_{\mathrm{b}}$ we arrive at Pierce's third
degree equation for $\delta$:%
\begin{equation}
\delta^{3}=-\frac{Lk_{\mathrm{b}}^{2}\xi^{-1}}{2}\mathrm{i}, \label{kpuu4}%
\end{equation}
which has three complex roots,%
\begin{equation}
\delta_{1}=c\mathrm{i},\qquad\delta_{2}=c\left(  -\sqrt{3}-\mathrm{i}\right)
/2,\qquad\delta_{3}=c\left(  \sqrt{3}-\mathrm{i}\right)  /2,\text{ where
}c=\sqrt[3]{Lk_{\mathrm{b}}^{2}\xi^{-1}/2}, \label{kpuu5}%
\end{equation}
corresponding respectively to the unattenuated wave faster than the natural
phase velocity of the circuit ($v_{1}^{+}>v_{1}=u_{0})$, the increasing and
the decreasing waves.

It is clear from the analysis in Section \ref{AppAmplification} that in case
of several identical, non-interacting TLs, only two asymptotes are present in
the graph of $R$. This fact suggests that we can replace such a system by a
single effective line, with modified parameters, interacting with the beam.
Indeed, let $C=\widehat{C}\mathrm{Id}_{n}$, $L=\widehat{L}\mathrm{Id}_{n}$
with $n\geq2$. Then, there are exactly two characteristic velocities $\pm
v_{1}$ where $v_{1}=1/\sqrt{\widehat{L}\widehat{C}}$. According to Section
\ref{AppAmplification} the latter are necessarily characteristic velocities of
the entire system, of multiplicity $n-1$ each. Using the notation from that
section, we have%
\[
D=\widehat{C}^{-1}(1,1,...1)^{T},\text{ \ \ \ }d=n\widehat{C}^{-1},\text{
\ \ \ \ }\widetilde{D}=\widehat{L}^{-1/2}\widehat{C}^{-1}(1,1,...1)^{T}%
.\text{\ }%
\]
The MTL characteristic function $R(v)$ has the explicit expression%
\begin{equation}
R(v)=\frac{n\widehat{L}^{-1}\widehat{C}^{-2}}{v_{1}^{2}-v^{2}}-n\widehat{C}%
^{-1}. \label{kpuu6}%
\end{equation}
If we choose $\widetilde{C}=\widehat{C}/n$ and $\widetilde{L}=n\widehat{L},$
the above function coincides with the characteristic function for one line
with parameters $\widetilde{C}$ and $\widetilde{L},$%
\[
R(v)=\frac{\widetilde{L}^{-1}\widetilde{C}^{-2}}{v_{1}^{2}-v^{2}%
}-\widetilde{C}^{-1}.
\]
Since amplification depends only on the complex root of the dispersion
relation, which is a root of the canonical dispersion relation, amplification
factors also coincide.

Actually, a more general assertion holds:

\begin{theorem}
Let $C$ and $L$ be the capacity, respectively inductance matrices of an
$n$-lines MTL. If
\begin{equation}
LC=v_{1}^{-2}Id, \label{linesequivalence}%
\end{equation}
then the canonical dispersion relation of the system consisting of \ the MTL
and a given beam coincides with the canonical dispersion relation of the
system consisting of a single transmission line with parameters $\widetilde{L}%
,\widetilde{C}$ defined by
\[
\widetilde{C}^{-1}=\sum_{i,j=1}^{n}(C^{-1})_{ij},\qquad\widetilde{L}%
=v_{1}^{-2}\widetilde{C}^{-1}%
\]
and the same beam. Consequently, the amplification factors of both systems coincide.
\end{theorem}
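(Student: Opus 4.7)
The plan is to reduce everything to the canonical form of the dispersion relation derived in Section \ref{AmplMTL-beam}, namely
\[
-\xi(v-u_0)^2 = R(v),\qquad R(v)=D^{T}A(v)^{-1}D - d,
\]
with $A(v) = -v^{2}L + C^{-1}$, $D = C^{-1}B$, $d = B^{T}C^{-1}B$, and then show that under the hypothesis $LC = v_{1}^{-2}\mathrm{Id}$ the function $R(v)$ collapses to exactly the $R$--function of a single transmission line with the stated parameters $\widetilde{L},\widetilde{C}$. Since the beam side of the canonical dispersion relation does not depend on the line at all, equality of the two $R$--functions immediately forces the two dispersion relations to have the same complex roots, and in particular the same amplification factor produced by Theorem \ref{TeoremAmplification}.

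The first step is to exploit the hypothesis to simplify $A(v)$. From $LC = v_{1}^{-2}\mathrm{Id}$ one gets $L = v_{1}^{-2}C^{-1}$, so
\[
A(v) = -v^{2}L + C^{-1} = \frac{v_{1}^{2}-v^{2}}{v_{1}^{2}}\,C^{-1},
\]
which is a scalar multiple of $C^{-1}$. This is the crucial algebraic observation because it makes $A(v)^{-1} = \frac{v_{1}^{2}}{v_{1}^{2}-v^{2}}\,C$, and consequently
\[
D^{T}A(v)^{-1}D = \frac{v_{1}^{2}}{v_{1}^{2}-v^{2}}\,B^{T}C^{-1}C C^{-1}B = \frac{v_{1}^{2}\,d}{v_{1}^{2}-v^{2}}.
\]
Plugging this into the definition of $R$ and simplifying yields
\[
R(v) = \frac{v_{1}^{2}\,d}{v_{1}^{2}-v^{2}} - d = \frac{d\,v^{2}}{v_{1}^{2}-v^{2}}.
\]

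Second, I would perform the identical computation for a single line with parameters $\widetilde{L},\widetilde{C}$ satisfying $\widetilde{L}\widetilde{C} = v_{1}^{-2}$, in which case $B=1$, $D = \widetilde{C}^{-1}$, and the scalar $d$ is replaced by $\widetilde{C}^{-1}$. Exactly the same manipulation yields $\widetilde{R}(v) = \widetilde{C}^{-1}v^{2}/(v_{1}^{2}-v^{2})$. By the definition $\widetilde{C}^{-1} = \sum_{i,j}(C^{-1})_{ij} = d$, the two characteristic functions agree identically as rational functions of $v$. Since the beam term $-\xi(v-u_{0})^{2}$ is identical in both systems, the canonical dispersion relations coincide, and therefore so do their complex roots; by the definition of the amplification factor as $-\operatorname{Im}k_{0}$ with $k_{0}=\omega/v_{0}$ for the unique complex root $v_{0}$ with $\operatorname{Im}v_{0}>0$, the amplification factors coincide as well.

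There is essentially no obstacle: the only thing to be careful about is the bookkeeping of which roots of the full dispersion relation $|\widetilde{A}(v)|=0$ live outside the canonical factorization. The hypothesis $LC=v_{1}^{-2}\mathrm{Id}$ makes $\pm v_{1}$ eigenvelocities of multiplicity $n$ for the MTL, so a priori $|A(v)|$ vanishes at $\pm v_{1}$ and the canonical factorization (\ref{CanFact}) breaks down there. I would handle this by working with the canonical (reduced) dispersion relation $-\xi(v-u_0)^2 = R(v)$ away from $\pm v_1$, noting that $\pm v_1$ are not poles of the simplified $R(v)=dv^2/(v_1^2-v^2)$ after cancellation in the single line reduction --- the cancellation encodes exactly the fact that the $(n-1)$-fold degenerate modes at $\pm v_{1}$ are decoupled from the beam. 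Since amplification is governed by the unique complex pair of roots of the reduced relation, and that reduced relation is what is shown to coincide, the conclusion about amplification factors follows with no further work.
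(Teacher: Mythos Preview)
Your proposal is correct and is precisely the approach the paper has in mind: the paper omits the proof, saying only that it is ``a straightforward generalization of the case of identical lines,'' and in that case it carries out exactly the computation of $R(v)$ you perform, arriving at the same expression $R(v)=d\,v^{2}/(v_{1}^{2}-v^{2})$ (written there as $n\widehat{L}^{-1}\widehat{C}^{-2}/(v_{1}^{2}-v^{2})-n\widehat{C}^{-1}$, which is the same thing once one uses $d=n\widehat{C}^{-1}$ and $v_{1}^{2}=1/(\widehat{L}\widehat{C})$). Your key observation that $A(v)=\frac{v_{1}^{2}-v^{2}}{v_{1}^{2}}C^{-1}$ under the hypothesis is exactly the ``straightforward generalization.''

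One phrasing slip in your final paragraph: you write that ``$\pm v_{1}$ are not poles of the simplified $R(v)=dv^{2}/(v_{1}^{2}-v^{2})$,'' but of course they are (simple) poles. What you presumably intend, and what is correct, is that the pole of $R$ at $\pm v_{1}$ is \emph{simple} rather than of order $n$; the other $n-1$ factors of $v_{1}^{2}-v^{2}$ in $|A(v)|$ survive as genuine (real) roots of the full dispersion relation $|\widetilde{A}(v)|=0$, corresponding to the decoupled oscillatory modes the paper mentions just after the theorem. Since the amplification factor is determined by the unique complex pair, which necessarily lies among the roots of the canonical relation, your conclusion stands.
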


It should be noted that the multiple line system and the reduced (one line)
system above are not equivalent in all respects. Actually, the multiline
system admits oscillatory modes with eigenvelocity $\pm v_{1},$ whereas the
equivalent one-line system does not. However, the exponentially growing and
evanescent modes coincide, as well as the two purely oscillatory modes with
eigenvelocities different from $\pm v_{1}.$ The proof of the above theorem is
a straightforward generalization of the case of identical lines and we omit it.

A different reduction can be achieved by suitably modifying the beam. Suppose
we have $n$ identical, uncoupled lines as before. Dividing \ the dispersion
relation (\ref{kpuu6}) by $n,$ we conclude that the interaction of the system
with a beam with parameters $\left(  \xi,u_{0}\right)  $ is equivalent to the
interaction of one line with parameters $\widehat{L}$, $\widehat{C}$ with a
beam with parameters $\left(  \xi/n,u_{0}\right)  $. The asymptotic formula
(\ref{AmpFactorAt0}) then implies that the amplification factor grows like
$\sqrt{n}$ as $n\rightarrow\infty$.

\section{Mathematical subjects\label{MathSubj}}

\subsection{ de Donder-Weyl version of the Hamiltonian
formalism\label{AppdeDonder}}

In this section we introduce basic settings of the de Donder-Weyl (DW) version
of the Hamilton equations which treats the time and space variable in equal
manner just as the Lagrangian approach which constitutes its basis. The DW
theory is a generalization of the standard Hamiltonian formalism and the
Hamilton-Jacobi theory, \cite[4.2]{Rund} that has the advantage of requiring a
finite-dimensional phase space. We do not use any significant results of the
DW theory but rather take advantage of its set up that allows to treat the
time $t$ and the space variable $z$ on equal footing. We remind that the
standard Hamilton-Jacobi theory gives preferential treatment to time $t$.

Let us consider a system $\mathsf{q}=\left\{  \mathsf{q}_{j}\left(
t,z\right)  ,\ j=1,\ldots n\right\}  $ of real valued fields depending on time
$t$ and one-dimensional space variable $z$. Suppose it has a Lagrangian
density of the form%
\begin{equation}
\mathcal{L}=\mathcal{L}\left(  t,z,\mathsf{q},\mathsf{q}_{,t},\mathsf{q}%
_{,z}\right)  ,\text{ where }\mathsf{q}_{,t}=\partial_{t}\mathsf{q}%
,\ \mathsf{q}_{,z}=\partial_{z}\mathsf{q}. \label{donwe1}%
\end{equation}
The corresponding Euler-Lagrange equations are, \cite[4.16]{GelFom}%
\begin{equation}
\frac{\partial\mathcal{L}}{\partial\mathsf{q}}-\partial_{t}\frac
{\partial\mathcal{L}}{\partial\mathsf{q}_{,t}}-\partial_{z}\frac
{\partial\mathcal{L}}{\partial\mathsf{q}_{,z}}=0. \label{donwe2}%
\end{equation}
Evidently, (\ref{donwe2}) is a system of second order partial differential
equations for $\mathsf{q}$ as a function of $t,z$. It can be recast as a first
order partial differential system with respect to time $t$ or with respect to
the space variable $z$ using a generalization of the standard Hamiltonian
formalism known as de Donder-Weyl (DW) theory. Thus, following the DW theory
we introduce two canonical momenta densities $\mathsf{p}_{t}$ and
$\mathsf{p}_{z}$ and the DW Hamiltonian density $\mathcal{H}$ by the formulas%
\begin{gather}
\mathsf{p}_{t}=\frac{\partial\mathcal{L}}{\partial\mathsf{q}_{,t}}\left(
t,z,\mathsf{q},\mathsf{q}_{,t},\mathsf{q}_{,z}\right)  ,\label{donwe3a}\\
\mathsf{p}_{z}=\frac{\partial\mathcal{L}}{\partial\mathsf{q}_{,z}}\left(
t,z,\mathsf{q},\mathsf{q}_{,t},\mathsf{q}_{,z}\right)  ,\label{donwe3b}\\
\mathcal{H}_{\mathrm{DW}}=\mathcal{H}_{\mathrm{DW}}\left(  t,z,\mathsf{q}%
,\mathsf{p}_{t},\mathsf{p}_{z}\right)  =\mathsf{p}_{t}^{\mathrm{T}}%
\mathsf{q}_{,t}+\mathsf{p}_{z}^{\mathrm{T}}\mathsf{q}_{,z}-\mathcal{L}\left(
t,z,\mathsf{q},\mathsf{q}_{,t},\mathsf{q}_{,z}\right)  , \label{donwe3c}%
\end{gather}
where $\mathsf{q}_{,t}$ and $\mathsf{q}_{,z}$ are supposed to be found from
respective equations (\ref{donwe3a})-(\ref{donwe3b}) and to be substituted in
the right-hand side for the second equation in (\ref{donwe3c}). Then the
corresponding DW version of the Hamilton equations are%
\begin{gather}
\partial_{t}\mathsf{q}=\frac{\partial\mathcal{H}_{\mathrm{DW}}}{\partial
\mathsf{p}_{t}}\left(  t,z,\mathsf{q},\mathsf{p}_{t},\mathsf{p}_{z}\right)
,\label{donwe4a}\\
\partial_{z}\mathsf{q}=\frac{\partial\mathcal{H}_{\mathrm{DW}}}{\partial
\mathsf{p}_{z}}\left(  t,z,\mathsf{q},\mathsf{p}_{t},\mathsf{p}_{z}\right)
,\label{donwe4b}\\
\partial_{t}\mathsf{p}_{t}+\partial_{z}\mathsf{p}_{z}=-\frac{\partial
\mathcal{H}_{\mathrm{DW}}}{\partial\mathsf{q}}\left(  t,z,\mathsf{q}%
,\mathsf{p}_{t},\mathsf{p}_{z}\right)  , \label{donwe4c}%
\end{gather}
and this system of $3n$ first order equations is equivalent to the
Euler-Lagrange system (\ref{donwe2}).

One can solve the system (\ref{donwe3a}) -(\ref{donwe3b}) for $\mathsf{q}%
_{,t}$ and $\mathsf{q}_{,z}$ in terms of the momenta, obtaining
representations
\begin{equation}
\mathsf{q}_{,t}=G_{t}\left(  t,z,\mathsf{q},\mathsf{p}_{t},\mathsf{p}%
_{z}\right)  ,\qquad\mathsf{q}_{,z}=G_{z}\left(  t,z,\mathsf{q},\mathsf{p}%
_{t},\mathsf{p}_{z}\right)  , \label{donwe5a}%
\end{equation}
for some functions $G_{t}$ and $G_{z}$. Solving for $\mathsf{p}_{t}$ in the
first and for $\mathsf{p}_{z}$ in the second, we get
\begin{equation}
\mathsf{p}_{t}=K_{t}\left(  t,z,\mathsf{q},\mathsf{q}_{,t},\mathsf{p}%
_{z}\right)  ,\qquad\mathsf{p}_{z}=K_{z}\left(  t,z,\mathsf{q},\mathsf{p}%
_{t},\mathsf{q}_{,z}\right)  , \label{donwe5b}%
\end{equation}
for some functions $K_{t}$ and $K_{z}$.

To obtain the first order partial differential equations with respect to $t$
we consider the pair $\mathsf{p}_{t},\mathsf{q}$ and using equations
(\ref{donwe4a}) and (\ref{donwe4c}) we get%
\begin{gather}
\partial_{t}\mathsf{q}=\frac{\partial H_{\mathrm{DW}}}{\partial\mathsf{p}_{t}%
}\left(  t,z,\mathsf{q},\mathsf{p}_{t},\mathsf{p}_{z}\right)  =F_{\mathsf{q}%
}\left(  t,z,\mathsf{q},\mathsf{q}_{,z},\mathsf{p}_{t}\right)
,\label{donwe6a}\\
\partial_{t}\mathsf{p}_{t}=-\partial_{z}\mathsf{p}_{z}-\frac{\partial
H_{\mathrm{DW}}}{\partial\mathsf{q}}\left(  t,z,\mathsf{q},\mathsf{p}%
_{t},\mathsf{p}_{z}\right)  =F_{\mathsf{p}}\left(  t,z,\mathsf{q}%
,\mathsf{q}_{,z},\mathsf{q}_{,zz},\mathsf{p}_{t},\mathsf{p}_{t,z}\right)
,\nonumber
\end{gather}
where the expressions $F_{\mathsf{q}}$ and $F_{\mathsf{p}}$ are obtained by
replacing $\mathsf{p}_{z}$ in (\ref{donwe6a}) by its representation
(\ref{donwe5b}). Observe that the system of partial differential equations
(\ref{donwe6a}) for $\mathsf{p}_{t}$ and $\mathsf{q}$ is of the first order
with respect to time $t$.

To obtain the first order partial differential equations with respect to $z$
we consider the pair $\mathsf{p}_{z},\mathsf{q}$ and\ proceed just as in the
previous case with using the equations (\ref{donwe4b}) and (\ref{donwe4c}) to
get%
\begin{gather}
\partial_{z}\mathsf{q}=\frac{\partial H_{\mathrm{DW}}}{\partial\mathsf{p}_{z}%
}\left(  t,z,\mathsf{q},\mathsf{p}_{t},\mathsf{p}_{z}\right)  =\widetilde{F}%
_{\mathsf{q}}\left(  t,z,\mathsf{q},\mathsf{q}_{,t},\mathsf{p}_{z}\right)
,\label{donwe6b}\\
\partial_{z}\mathsf{p}_{z}=-\partial_{t}\mathsf{p}_{t}-\frac{\partial
H_{\mathrm{DW}}}{\partial\mathsf{q}}\left(  t,z,\mathsf{q},\mathsf{p}%
_{t},\mathsf{p}_{z}\right)  =\widetilde{F}_{\mathsf{p}}\left(  t,z,\mathsf{q}%
,\mathsf{q}_{,t},\mathsf{q}_{,tt},\mathsf{p}_{z},\mathsf{p}_{z,t}\right)
,\nonumber
\end{gather}
where the expressions $\widetilde{F}_{\mathsf{q}}$ and $\widetilde{F}%
_{\mathsf{p}}$ are determined by replacing $\mathsf{p}_{t}$ in the relevant
expressions in (\ref{donwe6b}) by its representation (\ref{donwe5b}). Observe
that the system of partial differential equations (\ref{donwe6b}) for
$\mathsf{p}_{z}$ and $\mathsf{q}$ is of the first order with respect to the
space variable $z$.

Summing up, we have proved the following

\begin{theorem}
The second order Euler-Lagrange system (\ref{donwe2}) is equivalent to either
the first order system (\ref{donwe6a}) for $\mathsf{q}$ and $\mathsf{p}_{t}$
or the first order system (\ref{donwe6b}) for $\mathsf{q}$ and $\mathsf{p}%
_{z}$.
\end{theorem}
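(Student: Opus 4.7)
The plan is to carry out the equivalence in two stages. First, I would establish that the Euler--Lagrange system (\ref{donwe2}) is equivalent to the full DW Hamiltonian system (\ref{donwe4a})--(\ref{donwe4c}); second, I would reduce that system to either the $t$-first-order system (\ref{donwe6a}) or the $z$-first-order system (\ref{donwe6b}) by eliminating the appropriate momentum through the inversions (\ref{donwe5b}).

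For the first stage, the natural approach is to differentiate the Legendre transform identity (\ref{donwe3c}) and exploit the defining relations (\ref{donwe3a})--(\ref{donwe3b}). Treating $\mathsf{q}_{,t}$ and $\mathsf{q}_{,z}$ as implicit functions of $(\mathsf{q},\mathsf{p}_{t},\mathsf{p}_{z})$ via (\ref{donwe3a})--(\ref{donwe3b}), the chain-rule terms in $\partial \mathcal{H}_{\mathrm{DW}}/\partial\mathsf{p}_{t}$ and $\partial \mathcal{H}_{\mathrm{DW}}/\partial\mathsf{p}_{z}$ cancel precisely because $\mathsf{p}_t=\partial\mathcal{L}/\partial\mathsf{q}_{,t}$ and $\mathsf{p}_z=\partial\mathcal{L}/\partial\mathsf{q}_{,z}$, yielding the tautologies $\partial \mathcal{H}_{\mathrm{DW}}/\partial\mathsf{p}_{t}=\mathsf{q}_{,t}$ and $\partial \mathcal{H}_{\mathrm{DW}}/\partial\mathsf{p}_{z}=\mathsf{q}_{,z}$. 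Thus (\ref{donwe4a})--(\ref{donwe4b}) encode nothing more than the definitions of $\mathsf{p}_{t}$ and $\mathsf{p}_{z}$ as momenta of $\mathsf{q}$. A parallel computation gives $\partial \mathcal{H}_{\mathrm{DW}}/\partial\mathsf{q}=-\partial\mathcal{L}/\partial\mathsf{q}$, so (\ref{donwe4c}) reads $\partial_{t}\mathsf{p}_{t}+\partial_{z}\mathsf{p}_{z}=\partial\mathcal{L}/\partial\mathsf{q}$, which on substitution of (\ref{donwe3a})--(\ref{donwe3b}) is exactly the Euler--Lagrange equation (\ref{donwe2}).

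For the second stage I focus on the $t$-reduction; the $z$-reduction follows by interchanging the roles of $t$ and $z$. I would use the inversion (\ref{donwe5b}) to write $\mathsf{p}_{z}=K_{z}(t,z,\mathsf{q},\mathsf{p}_{t},\mathsf{q}_{,z})$. Substituting this into (\ref{donwe4a}) eliminates $\mathsf{p}_{z}$ from the right-hand side and produces the first equation of (\ref{donwe6a}) with $F_{\mathsf{q}}(t,z,\mathsf{q},\mathsf{q}_{,z},\mathsf{p}_{t})$. Substituting also into (\ref{donwe4c}) and expanding $\partial_{z}\mathsf{p}_{z}$ by the chain rule on $K_{z}$ brings in $\mathsf{q}_{,zz}$ and $\mathsf{p}_{t,z}$, yielding the second equation of (\ref{donwe6a}) with $F_{\mathsf{p}}$. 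Equation (\ref{donwe4b}) becomes redundant: it is the inverse of (\ref{donwe3b}), which is precisely the relation used to produce $K_{z}$ in (\ref{donwe5b}), so it is automatically satisfied once the other two equations and (\ref{donwe5b}) are imposed.

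The principal obstacle, and the hypothesis that has to be made explicit, is the invertibility required to pass from (\ref{donwe3a})--(\ref{donwe3b}) to the representations (\ref{donwe5a})--(\ref{donwe5b}): for the $t$-reduction one needs the Hessian block $\partial^{2}\mathcal{L}/\partial\mathsf{q}_{,t}\partial\mathsf{q}_{,t}$ to be non-degenerate so that $\mathsf{q}_{,t}$ can be solved from (\ref{donwe3a}) in terms of $(\mathsf{q},\mathsf{q}_{,z},\mathsf{p}_{t})$, and symmetrically $\partial^{2}\mathcal{L}/\partial\mathsf{q}_{,z}\partial\mathsf{q}_{,z}$ must be non-degenerate for (\ref{donwe6b}). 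For the quadratic Lagrangians employed here this reduces to invertibility of the appropriate diagonal blocks of $M_{\mathrm{L}}$ in (\ref{Lagqq3}), which can be checked directly; everything else in the argument is bookkeeping through the chain rule.
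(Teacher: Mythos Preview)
Your approach is correct and follows the same route as the paper: the paper's ``proof'' is the derivation preceding the theorem statement (equations (\ref{donwe3a})--(\ref{donwe6b})), which first establishes equivalence with the full DW system (\ref{donwe4a})--(\ref{donwe4c}) and then eliminates one momentum via the representation (\ref{donwe5b}), exactly as you outline. Your proposal adds welcome detail---the chain-rule verification that (\ref{donwe4a})--(\ref{donwe4b}) are tautologies while (\ref{donwe4c}) is the Euler--Lagrange equation, and the explicit non-degeneracy hypothesis on the Hessian blocks---that the paper leaves implicit.
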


\subsection{ Quadratic Lagrangian densities\label{AppQuadLag}}

In this section we present some results concerning a special family of
Lagrangians, namely those quadratic in the derivatives (and independent both
of coordinates and the fields). \ This kind of Lagrangians often appear in
practice, in particular in the TL-beam interaction system. Thus, let us
consider a quadratic Lagrangian density of the form%
\begin{equation}
\mathcal{L(}\mathsf{q}_{,t},\mathsf{q}_{,z})=\frac{1}{2}\partial_{t}%
\mathsf{q}^{\mathrm{T}}\alpha\partial_{t}\mathsf{q}+\partial_{t}%
\mathsf{q}^{\mathrm{T}}\theta\partial_{z}\mathsf{q}-\frac{1}{2}\partial
_{z}\mathsf{q}^{\mathrm{T}}\eta\partial_{z}\mathsf{q}, \label{qulaq1}%
\end{equation}
where $\mathsf{q}=\left\{  \mathsf{q}_{j}\left(  t,z\right)  ,\ j=1,\ldots
n\right\}  $ are real valued fields depending on time $t$ and one-dimensional
space variable $z$, $\mathsf{q}_{,t}=\partial_{t}\mathsf{q,}$ $\mathsf{q}%
_{,z}=\partial_{z}\mathsf{q}$ and $\alpha\left(  t,z\right)  $, $\eta\left(
t,z\right)  $, $\theta\left(  t,z\right)  $ are symmetric $n\times n$ matrices
with real entries, that is
\begin{equation}
\alpha^{\mathrm{T}}=\alpha,\qquad\eta^{\mathrm{T}}=\eta,\qquad\theta
^{\mathrm{T}}=\theta. \label{qulaq2}%
\end{equation}
The Lagrangian density (\ref{qulaq1}) can be recast into the following form,
involving a block matrix:%
\begin{equation}
\mathcal{L}=\frac{1}{2}\mathsf{u}^{\mathrm{T}}M_{\mathrm{L}}\mathsf{u;}\qquad
M_{\mathrm{L}}=\left[
\begin{array}
[c]{ll}%
\alpha & \theta\\
\theta & -\eta
\end{array}
\right]  ,\qquad\mathsf{u}=\left[
\begin{array}
[c]{l}%
\partial_{t}\mathsf{q}\\
\partial_{z}\mathsf{q}%
\end{array}
\right]  . \label{qulaq3}%
\end{equation}
The Euler-Lagrange equation (\ref{donwe2}) for this Lagrangian is
\begin{equation}
\left[  \partial_{t}\alpha\partial_{t}+\partial_{t}\theta\partial_{z}%
+\partial_{z}\theta\partial_{t}-\partial_{z}\eta\partial_{z}\right]
\mathsf{q}=0. \label{qulaq4}%
\end{equation}

Now we would like to use the DW Hamiltonian approach from the previous section
to recast the second order differential $n\times n$ system (\ref{qulaq4}) into
first order ones with respect to $t$ and with respect to $z$ as well. With
that in mind we introduce the canonical momenta as in (\ref{donwe3a}%
)-(\ref{donwe3b})%
\begin{equation}
\mathsf{p}_{t}=\frac{\partial\mathcal{L}}{\partial\mathsf{q}_{,t}}%
=\alpha\partial_{t}\mathsf{q}+\theta\partial_{z}\mathsf{q},\qquad
\mathsf{p}_{z}=\frac{\partial\mathcal{L}}{\partial\mathsf{q}_{,z}}%
=\theta\partial_{t}\mathsf{q}-\eta\partial_{z}\mathsf{q}, \label{qulaq5}%
\end{equation}
which can be recast as%
\begin{equation}
\mathsf{p}=\left[
\begin{array}
[c]{l}%
\mathsf{p}_{t}\\
\mathsf{p}_{z}%
\end{array}
\right]  =\left[
\begin{array}
[c]{ll}%
\alpha & \theta\\
\theta & -\eta
\end{array}
\right]  \left[
\begin{array}
[c]{l}%
\partial_{t}\mathsf{q}\\
\partial_{z}\mathsf{q}%
\end{array}
\right]  =M_{\mathrm{L}}\mathsf{u}, \label{qulaq5a}%
\end{equation}
or%
\begin{equation}
\left[
\begin{array}
[c]{l}%
\partial_{t}\mathsf{q}\\
\partial_{z}\mathsf{q}%
\end{array}
\right]  =\mathsf{u}=M_{\mathrm{L}}^{-1}\mathsf{p}=M_{\mathrm{L}}^{-1}\left[
\begin{array}
[c]{l}%
\mathsf{p}_{t}\\
\mathsf{p}_{z}%
\end{array}
\right]  . \label{qulaq5b}%
\end{equation}
Notice that the difference in signs in expressions for momenta $\mathsf{p}%
_{t}$ and $\mathsf{p}_{z}$ in (\ref{qulaq5}) is due to difference in signs for
matrices $\alpha$ and $\eta$ as they enter the expressions for the kinetic and
potential energies in the Lagrangian density defined by (\ref{qulaq1}).

Solving equations (\ref{qulaq5a}) for $\partial_{t}\mathsf{q}$ and
$\partial_{z}\mathsf{q}$ we obtain%
\begin{equation}
\partial_{t}\mathsf{q}=\alpha^{-1}\left(  \mathsf{p}_{t}-\theta\partial
_{z}\mathsf{q}\right)  ,\qquad\partial_{z}\mathsf{q}=\eta^{-1}\left(
\theta\partial_{t}\mathsf{q}-\mathsf{p}_{z}\right)  . \label{qulaq5c}%
\end{equation}
Using (\ref{qulaq1}) and (\ref{qulaq5}) we get the following identity
\begin{gather}
\mathsf{p}_{t}^{\mathrm{T}}\partial_{t}\mathsf{q}+\mathsf{p}_{z}^{\mathrm{T}%
}\partial_{z}\mathsf{q}=\partial_{t}\mathsf{q}^{\mathrm{T}}\mathsf{p}%
_{t}+\partial_{z}\mathsf{q}^{\mathrm{T}}\mathsf{p}_{z}=\label{qulaq5ca}\\
=\partial_{t}\mathsf{q}^{\mathrm{T}}\left(  \alpha\partial_{t}\mathsf{q}%
+\theta\partial_{z}\mathsf{q}\right)  +\partial_{z}\mathsf{q}^{\mathrm{T}%
}\left(  \theta\partial_{t}\mathsf{q}-\eta\partial_{z}\mathsf{q}\right)
=2L.\nonumber
\end{gather}
Then in view of (\ref{qulaq5ca}) the general DW Hamiltonian $\mathcal{H}%
_{\mathrm{DW}}$ defined by (\ref{donwe3c}) takes here the form%
\begin{equation}
\mathcal{H}_{\mathrm{DW}}=\mathsf{p}_{t}^{\mathrm{T}}\partial_{t}%
\mathsf{q}+\mathsf{p}_{z}^{\mathrm{T}}\partial_{z}\mathsf{q}-\mathcal{L}%
=\mathcal{L}=\frac{1}{2}\partial_{t}\mathsf{q}^{\mathrm{T}}\alpha\partial
_{t}\mathsf{q}+\partial_{t}\mathsf{q}^{\mathrm{T}}\theta\partial_{z}%
\mathsf{q}-\frac{1}{2}\partial_{z}\mathsf{q}^{\mathrm{T}}\eta\partial
_{z}\mathsf{q}. \label{qulaq5cb}%
\end{equation}
Another way to obtain a representation for the DW Hamiltonian is to use
(\ref{qulaq5b}) yielding%
\begin{equation}
\mathcal{H}_{\mathrm{DW}}=\mathsf{p}^{\mathrm{T}}\mathsf{u}-\frac{1}%
{2}\mathsf{u}^{\mathrm{T}}M_{\mathrm{L}}\mathsf{u}=\mathsf{u}^{\mathrm{T}%
}M_{\mathrm{L}}\mathsf{u}-\frac{1}{2}\mathsf{u}^{\mathrm{T}}M_{\mathrm{L}%
}\mathsf{u}=\frac{1}{2}\mathsf{u}^{\mathrm{T}}M_{\mathrm{L}}\mathsf{u}%
=\mathcal{L}=\frac{1}{2}\mathsf{p}^{\mathrm{T}}M_{\mathrm{L}}^{-1}\mathsf{p}.
\label{qulaq5d}%
\end{equation}
Observe that the DW Hamiltonian $\mathcal{H}$ equals the Lagrangian
$\mathcal{L}$ at the corresponding point, that is%
\begin{equation}
\mathcal{H}_{\mathrm{DW}}=\mathcal{H}_{\mathrm{DW}}\left(  \mathsf{p}\right)
=\mathcal{L}\left(  \mathsf{u}\right)  =\mathcal{L},\text{ where }%
\mathsf{p}=M_{\mathrm{L}}\mathsf{u} \label{qulaq5da}%
\end{equation}
(actually, this is a general property of the Legendre transform of homogeneous
quadratic polynomials). The equation (\ref{donwe4c}) takes here the form%
\begin{equation}
\partial_{t}\mathsf{p}_{t}+\partial_{z}\mathsf{p}_{z}=0. \label{qulaq5e}%
\end{equation}

To obtain the first order equations with respect to $t$ we pick the pair
$\mathsf{p}_{t},\mathsf{q}$. We use equations (\ref{qulaq5e}) and
(\ref{qulaq5c}) for respectively $\partial_{t}\mathsf{p}_{t}$ and
$\partial_{t}\mathsf{q}$. We eliminate $\mathsf{p}_{z}$ in (\ref{qulaq5e}) by
using its representation (\ref{qulaq5}) getting the system%
\begin{gather}
\partial_{t}\mathsf{p}_{t}=-\partial_{z}\mathsf{p}_{z}=-\partial_{z}%
\theta\partial_{t}\mathsf{q}+\partial_{z}\eta\partial_{z}\mathsf{q}%
=-\partial_{z}\theta\alpha^{-1}\left(  \mathsf{p}_{t}-\theta\partial
_{z}\mathsf{q}\right)  +\partial_{z}\eta\partial_{z}\mathsf{q},
\label{qulaq6a}\\
\partial_{t}\mathsf{q}=\alpha^{-1}\left(  \mathsf{p}_{t}-\theta\partial
_{z}\mathsf{q}\right)  . \label{qulaq6b}%
\end{gather}
Observe that we used equation (\ref{qulaq6b}) to get the right-hand side of
equation (\ref{qulaq6a}). The above system can be written in matrix form
\begin{equation}
\partial_{t}\left[
\begin{array}
[c]{l}%
\mathsf{p}_{t}\\
\mathsf{q}%
\end{array}
\right]  =\left[
\begin{array}
[c]{ll}%
-\partial_{z}\theta\alpha^{-1} & \partial_{z}\eta\partial_{z}+\partial
_{z}\theta\alpha^{-1}\theta\partial_{z}\\
\alpha^{-1} & -\alpha^{-1}\theta\partial_{z}%
\end{array}
\right]  \left[
\begin{array}
[c]{l}%
\mathsf{p}_{t}\\
\mathsf{q}%
\end{array}
\right]  . \label{qulaq6c}%
\end{equation}
One can recast the above system into a canonical Hamiltonian form by using the
following symplectic matrix%
\begin{equation}
J=\left[
\begin{array}
[c]{cc}%
0 & -\mathbf{1}\\
\mathbf{1} & 0
\end{array}
\right]  ,\quad J^{2}=-\mathbf{1},\quad J=-J^{\mathrm{T}}. \label{qulaq6d}%
\end{equation}
Namely,%
\begin{equation}
\partial_{t}V=JM_{\mathrm{Ht}}V,\quad V=\left[
\begin{array}
[c]{l}%
\mathsf{p}_{t}\\
\mathsf{q}%
\end{array}
\right]  \label{qulaq6e}%
\end{equation}
where%
\begin{gather}
M_{\mathrm{Ht}}=\left[
\begin{array}
[c]{ll}%
\alpha^{-1} & -\alpha^{-1}\theta\partial_{z}\\
\partial_{z}\theta\alpha^{-1} & -\partial_{z}\theta\alpha^{-1}\theta
\partial_{z}-\partial_{z}\eta\partial_{z}%
\end{array}
\right]  =\label{qulaq6f}\\
=\left[
\begin{array}
[c]{cc}%
\mathbf{1} & 0\\
\partial_{z}\theta & \mathbf{1}%
\end{array}
\right]  \left[
\begin{array}
[c]{cc}%
\alpha^{-1} & 0\\
0 & -\partial_{z}\eta\partial_{z}%
\end{array}
\right]  \left[
\begin{array}
[c]{cc}%
\mathbf{1} & -\theta\partial_{z}\\
0 & \mathbf{1}%
\end{array}
\right]  .\nonumber
\end{gather}

To obtain the first order equations with respect to $z$ we pick the pair
$\mathsf{p}_{z},\mathsf{q}$. We use equations (\ref{qulaq5e}) and
(\ref{qulaq5c}) for respectively $\partial_{z}\mathsf{p}_{z}$ and
$\partial_{z}\mathsf{q}$. We eliminate $\mathsf{p}_{t}$ in (\ref{qulaq5e}) by
using its representation (\ref{qulaq5}) getting the system%
\begin{gather}
\partial_{z}\mathsf{p}_{z}=-\partial_{t}\mathsf{p}_{t}=-\partial_{t}\left(
\alpha\partial_{t}\mathsf{q}+\theta\partial_{z}\mathsf{q}\right)
=-\partial_{t}\alpha\partial_{t}\mathsf{q}-\partial_{t}\theta\eta^{-1}\left(
\theta\partial_{t}\mathsf{q}-\mathsf{p}_{z}\right)  ,\label{qulaq7}\\
\partial_{z}\mathsf{q}=\eta^{-1}\left(  \theta\partial_{t}\mathsf{q}%
-\mathsf{p}_{z}\right)  . \label{qulaq7a}%
\end{gather}
Observe that we used equation (\ref{qulaq7a}) to get the right-hand side of
equation (\ref{qulaq7}). The above system can be written as
\begin{equation}
\partial_{z}\left[
\begin{array}
[c]{l}%
\mathsf{p}_{z}\\
\mathsf{q}%
\end{array}
\right]  =\left[
\begin{array}
[c]{ll}%
\partial_{t}\theta\eta^{-1} & -\partial_{t}\alpha\partial_{t}-\partial
_{t}\theta\eta^{-1}\theta\partial_{t}\\
-\eta^{-1} & \eta^{-1}\theta\partial_{t}%
\end{array}
\right]  \left[
\begin{array}
[c]{l}%
\mathsf{p}_{z}\\
\mathsf{q}%
\end{array}
\right]  . \label{qulaq7b}%
\end{equation}
The system (\ref{qulaq7b}) can be transformed into the following canonical
Hamiltonian form%
\begin{equation}
\partial_{z}V=JM_{\mathrm{Hz}}V,\quad V=\left[
\begin{array}
[c]{l}%
\mathsf{p}_{z}\\
\mathsf{q}%
\end{array}
\right]  , \label{qulaq7c}%
\end{equation}
where%
\begin{gather}
M_{\mathrm{Hz}}=\left[
\begin{array}
[c]{ll}%
-\eta^{-1} & \eta^{-1}\theta\partial_{t}\\
-\partial_{t}\theta\eta^{-1} & \partial_{t}\alpha\partial_{t}+\partial
_{t}\theta\eta^{-1}\theta\partial_{t}%
\end{array}
\right]  =\label{qulaq7d}\\
=\left[
\begin{array}
[c]{cc}%
\mathbf{1} & 0\\
\partial_{t}\theta & \mathbf{1}%
\end{array}
\right]  \left[
\begin{array}
[c]{ll}%
-\eta^{-1} & 0\\
0 & \partial_{t}\alpha\partial_{t}%
\end{array}
\right]  \left[
\begin{array}
[c]{cc}%
\mathbf{1} & -\theta\partial_{t}\\
0 & \mathbf{1}%
\end{array}
\right]  .\nonumber
\end{gather}
Comparing expressions (\ref{qulaq6f}) and (\ref{qulaq7d}) we observe a
noticeable difference in signs that is explained by the difference in signs in
the expressions for the kinetic and potential energies in the Lagrangian
density defined by (\ref{qulaq1}).

We can transform the system (\ref{qulaq7c})-(\ref{qulaq7d}) further yet into
another form intimately related to the energy conservation law. For that we
begin with the identity%
\begin{gather}
M_{\mathrm{Hz}}=\left[
\begin{array}
[c]{ll}%
-\eta^{-1} & \eta^{-1}\theta\partial_{t}\\
-\partial_{t}\theta\eta^{-1} & \partial_{t}\alpha\partial_{t}+\partial
_{t}\theta\eta^{-1}\theta\partial_{t}%
\end{array}
\right]  =\label{mhzne1}\\
=\left[
\begin{array}
[c]{cc}%
\mathbf{1} & 0\\
0 & -\partial_{t}%
\end{array}
\right]  \left[
\begin{array}
[c]{ll}%
-\eta^{-1} & \eta^{-1}\theta\\
\theta\eta^{-1} & -\alpha-\theta\eta^{-1}\theta
\end{array}
\right]  \left[
\begin{array}
[c]{cc}%
\mathbf{1} & 0\\
0 & \partial_{t}%
\end{array}
\right]  .\nonumber
\end{gather}
Based on (\ref{mhzne1}), the system (\ref{qulaq7c})-(\ref{qulaq7d}) can be
recast into the following "Hamiltonian" form
\begin{equation}
\tilde{J}\partial_{z}V=\mathrm{i}\partial_{t}\tilde{M}V,\quad V=\left[
\begin{array}
[c]{l}%
\mathsf{p}_{z}\\
\partial_{t}\mathsf{q}%
\end{array}
\right]  , \label{mhzne2}%
\end{equation}
where%
\begin{equation}
\tilde{J}=\left[
\begin{array}
[c]{cc}%
0 & \mathrm{i}\mathbf{1}\\
\mathrm{i}\mathbf{1} & 0
\end{array}
\right]  ,\qquad\tilde{M}=\left[
\begin{array}
[c]{ll}%
-\eta^{-1} & \eta^{-1}\theta\\
\theta\eta^{-1} & -\alpha-\theta\eta^{-1}\theta
\end{array}
\right]  . \label{mhzne3}%
\end{equation}
When deriving the Hamiltonian equation (\ref{mhzne2})-(\ref{mhzne3}) we used
the following identity relating $\tilde{J}$ and $J$ defined in (\ref{qulaq6d})%
\begin{equation}
\left[
\begin{array}
[c]{cc}%
\mathbf{1} & 0\\
0 & \partial_{t}%
\end{array}
\right]  J\left[
\begin{array}
[c]{cc}%
\mathbf{1} & 0\\
0 & -\partial_{t}%
\end{array}
\right]  =-\mathrm{i}\partial_{t}\left[
\begin{array}
[c]{cc}%
0 & \mathrm{i}\mathbf{1}\\
\mathrm{i}\mathbf{1} & 0
\end{array}
\right]  =-\mathrm{i}\partial_{t}\tilde{J} \label{mhzne4a}%
\end{equation}
Note that the matrices $\tilde{J}$ and $\tilde{M}$ are respectively
antihermitian and hermitian, that is%
\begin{equation}
\tilde{J}^{\ast}=-\tilde{J},\qquad\tilde{M}^{\ast}=\tilde{M}. \label{mhzne4}%
\end{equation}
Notice also that the definitions of $V$ and $\tilde{J}$ in (\ref{mhzne2}%
)-(\ref{mhzne3}) imply the identity%
\begin{equation}
V^{\ast}\tilde{J}V=\mathrm{i}\left[  \mathsf{p}_{z}^{\ast}\partial
_{t}\mathsf{q}+\left(  \partial_{t}\mathsf{q}\right)  ^{\ast}\mathsf{p}%
_{z}\right]  =2\mathrm{i\operatorname{Re}}\left\{  \left(  \partial
_{t}\mathsf{q}\right)  ^{\ast}\mathsf{p}_{z}\right\}  , \label{mhzne5}%
\end{equation}
which via the theory of Hamiltonian equations can be associated with the
energy conservation law as we show below.

\subsection{Canonical and Hamilton equations\label{AppCanHam}}

In this section we provide a concise review of canonical and Hamilton
equations following \cite[II.3.1-4]{YakSta1}. By \emph{canonical} we call an
equation of the form%
\begin{equation}
\tilde{J}\frac{dz}{dt}=\tilde{H}\left(  t\right)  z, \label{Jzhz1}%
\end{equation}
where $\tilde{H}\left(  t\right)  $ is a $2n\times2n$ symmetric matrix valued
function with real entries and $\tilde{J}$ is a constant $2n\times2n$
nondegenerate skew-symmetric matrix with real entries, that is%
\begin{equation}
\tilde{H}^{\mathrm{T}}\left(  t\right)  =\tilde{H}\left(  t\right)
,\quad\tilde{J}^{\mathrm{T}}=-\tilde{J},\quad\left\vert \tilde{J}\right\vert
\neq0. \label{Jzhz2}%
\end{equation}
The matrix $\tilde{H}\left(  t\right)  $ in (\ref{Jzhz1}) is a called
"Hamiltonian" of the equation. A standard form of $2n\times2n$ nondegenerate
skew-symmetric matrix $J$ \ is%
\begin{equation}
J_{2n}=\left[
\begin{array}
[c]{cc}%
0 & -\mathbf{1}_{n}\\
\mathbf{1}_{n} & 0
\end{array}
\right]  . \label{Jzhz3}%
\end{equation}
The canonical equation (\ref{Jzhz1}) can be always reduced to the special form%
\begin{equation}
J_{2n}\frac{dx}{dt}=H\left(  t\right)  x, \label{Jzhz4}%
\end{equation}
by means of a linear change of variables,\textit{\ i.e.}
\begin{equation}
x=Sz,\qquad\tilde{J}=S^{\mathrm{T}}J_{2n}S,\qquad\tilde{H}\left(  t\right)
=S^{\mathrm{T}}H\left(  t\right)  S \label{Jzhz5}%
\end{equation}
for some real nondegenerate $2n\times2n$ matrix $S$.

We call an equation \emph{Hamiltonian} if it is of the form (\ref{Jzhz1}) and
(i) $\tilde{H}\left(  t\right)  $ is a Hermitian matrix with complex valued
entries; (ii) $\tilde{J}$ is a constant nondegenerate antihermitian matrix,
that is%
\begin{equation}
\tilde{H}^{\ast}\left(  t\right)  =\tilde{H}\left(  t\right)  ,\qquad\tilde
{J}^{\ast}=-\tilde{J},\qquad\left\vert \tilde{J}\right\vert \neq0.
\label{Jzhz6}%
\end{equation}
Canonical equations are of course Hamiltonian. A Hamiltonian equation
(\ref{Jzhz1}) can be always reduced by a transformation $x=Sz$ \ with a
nondegenerate $S$ \ to the following special form%
\begin{equation}
-iG_{0}\frac{dx}{dt}=H_{0}\left(  t\right)  x, \label{Jzhz7}%
\end{equation}
where $H_{0}\left(  t\right)  $ is a Hermitian matrix and%
\begin{equation}
G_{0}=\left[
\begin{array}
[c]{cc}%
\mathbf{1}_{p} & 0\\
0 & -\mathbf{1}_{q}%
\end{array}
\right]  ,\text{ where }p+q=2n. \label{Jzhz8}%
\end{equation}

Any matrix solution $Z\left(  t\right)  $ to the Hamiltonian equation
(\ref{Jzhz1}) satisfies the identity, \cite[II.3.4]{YakSta1}
\begin{equation}
Z\left(  t\right)  ^{\ast}\tilde{J}Z\left(  t\right)  =\tilde{J},
\label{Jzhz9}%
\end{equation}
and for any two vector solutions $z_{1}\left(  t\right)  $ and $z_{2}\left(
t\right)  $ there holds%
\begin{equation}
\left(  z_{1}\left(  t\right)  ,\tilde{J}z_{2}\left(  t\right)  \right)
=\left[  z_{1}\left(  t\right)  \right]  ^{\ast}\tilde{J}z_{2}\left(
t\right)  =\operatorname*{constant}. \label{Jzhz10}%
\end{equation}
(so called Poincar\'{e} invariant).

\subsection{Energy exchange between subsystems\label{AppEnergyExchange}}

In this section we derive a general formula for the energy flux between two
systems constituting a closed conservative system described by the Lagrangian
$\mathcal{L=L(}\mathsf{q}_{t},\mathsf{q}_{z})$ With the MTLB Lagrangian in
mind let us put $\mathsf{q}=(Q,q)$ and assume that $\mathcal{L} $ can be split
as%
\begin{equation}
\mathcal{L}=\mathcal{L}_{1}\left(  \partial_{t}Q,\partial_{;z}Q\right)
+\mathcal{L}_{2}\left(  \partial_{t}q,\partial_{z}q\right)  , \label{gblag1}%
\end{equation}
where
\[
\partial_{;z}Q=\partial_{z}Q+B\partial_{z}q
\]
and $B$ is a fixed matrix. \ The Lagrangian $\mathcal{L}$ of the general form
(\ref{gblag1}) describes two coupled interacting systems. The special form of
coupling via the modified derivative $\partial_{;z}Q$ in (\ref{gblag1})
resembles the minimal coupling in the charge gauge theory. The variable $q$
plays the role of the gauge field potential and $B$ \ plays the role of
coupling constant.

The corresponding Euler-Lagrange equations are (\ref{gblag3}), (\ref{gblag4})%
\begin{equation}
\partial_{t}\frac{\partial\mathcal{L}_{1}}{\partial\partial_{t}Q}+\partial
_{z}\frac{\partial\mathcal{L}_{1}}{\partial\partial_{;z}Q}=0, \label{gblag3}%
\end{equation}%
\begin{equation}
\partial_{t}\frac{\partial\mathcal{L}_{2}}{\partial\partial_{t}q}+\partial
_{z}\left[  \frac{\partial\mathcal{L}_{2}}{\partial\partial_{z}q}%
+\frac{\partial\mathcal{L}_{1}}{\partial\partial_{;z}Q}B\right]  =0,
\label{gblag4}%
\end{equation}
where the derivative $\frac{\partial\mathcal{L}}{\partial Q}$ of the scalar
function $\mathcal{L}$ with respect to a column-vector $Q$ is understood as a
row-vector of the same dimension.

Recall now that the energy conservation law for the entire system has the
form, \cite[38.2-3]{GelFom}, \cite[13.7]{Gold}%
\begin{equation}
\partial_{t}H+\partial_{z}S=0, \label{gblag5}%
\end{equation}
where $H$ and $S$ are the energy and energy flux densities defined by%
\begin{equation}
H=H_{1}+H_{2},\qquad S=S_{1}+S_{2}, \label{gblag6}%
\end{equation}
with the following expressions for the individual energies and energy fluxes%
\begin{equation}
H_{1}=\frac{\partial\mathcal{L}_{1}}{\partial\partial_{t}Q}\partial
_{t}Q-\mathcal{L}_{1}\left(  \partial_{t}Q,\partial_{;z}Q\right)  ,\qquad
S_{1}=\frac{\partial\mathcal{L}_{1}}{\partial\partial_{;z}Q}\partial_{t}Q,
\label{gblag7}%
\end{equation}%
\begin{equation}
H_{2}=\frac{\partial\mathcal{L}_{2}}{\partial\partial_{t}q}\partial
_{t}q-\mathcal{L}_{2}\left(  \partial_{t}q,\partial_{z}q\right)  ,\qquad
S_{2}=\left[  \frac{\partial\mathcal{L}_{2}}{\partial\partial_{z}q}%
+\frac{\partial\mathcal{L}_{1}}{\partial\partial_{;z}Q}B\right]  \partial
_{t}q. \label{gbalg8}%
\end{equation}
The above expressions imply the following identities for the first system%
\begin{gather}
\partial_{t}H_{1}=\frac{\partial\mathcal{L}_{1}}{\partial\partial_{t}%
Q}\partial_{t}^{2}Q+\partial_{t}\left(  \frac{\partial\mathcal{L}_{1}%
}{\partial\partial_{t}Q}\right)  \partial_{t}Q-\frac{\partial\mathcal{L}_{1}%
}{\partial\partial_{t}Q}\partial_{t}^{2}Q-\frac{\partial\mathcal{L}_{1}%
}{\partial\partial_{;z}Q}\left(  \partial_{tz}^{2}Q+B\partial_{tz}%
^{2}q\right)  =\label{gblag9}\\
=\partial_{t}\left(  \frac{\partial\mathcal{L}_{1}}{\partial\partial_{t}%
Q}\right)  \partial_{t}Q-\frac{\partial\mathcal{L}_{1}}{\partial\partial
_{;z}Q}\left(  \partial_{tz}^{2}Q+B\partial_{tz}^{2}q\right)  ,\nonumber
\end{gather}%
\begin{equation}
\partial_{z}S_{1}=\partial_{z}\left(  \frac{\partial\mathcal{L}_{1}}%
{\partial\partial_{;z}Q}\right)  \partial_{t}Q+\frac{\partial\mathcal{L}_{1}%
}{\partial\partial_{;z}Q}\partial_{tz}^{2}Q. \label{gblag10}%
\end{equation}
The equations (\ref{gblag9}), (\ref{gblag10}), combined with the
Euler-Lagrange equations (\ref{gblag3}), yield the following energy
conservation law for the first system%
\begin{equation}
\partial_{t}H_{1}+\partial_{z}S_{1}=-\frac{\partial\mathcal{L}_{1}}%
{\partial\partial_{;z}Q}B\partial_{tz}^{2}q, \label{gblag11}%
\end{equation}
where the right-hand side of (\ref{gblag11}) can be interpreted as the power
flow density from the second system into the first one.

Carrying out similar computations for the second system we obtain%
\begin{gather}
\partial_{t}H_{2}=\frac{\partial\mathcal{L}_{2}}{\partial\partial_{t}%
q}\partial_{t}^{2}q+\partial_{t}\left(  \frac{\partial\mathcal{L}_{2}%
}{\partial\partial_{t}q}\right)  \partial_{t}q-\frac{\partial\mathcal{L}_{2}%
}{\partial\partial_{t}q}\partial_{t}^{2}q-\frac{\partial\mathcal{L}_{2}%
}{\partial\partial_{z}q}\partial_{tz}^{2}q=\label{gblag12}\\
=\partial_{t}\left(  \frac{\partial\mathcal{L}_{2}}{\partial\partial_{t}%
q}\right)  \partial_{t}q-\frac{\partial\mathcal{L}_{2}}{\partial\partial_{z}%
q}\partial_{tz}^{2}q,\nonumber
\end{gather}%
\begin{equation}
\partial_{z}S_{2}=\partial_{z}\left[  \frac{\partial\mathcal{L}_{2}}%
{\partial\partial_{z}q}+\frac{\partial\mathcal{L}_{1}}{\partial\partial_{;z}%
Q}B\right]  \partial_{t}q+\left[  \frac{\partial\mathcal{L}_{2}}%
{\partial\partial_{z}q}+\frac{\partial\mathcal{L}_{1}}{\partial\partial_{;z}%
Q}B\right]  \partial_{tz}^{2}q. \label{gblag13}%
\end{equation}
Combining equations (\ref{gblag12}) and (\ref{gblag13}) with the
Euler-Lagrange equations (\ref{gblag4}) for the second system we obtain the
following conservation law%
\begin{equation}
\partial_{t}H_{2}+\partial_{z}S_{2}=\frac{\partial\mathcal{L}_{1}}%
{\partial\partial_{;z}Q}B\partial_{tz}^{2}q, \label{gblag14}%
\end{equation}
where the right-hand side of (\ref{gblag14}) can be interpreted as the power
density flow transferred from the first system into the second one.

Notice that relations (\ref{gblag11}) and (\ref{gblag14}) have right-hand
sides of the same magnitude and opposite signs. This can be viewed as a
manifestation of the conservation of energy for the entire system. Indeed we
recover (\ref{gblag5}) by adding (\ref{gblag11}) and (\ref{gblag14}).

\subsection{Amplification for homogeneous MTLB systems:
proofs.\label{AppAmplification}}

This section contains rigorous formulations and proofs of the assertions made
in Section \ref{AmplMTL-beam}.

\begin{theorem}
Let $\ $the hypotheses in Theorem \ref{TeoremAmplification} hold. Then, there
is a unique pair of complex conjugate solutions $v_{0},$ $v_{0}^{\ast}$ of the
equation $\left\vert \widetilde{A}(v)\right\vert =0$, where $\widetilde{A}(v)$
is defined in (\ref{ddcj1}), (\ref{ddcj2}).
\end{theorem}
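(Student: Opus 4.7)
The plan is to reduce the assertion to a real-root count: since $|\widetilde{A}(v)|$ is a polynomial of degree $2n+2$ with real coefficients (the leading term $(-1)^{n+1}\xi|L|v^{2n+2}$ coming from $\xi(v-u_0)^2|A(v)|$), exhibiting exactly $2n$ real roots counted with multiplicity will force the remaining two roots to form a single complex conjugate pair $v_0,v_0^{\ast}$, which is the desired uniqueness.

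First I would exploit the canonical factorization (\ref{CanFact}) and the explicit formula (\ref{R}) to reduce the analysis to intersections of the graph of the MTL characteristic function $y=R(v)$ with the parabola $y=-\xi(v-u_0)^2$, while separately accounting for those $\pm v_i$ with $\widetilde{D}_i=0$ which survive as roots of $|\widetilde{A}|$ itself. A direct computation using $D=C^{-1}B$ gives $\sum_i\widetilde{D}_i^2/v_i^2=D^{\mathrm{T}}A(0)^{-1}D=B^{\mathrm{T}}C^{-1}B=d$, so $R(0)=0$. Combined with the signs of $R'(v)=\sum_i 2v\widetilde{D}_i^2/(v_i^2-v^2)^2$, this yields the graph described in Section \ref{AmplMTL-beam}: a central branch with global minimum $0$ attained at $v=0$, and, on each side, a sequence of strictly monotone branches bounded by the vertical asymptotes at those $\pm v_i$ for which $\widetilde{D}_i\neq 0$ and by the horizontal asymptote $y=-d$ at infinity.

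Next I would dispose of the central branch by a sign argument: on $(-v_{j_1},v_{j_1})$ one has $R(v)\ge 0$ with equality only at $v=0$, whereas $-\xi(v-u_0)^2\le 0$ with equality only at $v=u_0>0$, so the two curves never meet. On each non-central branch I will show there is exactly one intersection. Under hypothesis (i), $u_0\le v_1$ guarantees that on every right non-central branch $p(v):=-\xi(v-u_0)^2$ is strictly decreasing while $R$ is strictly increasing, hence $R-p$ is strictly monotone and passes from $-\infty$ at the left endpoint to $+\infty$ at the right endpoint (including the unbounded branch where $R\to-d$ while $p\to-\infty$), giving exactly one root by the intermediate value theorem; the left side is symmetric. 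Under hypothesis (ii) the parabola may fail to be monotone on the branch containing $u_0$, but the sufficient bounds (\ref{xismall}) and (\ref{xismallbis}) are designed precisely to enforce $|p'(v)|<R'(v)$ on the relevant branches, restoring strict monotonicity of $R-p$ and hence uniqueness.

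Finally I would close the count. In the generic case $\widetilde{D}_i\neq 0$ for all $i$, $R$ has $2n$ non-central branches, hence $2n$ real roots of the second factor of (\ref{CanFact}); no extra real roots arise from $|A|=0$, because the Schur-complement limit $\lim_{v\to v_i}|A(v)|[d-\xi(v-u_0)^2-D^{\mathrm{T}}A(v)^{-1}D]$ is nonzero whenever $\widetilde{D}_i\neq 0$ and $v_i$ is a simple eigenvalue of $L^{-1/2}C^{-1}L^{-1/2}$. In the degenerate case, each index with $\widetilde{D}_i=0$ removes a pair of asymptotes from $R$ (merging two adjacent non-central branches into a single strictly monotone branch, which still contributes exactly one intersection with the parabola by the same argument) and simultaneously contributes $\pm v_i$ as roots of $|\widetilde{A}|$, so the total remains $2n$. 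The main obstacle will be the clean bookkeeping of this compensation when several $\widetilde{D}_i$ vanish and when $L^{-1/2}C^{-1}L^{-1/2}$ has repeated eigenvalues; I expect to handle it either by induction on the number of vanishing $\widetilde{D}_i$, or, more robustly, by a continuity argument tracking the roots of $|\widetilde{A}|$ under small perturbations of the degenerate parameters. Once $2n$ real roots are secured, the degree count forces the remaining two to form a unique complex conjugate pair, completing the proof.
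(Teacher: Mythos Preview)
Your proposal is correct and follows essentially the same route as the paper: reduce $|\widetilde{A}(v)|=0$ via the Schur-complement factorization to the intersection problem $-\xi(v-u_0)^2=R(v)$, exploit $R(0)=0$ and the monotonicity of the non-central branches to secure exactly one real root per branch under (i) or (ii), and then check that the degenerate contributions from $\{\pm v_i\}$ compensate exactly for the lost asymptotes so that the real-root count stays at $2n$.

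The one place where you diverge from the paper is the degenerate bookkeeping. The paper does not argue by induction or continuity; instead it writes down the expanded determinant
\[
\Delta(v)=\prod_{i=1}^{n}(v_i^2-v^2)\bigl[d-\xi(v-u_0)^2\bigr]-\sum_{i=1}^{n}\widetilde{D}_i^{\,2}\prod_{j\neq i}(v_j^2-v^2),
\]
from which the multiplicity of each $\pm v_i$ as a root of $|\widetilde{A}|$ can be read off directly (it is $k$ if all $\widetilde{D}_i,\dots,\widetilde{D}_{i+k-1}$ vanish and $k-1$ otherwise, where $k$ is the eigenvalue multiplicity). This is cleaner than your proposed continuity argument, which is genuinely delicate here: under perturbation of the parameters real roots can collide and go complex, so root-counting is not stable, and you would need extra control to rule that out. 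Your inductive alternative would work but effectively reproduces the paper's explicit count; I would recommend simply writing down $\Delta(v)$ as above and reading off the multiplicities, which settles the repeated-eigenvalue case without any limiting procedure.
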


\begin{proof}
By our assumption, the equation $\left\vert A(v)\right\vert =\left\vert
-v^{2}L+C^{-1}\right\vert =0$ has exactly $2n$ real roots, $\pm v_{1},\pm
v_{2},...\pm v_{n},$ with $v_{i}>0$ $(\lambda_{i}=v_{i}^{2})$. We assume in
what follows that they are ordered: $0<v_{1}\leq v_{2}\leq...\leq v_{n}$ and
each root is repeated a number of times equal to its multiplicity. \ If
$\left\vert A(v)\right\vert \neq0,$ the following decomposition holds%
\begin{equation}
\left\vert \widetilde{A}(v)\right\vert =\left\vert A(v)\right\vert \left[
d-\xi(v-u_{0})^{2}-D^{T}(A(v))^{-1}D\right]  . \label{CanonicalFact}%
\end{equation}
This follows from the following more general fact: if $M$ is a square block
matrix of the form%
\[
M=\left[
\begin{array}
[c]{cc}%
A_{1} & A_{2}\\
A_{3} & A_{4}%
\end{array}
\right]  ,
\]
where $A_{1},A_{4}$ are square matrices with $\left\vert A_{1}\right\vert
\neq0$, then%
\[
\left\vert M\right\vert =\left\vert A_{1}\right\vert \left\vert A_{4}%
-A_{3}A_{1}^{-1}A_{2}\right\vert ,
\]
see e.g. \cite[Lemma 2.8.6, page 108]{Bern}. Observe that in our case
$A_{2}=D$ is a column matrix and $A_{3}=D^{\mathrm{T}}$ is a row matrix. Then,
if $\left\vert A(v)\right\vert \neq0$, $v$ is a root of $\left\vert
\widetilde{A}(v)\right\vert =0$ if and only if it is a root of the equation
\[
-\xi(v-u_{0})^{2}=D^{T}(A(v))^{-1}D-d=:R(v).
\]
The function $R(v)$ above turns out to have very nice properties. A well known
fact from linear algebra concerning simultaneous diagonalization of two
quadratic forms, one of which is positive, assures that there exists a
non-degenerate matrix $P$ such that%
\[
P^{T}A(v)P=\mathrm{diag}\text{ }(v):=\left[
\begin{array}
[c]{cccc}%
v_{1}^{2}-v^{2} & 0 & \cdots & 0\\
0 & v_{2}^{2}-v^{2} & \cdots & 0\\
\vdots & \vdots & \ddots & 0\\
0 & 0 & 0 & v_{n}^{2}-v^{2}%
\end{array}
\right]  .
\]
Consequently,%
\[
D^{T}(A(v))^{-1}D=\widetilde{D}^{T}\left[
\begin{array}
[c]{cccc}%
\frac{1}{v_{1}^{2}-v^{2}} & 0 & \cdots & 0\\
0 & \frac{1}{v_{2}^{2}-v^{2}} & \cdots & 0\\
\vdots & \vdots & \ddots & 0\\
0 & 0 & 0 & \frac{1}{v_{n}^{2}-v^{2}}%
\end{array}
\right]  \widetilde{D}=%
{\displaystyle\sum\limits_{1}^{n}}
\frac{\widetilde{D}_{i}^{2}}{v_{i}^{2}-v^{2}},
\]
where $\widetilde{D}=P^{T}D$. $\ $Therefore,%
\[
R(v)=%
{\displaystyle\sum\limits_{1}^{n}}
\frac{\widetilde{D}_{i}^{2}}{v_{i}^{2}-v^{2}}-d
\]
is a rational function defined on the set $\left\{  v:\text{ }\left\vert
A(v)\right\vert \neq0\right\}  $. It is immediately seen that $R$ is an even
function, exhibiting vertical asymptotes at $v=\pm v_{i}$ if at least one of
the $\widetilde{D}_{k}$ associated to $v_{i}$ does not vanish ($v_{i}$ may be
a multiple root). For $v>0,$ each branch between two consecutive asymptotes is
increasing and they are decreasing for $v<0.$ Moreover, $\lim_{v\rightarrow
\infty}R(v)=-d$ . Also,%
\begin{gather*}
R(0)+d=D^{T}(A(0))^{-1}D=D^{T}CD=%
{\displaystyle\sum\limits_{i=1}^{n}}
{\displaystyle\sum\limits_{j=1}^{n}}
C_{ij}D_{i}D_{j}=\\
=%
{\displaystyle\sum\limits_{i=1}^{n}}
{\displaystyle\sum\limits_{j=1}^{n}}
C_{ij}\left[
{\displaystyle\sum\limits_{k=1}^{n}}
(C^{-1})_{ik}\right]  \left[
{\displaystyle\sum\limits_{r=1}^{n}}
(C^{-1})_{jr}\right]  =\\
=%
{\displaystyle\sum\limits_{k=1}^{n}}
{\displaystyle\sum\limits_{r=1}^{n}}
\left[
{\displaystyle\sum\limits_{i=1}^{n}}
(C^{-1})_{ik}%
{\displaystyle\sum\limits_{j=1}^{n}}
C_{ij}(C^{-1})_{jr}\right]  =%
{\displaystyle\sum\limits_{k=1}^{n}}
{\displaystyle\sum\limits_{r=1}^{n}}
\left[
{\displaystyle\sum\limits_{i=1}^{n}}
(C^{-1})_{ik}\delta_{ir}\right] \\
=%
{\displaystyle\sum\limits_{k=1}^{n}}
{\displaystyle\sum\limits_{r=1}^{n}}
(C^{-1})_{rk}=%
{\displaystyle\sum\limits_{k=1}^{n}}
D_{k}=d;
\end{gather*}
hence $R(0)=0.$ Since $C^{-1}$ is non-degenerate, $D\neq0$. Moreover, since
the matrix $P$ is non-degenerate, we have $\widetilde{D}\neq0.$ Therefore, the
graph has at least two vertical asymptotes and always exhibits a central
symmetric branch with the minimum at the point $(0,0).$ The number of real
roots of the equation%
\[
-\xi(v-u_{0})^{2}=R(v)
\]
is the number of intersection points of the parabola $y=f(v):=-\xi
(v-u_{0})^{2}$ and the graph of $R.$ For $\xi$ small, it is exactly the number
of monotonic branches (all branches, except for the central one), which
coincides with the number of asymptotes. This number is always between $2$ and
$2n,$ depending on the number of vanishing $\widetilde{D}_{i}$ and on the
possible multiple roots; a precise description is given below. Moreover, it is
easily seen that whenever $u_{0}\in(0,v_{1}],$ the number of intersection
points is equal to the number of asymptotes irrespective of the value of
$\xi>0$; see Figure \ref{FigAmpl} (a), whereas $\xi$ small is needed
otherwise; indeed, in Figure \ref{FigAmpl} (b) a large value of $\xi$ produces
three points of intersection with the far right branch of the graph of $R$,
making the total number of intersection points exceed by two the number of
asymptotes. If either (i) or (ii) holds, the intersections are transversal,
hence the roots are simple. The previous assertions follow easily and
rigorously from the monotonicity properties of $R$ and $f$ \ but their clear
geometric meaning makes a lengthy proof unnecessary.

So far, we have considered the real roots of the equation$\left\vert
\widetilde{A}(v)\right\vert =0$ in the set $\left\{  v:\det A(v)\neq0\right\}
$. Next, we consider the possible roots of the equation in the complementary
set $\left\{  \pm v_{1},\pm v_{2},...\pm v_{n}\right\}  $. Multiplying the
matrix $\widetilde{A}(v)$ by $\widehat{P}^{T}$ from the left and by
$\widehat{P}$ from the right, where%
\[
\widehat{P}=\left[
\begin{array}
[c]{cc}%
P & 0\\
0 & 1
\end{array}
\right]  ,
\]
there follows that the equation $\ \left\vert \widetilde{A}(v)\right\vert =0$
is equivalent to the equation%
\[
\Delta(v):=\left\vert
\begin{array}
[c]{cc}%
\begin{array}
[c]{cccc}%
v_{1}^{2}-v^{2} & 0 & .. & 0\\
0 & v_{2}^{2}-v^{2} & .. & \vdots\\
\vdots & \vdots & .. & 0\\
0 & .. & 0 & v_{n}^{2}-v^{2}%
\end{array}
& \widetilde{D}\\
\widetilde{D}^{T} & d-\xi(v-u_{0})^{2}%
\end{array}
\right\vert =0,
\]
where, as before, $\widetilde{D}=P^{T}D.$ Let us analyze under what condition
$\pm v_{i}$ are roots of the equation $\Delta(v)=0$. Expanding the determinant
with respect to the last column, and then the $n$-th order minor corresponding
to $\widetilde{D}_{i}$ with respect to its $i$-th column, we get the
expression%
\begin{gather}
\Delta(v)=-\widetilde{D}_{1}^{2}\left\vert
\begin{array}
[c]{cccc}%
v_{2}^{2}-v^{2} & 0 & .. & 0\\
0 & v_{3}^{2}-v^{2} & .. & \vdots\\
\vdots & \vdots & .. & 0\\
0 & .. & 0 & v_{n}^{2}-v^{2}%
\end{array}
\right\vert -\widetilde{D}_{2}^{2}\left\vert
\begin{array}
[c]{cccc}%
v_{1}^{2}-v^{2} & 0 & .. & 0\\
0 & v_{3}^{2}-v^{2} & .. & \vdots\\
\vdots & \vdots & .. & 0\\
0 & .. & 0 & v_{n}^{2}-v^{2}%
\end{array}
\right\vert -...\label{BigDeterminant}\\
-\widetilde{D}_{n}^{2}\left\vert
\begin{array}
[c]{cccc}%
v_{1}^{2}-v^{2} & 0 & .. & 0\\
0 & v_{2}^{2}-v^{2} & .. & \vdots\\
\vdots & \vdots & .. & 0\\
0 & \cdots & 0 & v_{n-1}^{2}-v^{2}%
\end{array}
\right\vert +\left[  d-\xi(v-u_{0})^{2}\right]  \left\vert
\begin{array}
[c]{cccc}%
v_{1}^{2}-v^{2} & 0 & .. & 0\\
0 & v_{2}^{2}-v^{2} & .. & \vdots\\
\vdots & \vdots & .. & 0\\
0 & .. & 0 & v_{n}^{2}-v^{2}%
\end{array}
\right\vert ,\nonumber
\end{gather}
that is,%
\begin{equation}
\Delta(v)=%
{\displaystyle\prod\limits_{i=1}^{n}}
(v_{i}^{2}-v^{2})\left[  d-\xi(v-u_{0})^{2}\right]  -\sum_{i=1}^{n}%
\widetilde{D_{i}}^{2}%
{\displaystyle\prod\limits_{j\neq i}}
(v_{j}^{2}-v^{2}). \label{BigDetermBIS}%
\end{equation}
We note in passing that the factorization (\ref{CanonicalFact}) is easily
obtained from the above expression by extracting the factor
\[
\left\vert A(v)\right\vert =%
{\displaystyle\prod\limits_{i=1}^{n}}
(v_{i}^{2}-v^{2})
\]
under the assumption $v\neq v_{i}.$

Assume first that $\pm v_{i}$ are simple roots of $\left\vert A(v)\right\vert
=0,$ that is, that the binomial $v_{i}^{2}-v^{2}$ appears only once in the
matrix $\mathrm{diag}(v)$. Then, there follows that $\Delta(v_{i})=0$ if and
only if $\widetilde{D}_{i}=0$. Whenever this condition holds, the partial
fraction $\widetilde{D}_{i}^{2}/(v_{i}^{2}-v^{2})$ in the expression of $R$
disappears and the number of asymptotes is reduced by two. The number of real
roots is thus increased by two ( $\pm v_{i})$ and reduced by two, leaving the
total number of roots unaffected.

Let us next consider the case of a multiple root. Assume that $v_{i}%
=v_{i+1}=...=v_{i+k-1}$, hence the binomial $v_{i}^{2}-v^{2}$ appears $k$
times in $\mathrm{diag}(v)$, $k>1$. Then $\pm v_{i}$ are necessarily roots of
$\Delta(v)=0,$ as it can be readily seen from (\ref{BigDetermBIS}). As for
their multiplicity, there are two cases:

\begin{enumerate}
\item[a)] multiplicity $=k,$ if all of $\widetilde{D}_{i},\widetilde{D}%
_{i+1},...\widetilde{D}_{i+k-1}$ are zero, since in this case all non-zero
terms in (\ref{BigDetermBIS}) contain $k$ times the factor $v_{i}^{2}-v^{2}$;

\item[b)] multiplicity $=k-1,$ if not all of $\widetilde{D}_{i},\widetilde{D}%
_{i+1},...\widetilde{D}_{i+k-1}$ are zero, since in this case the terms in
(\ref{BigDetermBIS}) corresponding to the non-zero $\widetilde{D}_{r}$ \ with
$r\in\left\{  i,i+1,...i+k-1\right\}  $ contain the factor $v_{i}^{2}-v^{2}$
only $k-1$ times, while the rest contain it $k$ times.
\end{enumerate}

In case (a), all the fractions with denominator $v_{i}^{2}-v^{2}$ are missing
in the rational function $R,$ with consequent reduction of the number of
asymptotes (with respect to the total possible number $2n$) by $2k$, which is
precisely the number of additional roots, counting their multiplicity. Thus
the total number of real roots is unaffected.

In case (b), there is one fraction with denominator $v_{i}^{2}-v^{2}.$Thus the
total number of asymptotes is reduced by $2k-2,$ which is the number of
additional roots, counting their multiplicity.

Summing up, the total number of real roots of $\Delta(v)=0$ (counting their
multiplicity) is exactly $2n$ under our assumptions. Since the total number of
roots of $\Delta(v)=0$ is $2n+2,$ there is necessarily one and only one pair
of complex conjugate roots, thus proving the assertion.
\end{proof}

The following theorem deals with the behavior of amplification as
$\xi\rightarrow0$ and as $\xi\rightarrow\infty$.

\begin{theorem}
Let $v_{0},\overline{v_{0}}$ with $\operatorname*{Im}v_{0}>0$ denote the
unique pair of complex conjugate roots of the equation $\left\vert
\widetilde{A}(v)\right\vert =0$ under the assumptions of Theorem
\ref{TeoremAmplification}. Let $k_{0}=\omega/v_{0}.$ Then,%
\begin{equation}
-\operatorname*{Im}k_{0}\sim\frac{C}{\sqrt{\xi}}\text{ \ as }\xi
\rightarrow0,\text{ }C>0. \label{AsBehxitozero}%
\end{equation}
Under the additional assumption $u_{0}=v_{1}$ we also have%
\begin{equation}
-\operatorname*{Im}k_{0}\sim\frac{C^{\prime}}{\sqrt[3]{\xi}}\text{ \ as }%
\xi\rightarrow\infty,\text{ }C^{\prime}>0. \label{AsBehxitoinfty}%
\end{equation}

\end{theorem}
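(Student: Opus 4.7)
The strategy is to reduce each asymptotic to a local analysis of the dispersion relation $R(v)+\xi(v-u_{0})^{2}=0$ near a distinguished real point, and then to invoke the uniqueness of the complex conjugate pair supplied by Theorem \ref{TeoremAmplification} to identify the bifurcating roots with $v_{0},\overline{v_{0}}$.

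\emph{Regime $\xi\to 0$.} At $\xi=0$ the dispersion equation reduces to $R(v)=0$. From (\ref{R}) one has $R(0)=0$, and since the origin is the minimum of the central branch, $v=0$ is in fact a double root with
\[
R''(0)=2\sum_{i=1}^{n}\frac{\widetilde{D}_{i}^{2}}{v_{i}^{4}}>0.
\]
The remaining real roots of the limiting polynomial $\Delta_{0}(v):=\Delta(v;0)$ are simple, so by the implicit function theorem they persist as real simple roots of $\Delta(v;\xi)=0$ for small $\xi>0$. A rescaling $v=u/\sqrt{\xi}$ in $\Delta(v;\xi)$, together with an inspection of the leading coefficient of $\Delta$, shows that two further roots escape to $v\sim\pm\sqrt{d/\xi}$, still real. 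Since Theorem \ref{TeoremAmplification} supplies exactly one complex conjugate pair, that pair must arise from the splitting of the double root at the origin. Substituting the expansion $R(v)\simeq \tfrac{1}{2}R''(0)\,v^{2}$ into the dispersion relation gives the quadratic
\[
\bigl[\tfrac{1}{2}R''(0)+\xi\bigr]v^{2}-2\xi u_{0}v+\xi u_{0}^{2}=0,
\]
whose discriminant $-2\xi u_{0}^{2}R''(0)$ is negative, yielding
\[
v_{0}\sim i\,u_{0}\sqrt{\frac{2\xi}{R''(0)}}\qquad\text{as }\xi\to 0.
\]
Hence $\operatorname{Im}v_{0}\sim u_{0}\sqrt{2/R''(0)}\,\sqrt{\xi}$ and $|v_{0}|^{2}\sim (2u_{0}^{2}/R''(0))\,\xi$, so
\[
-\operatorname{Im}k_{0}=\frac{\omega\,\operatorname{Im}v_{0}}{|v_{0}|^{2}}\sim \frac{\omega}{u_{0}}\sqrt{\frac{R''(0)}{2}}\cdot\frac{1}{\sqrt{\xi}},
\]
establishing (\ref{AsBehxitozero}).

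\emph{Regime $\xi\to\infty$ with $u_{0}=v_{1}$.} Here the vertex of the parabola $-\xi(v-u_{0})^{2}$ coincides with the asymptote $v=v_{1}$ of $R$. Write the partial-fraction expansion (\ref{R}) near $v_{1}$ as $R(v)=-\widetilde{D}_{1}^{2}/[2v_{1}(v-v_{1})]+R_{\mathrm{reg}}(v)$, with $R_{\mathrm{reg}}$ analytic at $v_{1}$, insert into the dispersion relation, and multiply through by $(v-v_{1})$ to obtain
\[
\xi(v-v_{1})^{3}+R_{\mathrm{reg}}(v_{1})(v-v_{1})-\frac{\widetilde{D}_{1}^{2}}{2v_{1}}+O((v-v_{1})^{2})=0.
\]
The dominant balance as $\xi\to\infty$ is $\xi(v-v_{1})^{3}\simeq \widetilde{D}_{1}^{2}/(2v_{1})$, whose three cube roots
\[
(v-v_{1})\simeq\Bigl[\frac{\widetilde{D}_{1}^{2}}{2v_{1}\xi}\Bigr]^{1/3}e^{2\pi ik/3},\quad k=0,1,2,
\]
consist of one real value ($k=0$) and a complex conjugate pair ($k=1,2$). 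By uniqueness in Theorem \ref{TeoremAmplification}, the latter pair is $v_{0},\overline{v_{0}}$, so $\operatorname{Im}v_{0}\sim(\sqrt{3}/2)[\widetilde{D}_{1}^{2}/(2v_{1}\xi)]^{1/3}$ and $|v_{0}|^{2}\to v_{1}^{2}$, whence
\[
-\operatorname{Im}k_{0}\sim\frac{\sqrt{3}\,\omega}{2v_{1}^{2}}\Bigl(\frac{\widetilde{D}_{1}^{2}}{2v_{1}}\Bigr)^{1/3}\xi^{-1/3},
\]
which is (\ref{AsBehxitoinfty}).

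The main technical obstacle will be making the two bifurcation analyses rigorous via a Newton-polygon / Puiseux argument applied to the polynomial $\Delta(v;\xi)$, so as to justify the claimed leading orders $\sqrt{\xi}$ and $\xi^{-1/3}$ for the bifurcating branch and to confirm that the corrections are genuinely lower order; the uniqueness of the complex pair from Theorem \ref{TeoremAmplification} then forces the Puiseux branch computed locally to coincide with $v_{0}$. The $\xi\to\infty$ analysis additionally requires the genericity assumption $\widetilde{D}_{1}\neq 0$, so that $v_{1}$ is indeed a simple pole of $R$.
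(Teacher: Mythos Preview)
Your proposal is correct and takes a genuinely different route from the paper's proof. The paper argues globally via Vieta's formulas: it computes the leading, subleading, and constant coefficients of the degree-$(2n+2)$ polynomial $|\widetilde{A}(v)|$, obtaining $\sum(\text{roots})=2u_{0}$ and $\prod(\text{roots})=(-1)^{n}u_{0}^{2}/|LC|$, then tracks the asymptotics of every real root individually (the extreme roots $v_{n}^{\pm}\sim\pm\sqrt{d/\xi}$ as $\xi\to 0$, with refined corrections; the intermediate roots becoming pairwise symmetric) and recovers $\operatorname{Re}v_{0}$ and $|v_{0}|^{2}$ by subtraction from the two Vieta identities. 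You instead localize: for $\xi\to 0$ you observe that the only non-simple root of the limiting polynomial $\Delta(v;0)$ is the double zero of $R$ at the origin, so by persistence of the simple real roots and the uniqueness supplied by Theorem~\ref{TeoremAmplification} the complex pair must bifurcate from $v=0$; for $\xi\to\infty$ with $u_{0}=v_{1}$ you clear the simple pole of $R$ at $v_{1}$ to obtain a local cubic and read off the three Puiseux branches directly.

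Your approach is cleaner and, unlike the paper's, delivers the constants $C,C'$ explicitly in terms of $R''(0)$ and the residue $\widetilde{D}_{1}^{2}/(2v_{1})$. The paper's Vieta argument, by contrast, avoids any Newton-polygon machinery and never needs to identify \emph{where} the complex pair sits: it simply accounts for all $2n$ real roots and lets the two Vieta relations pin down $\operatorname{Re}v_{0}$ and $|v_{0}|$ by elimination. One small point: for the $\xi\to\infty$ regime the paper explicitly restricts to $n=1$, whereas your local-cubic argument works for general $n$ under the genericity hypothesis $\widetilde{D}_{1}\neq 0$ that you already flagged; you should also note that simplicity of $v_{1}$ as an eigenvelocity is needed for the pole of $R$ at $v_{1}$ to be simple.
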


\begin{proof}
The idea of the proof is to use very detailed information about real roots in
combination with well known Vieta's formulas relating the roots to the
coefficients of the corresponding polynomial. Let us first prove
(\ref{AsBehxitozero}). Denote the $2n$ real roots of the equation $\left\vert
\widetilde{A}(v)\right\vert =0$ by $v_{1}^{+}$, $v_{2}^{+},...$ $v_{n}^{+}$ ;
$-v_{1}^{-},-v_{2}^{-},...-v_{n}^{-}$, where $v_{i}^{+},v_{i}^{-}>0$ and
$0<v_{1}^{+}\leq v_{2}^{+}\leq...\leq$ $v_{n}^{+}$, $0<v_{1}^{-}\leq v_{2}%
^{-}\leq...\leq v_{n}^{-}$.$\ $The roots are repeated according to their
multiplicity and some of them may coincide with some $v_{i}$; see the proof of
Theorem \ref{TeoremAmplification}. If $n>1,$ the roots $v_{i}^{+}$and
$-v_{i}^{-}$ with $i=1,2,...(n-1)$ \ lie in the interval $\left[  -v_{n}%
,v_{n}\right]  $ for any value of \ $\xi>0$ \ (recall that by $v_{i}$ we
denote the characteristic velocities of the MTL), whereas $v_{n}^{+}$ and
$-v_{n}^{-},$ which correspond to the points of intersection of the parabola
$y=-\xi(v-u_{0})^{2}$ with the farthest right and farthest left branches of
$y=R(v),$ lie outside of this very interval.

The extreme roots $v_{n}^{+}$ and $-v_{n}^{-}$ approach $+\infty$
(respectively $-\infty$) as $\xi\rightarrow0.$ This can be proved as follows:
the parabola $y=-\xi(v-u_{0})^{2}$ is decreasing for $v>u_{0},$ its
intersection with the horizontal asymptote of $R,$ $y=-d,$ is $v^{\ast}%
=u_{0}+\sqrt{d/\xi}$ and $R(v)<-d$ for $v>v_{n}.$ Therefore, $v_{n}%
^{+}>v^{\ast}\rightarrow+\infty$ as $\xi\rightarrow0.$ A similar argument can
be applied to $-v_{n}^{-}.$ In order to establish the asymptotic behavior of
$\operatorname{Im}v_{0}$ we will make use of Vieta's formulas, relating the
roots of a polynomial to its coefficients.

We start by observing that $\left\vert \widetilde{A}(v)\right\vert $ is a
polynomial in $v$ of degree $2n+2:$%
\[
\left\vert \widetilde{A}(v)\right\vert =a_{2n+2}v^{2n+2}+a_{2n+1}%
v^{2n+1}+...a_{1}v+a_{0.}%
\]
The coefficients $a_{2n+2},a_{2n+1}$ and $a_{0}$ can be easily computed in
terms of the parameters. Indeed, $a_{0}=\left\vert \widetilde{A}(0)\right\vert
,$ which can be computed by adding the first $n$ rows of $\widetilde{A}(0)$
and subtracting the result from the last. Recalling that $D_{i}=%
{\displaystyle\sum\limits_{j}}
(C^{-1})_{ij}$ and that $d=%
{\displaystyle\sum_{i}}
D_{i},$ we obtain%
\[
a_{0}=\left\vert \widetilde{A}(0)\right\vert =\left\vert
\begin{array}
[c]{cc}%
C^{-1} & D\\
0 & -\xi u_{0}^{2}%
\end{array}
\right\vert =-\xi u_{0}^{2}\left\vert C^{-1}\right\vert .
\]
The only addends in $\left\vert \widetilde{A}(v)\right\vert $ yielding powers
$v^{2n+2}$ or $v^{2n+1}$ are those coming from the product $\left\vert
-v^{2}L+C^{-1}\right\vert \left[  d-\xi(v-u_{0})^{2}\right]  $. $\ $Clearly,
the relevant terms are%
\[
(-1)^{n}\left\vert L\right\vert v^{2n}\left[  d-\xi(v-u_{0})^{2}\right]
+...=(-1)^{n+1}\xi\left\vert L\right\vert v^{2n+2}+2(-1)^{n}\xi u_{0}%
\left\vert L\right\vert v^{2n+1}+...
\]
where the dots stand for lower order in $v$ terms. Consequently,%
\[
a_{2n+2}=(-1)^{n+1}\xi\left\vert L\right\vert ;\text{ \ \ \ \ \ }%
a_{2n+1}=2(-1)^{n}\xi u_{0}\left\vert L\right\vert .
\]
Vieta's formulas then imply%
\begin{align}
2\operatorname*{Re}v_{0}+%
{\displaystyle\sum\limits_{i=1}^{n}}
v_{i}^{+}-%
{\displaystyle\sum\limits_{i=1}^{n}}
v_{i}^{-}  &  =-\frac{a_{2n+1}}{a_{2n+2}}=2u_{0}\label{Vieta1}\\
(-1)^{n}\left\vert v_{0}\right\vert ^{2}%
{\displaystyle\prod\limits_{i=1}^{n}}
v_{i}^{+}v_{i}^{-}  &  =\frac{a_{0}}{a_{2n+2}}=(-1)^{n}\frac{u_{0}%
^{2}\left\vert C^{-1}\right\vert }{\left\vert L\right\vert }=(-1)^{n}%
\frac{u_{0}^{2}}{\left\vert LC\right\vert } \label{Vieta2}%
\end{align}
Next, we study the behavior as $\xi\rightarrow0$ of both the sum and the
product of the real roots. In the asymptotic formulas below, $K_{1}%
,K_{2},K_{1}^{\prime},K_{2}^{\prime}$ etc. denote positive constants depending
on $L,C,u_{0}$ but not on $\xi$.

Let $n>1$ and suppose that the graph of $R$ has more than two asymptotes.
First of all, we note that, as $\xi\rightarrow0,$ the parabola becomes flat
and the roots $v_{i}^{+},-v_{i}^{-}$ with $i=1,2,...n-1$ become symmetric due
to the symmetry of the graph of $R$. More precisely, if we denote by
$\widehat{v}_{k}^{+},-\widehat{v}_{k}^{-}$ with $k\in\left\{
1,2,...n-1\right\}  $ the abscissas of the points on the $k$-th right
(respectively, $k$-th left) branch of the graph of $R$ for which
$R(\widehat{v}_{k}^{+})=R($ $-\widehat{v}_{k}^{-})=0,$ then clearly $v_{k}%
^{+}\left(  \xi\right)  \rightarrow\widehat{v}_{k}^{+},v_{k}^{-}\left(
\xi\right)  \rightarrow\widehat{v}_{k}^{-}$ and $\widehat{v}_{k}%
^{+}=\widehat{v_{k}}^{-}.$ Moreover, since the branches of $R$ are strictly
increasing for $v>0$ and strictly decreasing for $v<0$, $v_{k}^{+}\left(
\xi\right)  -\widehat{v}_{k}^{+}\sim-A_{k}\xi,$ $-v_{k}^{-}\left(  \xi\right)
+\widehat{v}_{k}^{-}\sim B_{k}\xi$ \ as $\xi\rightarrow0$, with $A_{k}%
,B_{k}>0.$We also note the following fact, which is used in the proof of
Section \ref{SubsEnergyExchange} and \ is a simple consequence of the lack of
symmetry of the parabola $y=-\xi(v-u_{0})^{2}$ with respect to the vertical
axis: if $v_{k}^{+},-v_{k}^{-}$ is a pair of real roots not belonging to the
set $\left\{  \pm v_{1},\pm v_{2},...\pm v_{n}\right\}  $ (and there is at
least one such pair, see the proof of Theorem \ref{TeoremAmplification}), then%
\begin{equation}
v_{k}^{+}\left(  \xi\right)  -v_{k}^{-}\left(  \xi\right)  >0.
\label{Asymmetry}%
\end{equation}
Thus in particular $B_{k}>A_{k}$ in the above asymptotic relations.This
inequality can be easily seen on the graph and given a simple analytical proof.

The roots belonging to the set $\left\{  \pm v_{1},\pm v_{2},...\pm
v_{n}\right\}  $ are symmetric and do not contribute to their sum. Therefore,%
\begin{equation}%
{\displaystyle\sum\limits_{i=1}^{n-1}}
v_{i}^{+}(\xi)-%
{\displaystyle\sum\limits_{i=1}^{n-1}}
v_{i}^{-}(\xi)=K_{1}\xi+o(\xi)\text{ as }\xi\rightarrow0.
\label{SmallRootsSum}%
\end{equation}

As for the product of roots, we have%
\begin{equation}%
{\displaystyle\prod\limits_{i=1}^{n-1}}
v_{i}^{+}(\xi)v_{i}^{-}(\xi)=(-1)^{n}K_{2}+K_{3}\xi+o(\xi)\text{\ as }%
\xi\rightarrow0. \label{SmallRootsProduct}%
\end{equation}
If there are only two asymptotes, then $v_{1}=v_{2}=...=v_{n-1}$ and the
left-hand side in (\ref{SmallRootsSum}) is zero. Also, the left-hand side in
(\ref{SmallRootsProduct}) is the constant $(-1)^{n}K_{2}$ \ Thus this case can
be formally included in (\ref{SmallRootsSum}) and (\ref{SmallRootsProduct}) by
allowing $K_{1}$ and $K_{3}$ to vanish.

Let us now consider the extreme roots. As we noted, $v_{n}^{+},v_{n}%
^{-}\rightarrow+\infty.$ More precisely, since we have%
\begin{equation}
-\xi(v_{n}^{+}-u_{0})^{2}=R(v_{n}^{+})\rightarrow-d\text{ \ as }\xi
\rightarrow0, \label{asymptequality}%
\end{equation}
then necessarily $\lim_{\xi\rightarrow0}\xi(v_{n}^{+}-u_{0})^{2}=d>0$ and thus%
\begin{equation}
v_{n}^{+}(\xi)=\sqrt{\frac{d}{\xi}}+u_{0}+E(\xi),\text{ where }E(\xi)=o\left(
\sqrt{\frac{1}{\xi}}\right)  \text{ \ as }\xi\rightarrow0. \label{beh1}%
\end{equation}
We need further refinement in the asymptotics of $E(\xi)$ as $\xi
\rightarrow0.$ To this end, we recall that%
\begin{equation}
R(v)+d\sim-\frac{A}{v^{2}}\text{ \ as }v\rightarrow\infty,\text{\quad}A>0.
\label{beh3}%
\end{equation}
Replacing $v_{n}^{+}(\xi)$ in (\ref{asymptequality}) by the expression
(\ref{beh1}) and using (\ref{beh3}), we arrive at%
\[
2E(\xi)\sqrt{\frac{d}{\xi}}+E(\xi)^{2}\rightarrow A\text{ \ as }\xi
\rightarrow0,
\]
which implies%
\[
E(\xi)=K_{3}\sqrt{\xi}+o(\sqrt{\xi})\text{ \ as }\xi\rightarrow0.
\]
Summing up, we have the following asymptotic representation for $v_{n}^{+}:$%
\begin{equation}
v_{n}^{+}(\xi)=\sqrt{\frac{d}{\xi}}+u_{0}+K_{3}\sqrt{\xi}+o(\sqrt{\xi})\text{
\ as }\xi\rightarrow0. \label{behdef}%
\end{equation}
An analogous representation takes place for $v_{n}^{-}:$%
\begin{equation}
-v_{n}^{-}(\xi)=-\sqrt{\frac{d}{\xi}}+u_{0}-K_{3}\sqrt{\xi}+o(\sqrt{\xi
})\text{ \ as }\xi\rightarrow0. \label{behdefbis}%
\end{equation}
Plugging (\ref{behdef}), (\ref{behdefbis}), (\ref{SmallRootsSum}) and
(\ref{SmallRootsProduct}) into (\ref{Vieta1}) and (\ref{Vieta2}) yields
\begin{equation}
\operatorname*{Re}v_{0}=o\left(  \sqrt{\xi}\right)  \text{ };\text{
\ \ }\left\vert v_{0}\right\vert ^{2}=K_{4}\xi+o(\xi)\text{ \ as }%
\xi\rightarrow0.
\end{equation}
As a consequence,%
\[
\operatorname*{Im}v_{0}=\sqrt{K_{4}\xi+o(\xi)}=\sqrt{K_{4}}\sqrt{\xi}%
+o(\sqrt{\xi})\text{ as }\xi\rightarrow0
\]
and, finally,%
\[
-\operatorname*{Im}k_{0}=\frac{\operatorname*{Im}v_{0}}{\text{\ \ }\left\vert
v_{0}\right\vert ^{2}}\sim\frac{K_{5}}{\sqrt{\xi}\text{\ }}\text{\ as \ }%
\xi\rightarrow0,
\]
thus proving (\ref{AsBehxitozero}) for $n>1$. If $n=1$, (\ref{behdef}) and
(\ref{behdefbis}) hold \ and plugging into (\ref{Vieta1}) and (\ref{Vieta2})
yields the same result.

We turn now to the proof of (\ref{AsBehxitoinfty}), restricting ourselves to
the case of just one line$;$ the case of several lines can be handled in a
similar fashion. First of all, it is clear that $v_{1}^{+}\downarrow u_{0}$
and $-v_{1}^{-}\uparrow-u_{0}$ as $\xi\rightarrow\infty$. This can be
rigorously proved in a way, similar to the above proof of the fact that
$v_{n}^{+}\uparrow\infty,$ $v_{n}^{-}\downarrow-\infty$ as $\xi\rightarrow0$.
Put%
\begin{equation}
v_{1}^{+}(\xi)=u_{0}+G(\xi);\text{\quad}G(\xi)>0,\text{\quad}G(\xi
)\rightarrow0\text{ as }\xi\rightarrow\infty. \label{PierceRegime1}%
\end{equation}
Near $u_{0}=v_{1}$ we have%
\begin{equation}
R(v)\sim\frac{A}{u_{0}-v}\text{ as }v\rightarrow u_{0}^{+}\text{ \ with }A>0.
\label{PierceRegime2}%
\end{equation}
After use of (\ref{PierceRegime1}) and (\ref{PierceRegime2}), the equation%
\[
-\xi(v_{1}^{+}-u_{0})^{2}=R(v_{1}^{+})
\]
yields the following asymptotic relation:%
\[
\xi G(\xi)^{3}\sim A\text{\ as }\xi\rightarrow\infty,
\]
that is,%
\[
G(\xi)\sim\frac{K_{1}^{\prime}}{\sqrt[3]{\xi}}\text{ as }\xi\rightarrow
\infty.
\]
An analogous formula takes place for the negative root:%
\[
v_{1}^{-}(\xi)=u_{0}+H(\xi)\ \text{\ with }H(\xi)\sim\frac{K_{2}^{\prime}%
}{\sqrt[3]{\xi}}\text{\ as }\xi\rightarrow\infty.
\]
Therefore,%
\[
v_{1}^{+}(\xi)-v_{1}^{-}(\xi)\sim\frac{K_{3}^{\prime}}{\sqrt[3]{\xi}%
}\text{\ as }\xi\rightarrow\infty.
\]
Applying again (\ref{Vieta1}) and (\ref{Vieta2}) we obtain%
\begin{equation}
\operatorname{Re}v_{0}=u_{0}+\frac{K_{4}^{\prime}}{\sqrt[3]{\xi}}+o\left(
\frac{1}{\sqrt[3]{\xi}}\right)  ,\text{\qquad}\left\vert v_{0}\right\vert
\rightarrow1/\sqrt{LC}+\frac{K_{5}^{\prime}}{\sqrt[3]{\xi}}+o\left(  \frac
{1}{\sqrt[3]{\xi}}\right)  \text{\ as }\xi\rightarrow\infty.
\label{PierceRegime3}%
\end{equation}
Recall that $v_{1}=1/\sqrt{LC}=u_{0}.$ The last two relations imply
$\operatorname*{Im}v_{0}\sim K_{6}^{\prime}/\sqrt[3]{\xi}$. Finally,%
\[
-\operatorname*{Im}k_{0}=\frac{\operatorname*{Im}v_{0}}{\text{\ \ }\left\vert
v_{0}\right\vert ^{2}}\sim\frac{K_{7}^{\prime}}{\sqrt[3]{\xi}}\text{\qquad as
}\xi\rightarrow\infty.
\]
as was to be proved.
\end{proof}

One can also verify that if $u_{0}<1/\sqrt{LC}$ then both $\operatorname*{Im}%
v_{0}$ and $\operatorname*{Im}k_{0}$ have a finite, nonzero limit as
$\xi\rightarrow\infty.$

\textbf{Acknowledgment:} This research was supported by AFOSR MURI Grant
FA9550-12-1-0489 administered through the University of New Mexico. The
authors are grateful to F. Capolino and A. Tamma for helpful discussions.

\end{document}